\let\coloneqq\relax
\newcolumntype{x}[1]{>{\centering\arraybackslash}p{#1}}
\newtheorem{thm}{Theorem}
\newtheorem*{thm*}{Theorem}
\newtheorem{prop}[thm]{Proposition}
\newtheorem*{prop*}{Proposition}
\newtheorem{lemma}[thm]{Lemma}
\newtheorem*{lemma*}{Lemma}
\newtheorem{cor}[thm]{Corollary}
\newtheorem*{cor*}{Corollary}
\newtheorem*{cj*}{Conjecture}
\newtheorem{Def}[thm]{Definition}
\newtheorem*{Def*}{Definition}
\newtheorem*{question*}{Question}
\newtheorem*{problem*}{Problem}
\def\thmhead@plain#1#2#3{%
  \thmname{#1}\thmnumber{\@ifnotempty{#1}{ }\@upn{#2}}%
  \thmnote{ {\the\thm@notefont#3}}}
\let\thmhead\thmhead@plain
\theoremstyle{definition}
\newtheorem{rem}[thm]{Remark}
\newtheorem*{note}{Note}
\newenvironment{manuallemma}[1]{%
  \manuallemmainner \it
}{\endmanuallemmainner}
\newcommand{\bb}{\begin{equation}\begin{aligned}\hspace{0pt}}
\newcommand{\bbb}{\begin{equation*}\begin{aligned}}
\newcommand{\ee}{\end{aligned}\end{equation}}
\newcommand{\eee}{\end{aligned}\end{equation*}}
\newcommand*{\coloneqq}{\mathrel{\vcenter{\baselineskip0.5ex \lineskiplimit0pt \hbox{\scriptsize.}\hbox{\scriptsize.}}} =}
\newcommand\floor[1]{\left\lfloor#1\right\rfloor}
\newcommand\ceil[1]{\left\lceil#1\right\rceil}
\newcommand{\texteq}[1]{\stackrel{\mathclap{\mbox{\scriptsize #1}}}{=}}
\renewcommand{\textleq}[1]{\stackrel{\mathclap{\mbox{\scriptsize #1}}}{\leq}}
\renewcommand{\textgeq}[1]{\stackrel{\mathclap{\mbox{\scriptsize #1}}}{\geq}}
\newcommand{\ketbra}[1]{\ket{#1}\!\!\bra{#1}}
\newcommand{\ketbraa}[2]{\ket{#1}\!\!\bra{#2}}
\newcommand{\sumno}{\sum\nolimits}
\newcommand{\e}{\varepsilon}
\newcommand{\tcb}[1]{{\color{blue} #1}}
\newcommand{\id}{\mathds{1}}
\newcommand{\R}{\mathds{R}}
\newcommand{\N}{\mathds{N}}
\newcommand{\C}{\mathds{C}}
\newcommand{\locc}{\mathrm{LOCC}}
\newcommand{\sep}{\mathds{SEP}}
\newcommand{\ppt}{\mathds{PPT}}
\newcommand{\mlocc}{\mathds{LOCC}}
\newcommand{\mk}{\mathds{K}}
\newcommand{\sq}{E_{\mathrm{sq}}}
\newcommand{\FF}{\pazocal{F}}
\newcommand{\SEP}{\mathrm{SEP}}
\newcommand{\PPT}{\mathrm{PPT}}
\newcommand{\sepp}{\mathrm{NE}}
\newcommand{\pptp}{\mathrm{PPTP}}
\newcommand{\kp}{\mathrm{KP}}
\newcommand{\SIO}{\mathrm{SIO}}
\newcommand{\IO}{\mathrm{IO}}
\newcommand{\DIO}{\mathrm{DIO}}
\newcommand{\MIO}{\mathrm{MIO}}
\DeclareMathOperator{\Tr}{Tr}
\DeclareMathOperator{\co}{conv}
\DeclareMathAlphabet{\pazocal}{OMS}{zplm}{m}{n}
\newcommand{\HH}{\pazocal{H}}
\newcommand{\Tp}{\pazocal{T}_+}
\newcommand{\D}{\pazocal{D}}
\newcommand{\K}{\pazocal{K}}
\newcommand{\MM}{\pazocal{M}}
\newcommand{\lsmatrix}{\left(\begin{smallmatrix}}
\newcommand{\rsmatrix}{\end{smallmatrix}\right)}
\newcommand\xxrightarrow[2][]{\mathrel{%
  \setbox2=\hbox{\stackon{\scriptstyle#1}{\scriptstyle#2}}%
  \stackunder[5pt]{%
    \xrightarrow{\makebox[\dimexpr\wd2\relax]{$\scriptstyle#2$}}%
  }{%
   \scriptstyle#1\,%
  }%
}}
\newcommand{\tendsn}[1]{\xxrightarrow[\! n\rightarrow \infty\!]{#1}}
\def\dyhat{-0.15ex}
\newcommand\myhat[1]{\ThisStyle{%
              \stackon[\dyhat]{\SavedStyle#1}
                              {\SavedStyle\widehat{\phantom{#1}}}}}
\tikzset{meter/.append style={draw, inner sep=10, rectangle, font=\vphantom{A}, minimum width=30, line width=.8, path picture={\draw[black] ([shift={(.1,.3)}]path picture bounding box.south west) to[bend left=50] ([shift={(-.1,.3)}]path picture bounding box.south east);\draw[black,-latex] ([shift={(0,.1)}]path picture bounding box.south) -- ([shift={(.3,-.1)}]path picture bounding box.north);}}}
\tikzset{roundnode/.append style={circle, draw=black, fill=gray!20, thick, minimum size=10mm}}
\tikzset{squarenode/.style={rectangle, draw=black, fill=none, thick, minimum size=10mm}}
\definecolor{Blues5seq1}{RGB}{239,243,255}
\definecolor{Blues5seq2}{RGB}{189,215,231}
\definecolor{Blues5seq3}{RGB}{107,174,214}
\definecolor{Blues5seq4}{RGB}{49,130,189}
\definecolor{Blues5seq5}{RGB}{8,81,156}
\definecolor{Greens5seq1}{RGB}{237,248,233}
\definecolor{Greens5seq2}{RGB}{186,228,179}
\definecolor{Greens5seq3}{RGB}{116,196,118}
\definecolor{Greens5seq4}{RGB}{49,163,84}
\definecolor{Greens5seq5}{RGB}{0,109,44}
\definecolor{Reds5seq1}{RGB}{254,229,217}
\definecolor{Reds5seq2}{RGB}{252,174,145}
\definecolor{Reds5seq3}{RGB}{251,106,74}
\definecolor{Reds5seq4}{RGB}{222,45,38}
\definecolor{Reds5seq5}{RGB}{165,15,21}
\newcommand*{\addFileDependency}[1]{
\typeout{(#1)}
%
%
\@addtofilelist{#1}
%
\IfFileExists{#1}{}{\typeout{No file #1.}}
}\makeatother
\newcommand{\PPTP}{\mathrm{PPTP}}
\newcommand{\FFcc}{\FF^{cc}}
\newcommand{\locccc}{\locc^{cc}}
\newcommand{\para}[1]{\smallskip\textbf{\textit{#1}}.\,---}
\newcommand{\parait}[1]{\smallskip{\textit{#1}}.\,---}
\let\tcb\relax
\begin{document}

\title{No-go theorem for entanglement distillation using catalysis}
 
\author{Ludovico Lami}
\email{ludovico.lami@gmail.com}
\affiliation{QuSoft, Science Park 123, 1098 XG Amsterdam, the Netherlands}
\affiliation{Korteweg--de Vries Institute for Mathematics, University of Amsterdam, Science Park 105-107, 1098 XG Amsterdam, the Netherlands}
\affiliation{Institute for Theoretical Physics, University of Amsterdam, Science Park 904, 1098 XH Amsterdam, the Netherlands}

\author{Bartosz Regula}
\affiliation{Mathematical Quantum Information RIKEN Hakubi Research Team, RIKEN Cluster for Pioneering Research (CPR) and RIKEN Center for Quantum Computing (RQC), Wako, Saitama 351-0198, Japan}

\author{Alexander Streltsov}
\affiliation{Centre for Quantum Optical Technologies, Centre of New Technologies, University of Warsaw, Banacha 2c, 02-097 Warsaw, Poland}

\begin{abstract}
The use of ancillary quantum systems known as catalysts is known to be able to enhance the capabilities of entanglement transformations under local operations and classical communication. However, the limits of these advantages have not been determined, and in particular it is not known if such assistance can overcome the known restrictions on asymptotic transformation rates --- notably the existence of bound entangled (undistillable) states. Here we establish a general limitation of entanglement catalysis: we show that catalytic transformations can never allow for the distillation of entanglement from a bound entangled state with positive partial transpose, even if the catalyst may become correlated with the system of interest, and even under permissive choices of free operations. This precludes the possibility that catalysis can make entanglement theory asymptotically reversible. Our methods are based on new asymptotic bounds for the distillable entanglement and entanglement cost assisted by correlated catalysts. 
\end{abstract}

\maketitle

The study of quantum entanglement as a resource has been one of the most fundamental problems in the field of quantum information ever since its inception~\cite{horodecki_2009}.
To utilize this resource efficiently, it is often required to transform and manipulate entangled quantum systems, which leads to the well-studied question of how quantum states can be converted using only local operations and classical communication (LOCC)~\cite{bennett_1996-1,bennett_1996}.
The limits of such conversion capability are represented by asymptotic transformation rates: given many copies of an input quantum state $\rho$, how many copies of a desired target state can we obtain per each copy of $\rho$?
Such rates are particularly important in the context of purifying noisy quantum states into singlets $\Phi_2$, a task known as entanglement distillation, as well as for the reverse task of using such singlets to synthesize noisy quantum states.
This leads to the notions of \emph{distillable entanglement} $E_d(\rho)$~\cite{bennett_1996-1}, which tells us how many copies of $\Phi_2$ we can extract from a given state $\rho$, and 
of \emph{entanglement cost} $E_c(\rho)$~\cite{bennett_1996}, which tells us how many copies of a pure singlet are needed to produce $\rho$.

A phenomenon that can severely restrict our ability to extract entanglement is known as \emph{bound entanglement}~\cite{HorodeckiBound}: there exist states from which no entanglement can be distilled, even though their entanglement cost is non-zero. A consequence of this is the irreversibility of entanglement theory --- after performing a transformation $\rho \to \omega$, one may not be able to realize the reverse process $\omega \to \rho$ and recover all of the supplied copies of $\rho$. This contrasts with the asymptotic reversibility of theories such as classical and quantum thermodynamics~\cite{Popescu1997,HorodeckiBound,Brandao-thermo}. Although reversibility may still hold in some restricted cases (e.g., for all bipartite pure quantum states~\cite{bennett_1996,Popescu1997}), and there are even approaches that may enable reversibility by suitably relaxing the restrictions on the allowed physical transformations~\cite{audenaert_2003,brandao_2010,brandao_2008-1,regula_2023}, irreversibility is often a fundamental property of the theory of quantum entanglement that may not be easily evaded~\cite{lami_2023}. It is then important to understand how, if at all, irreversibility can be overcome.

A promising approach to increase the capabilities of entanglement transformations is the use of so-called \emph{catalysts}~\cite{jonathan_1999}, that is, ancillary systems that can be employed in the conversion protocol, but must eventually be returned in an unchanged state. Although this phenomenon has been shown to be remarkably powerful in the context of single- and many-copy transformations~\cite{jonathan_1999,duan_2005,klimesh_2007,turgut_2007,shiraishi_2021,kondra_2021,lipka-bartosik_2021,Takagi2022}, it is unknown whether catalysis can enhance asymptotic conversion rates. This motivates in particular an important question: is the use of catalysis enough to facilitate the reversibility of entanglement theory?

In this paper, we close this question by showing that even very permissive forms of catalytic transformations are insufficient to distill entanglement from bound entangled states. Specifically, we consider the representative class of bound entangled states known as positive partial transpose (PPT) states
and we show that the catalytically distillable entanglement of any such state is zero, which is strictly less than its catalytic entanglement cost. The result relies on the establishment of a general upper bound on distillable entanglement under catalytic LOCC operations, namely, the relative entropy of PPT entanglement, which was known to be an upper bound only in conventional, non-catalytic protocols~\cite{vedral_1998,hayashi_2006-2}. We show that this limitation persists even if one allows the catalyst to build up correlations with the main system, as well as if one allows sets of operations larger than LOCC, in particular all PPT-preserving transformations. This presents a very general limitation on the power of catalytic transformations of entangled states. We additionally study the applications of resource monotones to constraining asymptotic state conversion with catalytic assistance, obtaining a number of bounds that may be of independent interest.

\para{Preliminaries} We use $\SEP(A\!:\!B)$ to denote the set of 
states $\sigma_{AB}$ which are separable across the bipartition $A\!:\!B$. The notation $\PPT(A\!:\!B)$ will be used to denote the set of positive partial transpose states, i.e., ones for which the partially transposed operator $\sigma_{AB}^\Gamma$ is also a valid quantum state. States which are not in PPT will be
conventionally called NPT (non-positive partial transpose). 

Even though the choice of LOCC in the context of entanglement transformations is well motivated from a practical perspective, in many settings there exist other possible choices of allowed `free' operations; let us then use $\FF$ to denote the chosen set of such permitted protocols. One such choice is the set of so-called PPT operations~\cite{rains_2001}, or the even larger 
\tcb{set} of all PPT-preserving operations $\PPTP$~\cite{horodecki_2001}, 
\tcb{comprising} all maps $\Lambda : AB \to A'B'$ such that $\Lambda(\sigma_{AB}) \in \PPT(A'\!:\!B')$ for all $\sigma_{AB} \in \PPT(A\!:\!B)$. \tcb{The latter is one of the largest and most permissive sets considered in the study of operational entanglement transformations.}

Given two bipartite states $\rho_{AB}$ and $\omega_{A'B'}$, we say that the transformation from $\rho_{AB}$ to $\omega_{A'B'}$ is possible via operations in $\FF$ assisted by catalysts if there exists a finite-dimensional state $\tau_{CD}$ and an operation $\Lambda\in \FF\left(AC:BD\to A'C:B'D\right)$ such that
\bb
\Lambda\left(\rho_{AB} \otimes \tau_{CD}\right) = \omega_{A'B'} \otimes \tau_{CD}.
\ee
We denote this by $\rho_{AB} \overset{\FF^{\,c}}{\longrightarrow} \omega_{A'B'}$. 
More generally, we say that the transformation is possible via operations in $\FF$ assisted by \emph{correlated catalysts}~\cite{aberg_2014,lostaglio_2015,wilming_2017}\tcb{,} and we write $\rho_{AB} \overset{\FF^{\,cc}}{\longrightarrow} \omega_{A'B'}$, if there exists a finite-dimensional state $\tau_{CD}$ and an operation $\Lambda\in \FF\left(AC:BD\to A'C:B'D\right)$ such that
\bb
\Tr_{CD} \Lambda\left(\rho_{AB} \otimes \tau_{CD}\right) = \omega_{A'B'}
\ee
and
\bb
\Tr_{A'B'} \left[ \Lambda\left(\rho_{AB} \otimes \tau_{CD}\right) \right] = \tau_{CD}.
\ee
This relaxed notion allows for the output state of the protocol to 
\tcb{exhibit correlations} between the main system (
\tcb{$A'B'$}) and catalyst ($CD$), as long the marginal systems satisfy the required constraints.
Crucially, correlated catalysis is a strictly more powerful framework than standard catalysis, and allowing for such correlations can greatly enlarge the set of achievable state transformations 
already in the single-shot regime~\cite{lostaglio_2015,wilming_2017,muller_2018,Shiraishi2021,Kondra2021,rethinasamy_2020,wilming_2021,Takagi2022}.

Given any allowed choice of transformations $\widetilde{\FF} \in \{ \FF, \FF^c, \FFcc \}$, we write $\rho_{AB} \overset{\widetilde{\FF}}{\longrightarrow}\, \approx_\e\! \omega_{A'B'}$ if there exists a state $\omega'_{A'B'}$ such that the transformation is realizable up to some small error $\e$, which we quantify with the trace distance:
\bb
\rho_{AB} \overset{\widetilde{\FF}}{\longrightarrow} \omega'_{A'B'}\, ,\qquad \frac12\left\| \omega'_{A'B'} - \omega_{A'B'} \right\|_1\leq \e\, .
\ee
\tcb{A conceptually simplified picture can be obtained by looking at the \emph{ultimate} limitations to which the above transformation is subjected. This intuition can be formalized by investigating the conversion of a large number $n$ of copies of $\rho_{AB}$ into as many copies as possible of $\omega_{A'B'}$, with the transformation error vanishing with growing $n$. The relevant figure of merit is then the transformation rate, i.e., the ratio between the number of output copies and the number of input copies (that is, $n$). Mathematically, this can be defined by}
\bb
&R_{\widetilde{\FF}}\left( \rho_{AB} \to \omega_{A'B'} \right) \\
&\coloneqq \sup\left\{ R:\ \rho_{AB}^{\otimes n} \overset{\widetilde{\FF}}{\longrightarrow}\, \approx_{\e_n}\! \omega_{A'B'}^{\otimes \ceil{Rn}},\quad \lim_{n\to\infty} \e_n = 0 \right\} .
\ee
The distillable entanglement and entanglement cost under operations in $\widetilde{\FF}$ are then defined by
\bb
E_{d,\, \widetilde{\FF}}\left(\rho\right) \coloneqq R_{\widetilde{\FF}}\left( \rho \to \Phi_2 \right) ,\quad E_{c,\, \widetilde{\FF}}\left(\rho\right) \coloneqq \frac{1}{R_{\widetilde{\FF}}\left( \Phi_2\to \rho\right)}\, ,
\label{distillable_and_cost}
\ee
where $\Phi_2 \coloneqq \ketbra{\Phi_2}$ denotes the maximally entangled two-qubit state, $\ket{\Phi_2} = \frac{1}{\sqrt{2}} (\ket{00}+\ket{11})$. 
Conventionally, the notation $E_d$ and $E_c$ is used to refer to $E_{d,\, \rm LOCC}$ and $E_{c,\, \rm LOCC}$. 

An entangled state $\sigma_{AB}$ is called bound entangled if $E_{d,\locc} (\sigma_{AB}) = 0$. A particularly useful criterion to detect undistillability was established in~\cite{HorodeckiBound}: if a state $\sigma_{AB}$ is PPT, then $E_{d,\locc}(\sigma_{AB}) = 0$. As $E_{c,\locc}(\sigma_{AB}) > 0$ for any entangled state $\sigma_{AB}$~\cite{yang_2005}, this means that any PPT $\sigma_{AB}$ which is not separable has a non-zero entanglement cost, while no entanglement can be extracted from it. Interestingly, it is still an open question whether \emph{every} bound entangled state is PPT~\cite{Horodecki-open-problems,OpenProblem-NPT-bound}. 

\para{Monotones} A very common way to constrain entanglement transformations, also in the asymptotic transformation regime, is to use so-called entanglement monotones, also known as entanglement measures~\cite{vedral_1997}. These are functions $M$ which satisfy $M(\Lambda(\rho_{AB})) \leq M(\rho_{AB})$ for all free operations $\Lambda \in \FF$. It is well known that, if the monotone satisfies weak additivity, i.e., $M(\rho^{\otimes n}) = n M(\rho)$, as well as a stronger form of continuity known as asymptotic continuity~\cite{donald_1999,synak-radtke_2006}, then the (non-catalytic) transformation rate is bounded as~\cite{donald_2002,horodecki_2001}
\bb
R_{\FF}(\rho \to \omega) \leq \frac{M(\rho)}{M(\omega)}.
\ee
Monotones are typically chosen so that they are normalized on the maximally entangled state, i.e., $M(\Phi_2) = 1$. Any such monotone then satisfies~\cite{Horodecki2000}
\bb
E_{d,\FF}(\rho) \leq M(\rho) \leq E_{c,\FF}(\rho).
\ee

\tcb{%
A particularly important example of an LOCC monotone is the relative entropy of (NPT) entanglement~\cite{vedral_1997}
\bb
D_{\PPT} (\rho) \coloneqq \min_{\sigma \in \PPT(A:B)}  D(\rho \| \sigma)
\ee
with the quantum relative entropy defined by $D(\omega\|\tau)\coloneqq \Tr \omega \left(\log_2\omega - \log_2\tau\right)$~\cite{umegaki_1962,hiai_1991}. However, this measure is not additive: it is merely \emph{sub-additive}, in the sense that
\bb
D_{\PPT}(\rho_{A:B} \otimes \omega_{A':B'}) \leq D_{\PPT}(\rho_{A:B}) + D_{\PPT}(\omega_{A':B'}),
\label{PPT_relent}
\ee
and the inequality may be strict for some states~\cite{vollbrecht_2001}.
This issue is circumvented through the procedure of \emph{regularization}, which considers the asymptotic limit
\bb
D^\infty_{\PPT} (\rho) \coloneqq \lim_{n\to\infty} \frac1n D_{\PPT}\big(\rho^{\otimes n}\big).
\ee
The resulting regularized relative entropy of entanglement is weakly additive~\cite{donald_2002} and constitutes one of the most fundamental and commonly used bounds on entanglement transformation rates: in particular, $E_{d,\locc}(\rho) \leq D^\infty_{\PPT} (\rho)$.
}%

The situation is much more intricate when it comes to catalytic transformations~\cite{kondra_2021,rubboli_2022,catalysis-review}. To establish a similar bound, it appears that several more assumptions about the given monotone are needed. In particular, if we also assume full additivity (i.e., $M(\rho_{AB} \otimes \omega_{A'B'}) = M (\rho_{AB}) + M(\omega_{A'B'})$ for any $\rho_{AB},\omega_{A'B'}$) and strong superadditivity (i.e., $M(\rho_{AA':BB'}) \geq M(\rho_{AB}) + M(\rho_{A'B'})$), then we analogously obtain
\bb
R_{\FFcc}(\rho \to \omega) \leq \frac{M(\rho)}{M(\omega)}
\ee
(see~\cite{NoteX} 
for a proof). However, to date, there are only two LOCC monotones that are known to satisfy all the required assumptions: the squashed entanglement $E_{\rm sq}$~\cite{christandl_2004} and the conditional entanglement of mutual information $E_{I}$~\cite{Yang2007, Yang2008}, both of which are 
however typically difficult to evaluate or even 
\tcb{estimate.}

Importantly, as the regularized relative entropy $D^\infty_{\PPT}$ is not known to satisfy the above required properties, 
we do not yet know whether it is monotone under asymptotic correlated--catalytic protocols. This entails that we cannot straightforwardly use it to bound $E_{d,\locccc}$ or $E_{c,\locccc}$, and thus to alleviate the issue of the lack of readily computable bounds on transformation rates.
Any attempt to derive such asymptotic bounds on transformations with correlated catalysts therefore requires a completely different approach than conventional, non-catalytic bounds.

\para{Results} 
Our main technical contribution is the establishment of very general bounds on correlated catalytic transformation rates, and in particular the recovery of the regularized relative entropy as an upper bound on the rate of distillation:
\tcb{%
\bb
\label{eq:upper_bound_distillable_main}
E_{d,\, \locc^{cc}}(\rho_{AB}) \leq D^\infty_\PPT(\rho_{AB}) \leq D_\PPT(\rho_{AB}). 
\ee
As $D_{\PPT}(\rho_{AB}) = 0$ for any PPT state, a key consequence of this result is the fact that correlated catalysis is not sufficient to break the fundamental irreversibility of PPT bound entangled states.

\begin{thm} \label{PPT_bound_thm_main}
All PPT entangled states $\rho_{AB}$ are bound entangled under LOCC operations assisted by correlated catalysts, but have non-zero cost. More formally, if $\rho_{AB}$ is PPT entangled then
\bb
E_{d,\, \locc^{cc}}(\rho_{AB}) = 0 < E_{c,\, \locc^{cc}}(\rho_{AB})\, .
\ee
Consequently, entanglement theory is irreversible even under LOCC assisted by correlated catalysts.
\end{thm}

The result can be strengthened in several ways. First, our methods immediately apply also to PPT-preserving operations assisted by correlated catalysis, showing that even under such extended transformations no entanglement can be extracted from PPT states: $E_{d,\, \PPTP^{cc}}(\rho_{AB}) = 0$. 

Even more strongly, we can show that a PPT state 
\tcb{can never} be converted to an NPT state by means of PPT-preserving operations assisted by correlated catalysts, including all catalytic LOCC protocols.
This implies in particular that not only is the rate of distillation equal to zero, but not even a single copy of $\Phi_2$ can be distilled with error $\e < 1/2$,
no matter how many copies of a given PPT entangled state are at our disposal.

This conclusively shows that bound entanglement, and thus the irreversibility of entanglement theory, cannot be circumvented or even alleviated through the use of catalysts.
}%

Let us remark here that a different notion of `catalytic irreversibility' was previously considered in the seminal work of Vidal and Cirac~\cite{vidal_2001}. However, the transformations considered there are much more restricted than the ones allowed in our approach --- indeed, they are not truly `catalytic' in the sense that the preservation of the assisting ancillary system is not actually required, and furthermore no correlations are permitted between the main and the ancillary systems. 
Our setting is thus strictly more general than that of~\cite{vidal_2001}, and as far as we know it is not possible to retrieve our findings on catalytic bound entanglement using results from~\cite{vidal_2001} only.

\para{Proof idea} A crucial ingredient in our proofs is the \emph{measured relative entropy of entanglement} $D_\PPT^\ppt$, which belongs to a family of entanglement measures first studied by Piani in a pioneering work~\cite{piani_2009-1}. 
\tcb{%
It is defined as
\bb
D_\PPT^\ppt(\rho) \coloneqq \inf_{\sigma\in \PPT} \sup_{\MM \in \ppt} D\big( \MM(\rho)\, \big\|\, \MM(\sigma)\big),
\label{PPT_Piani}
\ee
where $\ppt$ denotes the set of PPT measurements --- that is, POVMs $\{M_i\}_i$ such that each operator $M_i$ is PPT, and $\MM(\rho) = \sum_i \Tr (M_i\, \rho) \ketbraa{i}{i}$ is the corresponding measurement channel.
The difference between this quantity and the relative entropy of entanglement $D_{\PPT}$ is that the relative entropy is evaluated not between quantum states, but rather the probability distributions constituted by the measurement outcomes.

}

\tcb{While~\eqref{PPT_Piani} seemingly adds a further layer of complication to~\eqref{PPT_relent}, it is in many respects a more natural and well-behaved quantity. Most importantly for us,} 
the measured relative entropy 
satisfies strong superadditivity, and in fact it allows for the establishment of a superadditivity
\tcb{-}like relation for the relative entropy of entanglement $D_\PPT$ itself: it holds that~\cite{piani_2009-1}
\bb
D_\PPT\left(\rho_{AA':BB'}\right) \geq D_\PPT(\rho_{A:B}) + D_\PPT^\ppt(\rho_{A':B'}).
\label{Piani_inequality_main}
\ee
This remarkable relation allows us to avoid having to rely solely on the properties of $D_\PPT$, which --- as we discussed before --- are not sufficient to use this quantity in the catalytic setting. 

Let us then derive the upper bound on catalytic distillable entanglement announced in Eq.~\eqref{eq:upper_bound_distillable_main}.

Assume that $R$ is an achievable rate for entanglement distillation under operations in $\PPTP^{cc}$, that is, that there exists a sequence of catalysts $\tau_n = (\tau_n)_{C_nD_n}$ on the finite-dimensional systems $C_n D_n$, and a sequence of operations $\Lambda_n \in \PPTP\left(A^n C_n : B^n 
D_n\to A_0^{\ceil{Rn}} C_n : B_0^{\ceil{Rn}} 
D_n \right)$, with $A_0$ and $B_0$ being single-qubit systems, such that
\bb
\e_n\coloneqq&\ \frac12 \left\|\Tr_{C_n D_n} \Lambda_n\left(\rho_{AB}^{\otimes n} \otimes \tau_n\right) - \Phi_2^{\otimes \ceil{Rn}} \right\|_1 
\tcb{\tendsn{} 0}\, , \\[1ex]
\tau_n =&\ \Tr_{A_0^{\ceil{Rn}} B_0^{\ceil{Rn}}} \left[ \Lambda_n\left(\rho_{AB}^{\otimes n} \otimes \tau_n\right) \right] . 
\ee
Then
\begin{align}
&n \, D_\PPT\big(\rho_{AB}\big) + D_\PPT\big(\tau_n\big) \nonumber \\
&\textgeq{(i)} D_\PPT\big(\rho_{AB}^{\otimes n}\big) + D_\PPT\big(\tau_n\big) \nonumber \\
&\textgeq{(ii)} D_\PPT\big(\rho_{AB}^{\otimes n}\otimes \tau_n\big) \nonumber \\
&\textgeq{(iii)} D_\PPT\big(\Lambda_n\big(\rho_{AB}^{\otimes n} \otimes \tau_n\big)\big) \nonumber \\
&\textgeq{(iv)} D_\PPT^\ppt\Big(\Tr_{C_n D_n} \Lambda_n\big(\rho_{AB}^{\otimes n} \otimes \tau_n\big)\Big) \nonumber \\
&\quad + D_\PPT\Big(\Tr_{A_0^{\ceil{Rn}}B_0^{\ceil{Rn}}} \Lambda_n\big(\rho_{AB}^{\otimes n} \otimes \tau_n\big)\Big) \label{eq:bigproof} \\
&= D_\PPT^\ppt\Big(\Tr_{C_n D_n} \Lambda_n\left(\rho_{AB}^{\otimes n} \otimes \tau_n\right)\Big) + D_\PPT\big(\tau_n\big) \nonumber \\
&\textgeq{(v)} D_\PPT^\ppt\Big(\Phi_2^{\otimes \ceil{Rn}}\Big) - \e_n \ceil{Rn} - g(\e_n) + D_\PPT\big(\tau_n\big) \nonumber \\
&\textgeq{(vi)} \ceil{Rn} - 1 - \e_n \ceil{Rn} - g(\e_n) + D_\PPT\big(\tau_n\big) \, . \nonumber 
\end{align}
Here: (i) and (ii)~are applications of the tensor sub-additivity of $D_\PPT$~\cite{vedral_1998}; (iii)~comes from its monotonicity under PPT-preserving operations; (iv)~descends from Piani's superadditivity--like inequality in \eqref{Piani_inequality_main}; (v)~follows from asymptotic continuity~\cite{li_2014-1,Schindler2023}, which states that~\cite{NoteX}
\bb
\left| D_\PPT^\ppt\big(\rho_{AB}\big) - D_\PPT^\ppt\big(\omega_{AB}\big) \right| \leq \e \log d + g(\e)
\ee
holds for all pairs of states $\rho_{AB}, \omega_{AB}$ at trace distance $\e\coloneqq \frac12 \left\|\rho_{AB} - \omega_{AB}\right\|_1$ on all systems $AB$ of minimal local dimension $d\coloneqq \min\{|A|,|B|\}$, where $g(x)\coloneqq (1+x)\log(1+x) - x\log x$; and finally~(vi) is a consequence of the quasi-normalization of $D^\ppt_\PPT$, i.e., the fact that $D_\PPT^\ppt (\Phi_2^{\otimes k}) \geq \log (2^k+1) - 1$~\cite{li_2014-1}. The above chain of inequalities shows that
\bb
D_\PPT\big( \rho_{AB}^{\otimes n} \big) \geq (1-\e_n) \ceil{Rn} - 1 - g(\e_n)\,,
\ee
and by dividing by $n$, taking the limit as $n\to\infty$, and subsequently the supremum over all achievable rates $R$, we obtain the claimed result.

\tcb{%
To complete the proof of Theorem~\ref{PPT_bound_thm_main}, it suffices to use the fact that the squashed entanglement $E_{\rm sq}$ --- which, as we remarked before and discuss in more detail in~\cite{NoteX}, lower bounds the correlated
\tcb{-}catalytic entanglement cost --- is non-zero for any entangled state~\cite{faithful,li_2014-1, VV-Markov, berta_2023, berta_2022}.

An approach very similar to the above can be used to derive a corresponding lower bound on the entanglement cost under PPT-preserving operations assisted by catalysts, leveraging once again Piani's super-additivity relation~\eqref{Piani_inequality_main}. The chain of inequalities in this case reads
\begin{align}
&\ceil{Rn} + D_\PPT\big(\tau_n\big)\\
&= D_\PPT\Big(\Phi_2^{\otimes \ceil{Rn}}\Big) + D_\PPT\big(\tau_n\big) \nonumber \\
&\geq D_\PPT\Big(\Phi_2^{\otimes \ceil{Rn}} \otimes \tau_n\Big) \nonumber \\
&\geq D_\PPT\Big(\Lambda_n\Big(\Phi_2^{\otimes \ceil{Rn}} \otimes \tau_n\Big)\Big) \nonumber \\
&\geq D_\PPT^\ppt\Big(\Tr_{C_n D_n} \Lambda_n\Big(\Phi_2^{\otimes \ceil{Rn}} \otimes \tau_n\Big)\Big)  \\
&\quad + D_\PPT\Big(\Tr_{A_0^{\ceil{Rn}}B_0^{\ceil{Rn}}} \Lambda_n\Big(\Phi_2^{\otimes \ceil{Rn}} \otimes \tau_n\Big)\Big) \\
&= D_\PPT^\ppt\Big(\Tr_{C_n D_n} \Lambda_n\Big(\Phi_2^{\otimes \ceil{Rn}} \otimes \tau_n\Big)\Big) + D_\PPT\big(\tau_n\big) \nonumber \\
&\geq D_\PPT^\ppt\big(\rho_{AB}^{\otimes n}\big) - \e_n \log\left( d^n \right) - g(\e_n) + D_\PPT\big(\tau_n\big)\, . \nonumber
\end{align}
Eliminating $D_\PPT(\tau_n)$ on both sides, dividing by $n$, and taking the limit $n\to\infty$ shows that any rate of dilution $R$ is lower bounded by the regularization of $D^\ppt_\PPT$.

Our technical contributions derived above can be summarized as follows.
\begin{prop} \label{bounds_PPTP}
For all states $\rho_{AB}$, the distillable entanglement and \tcb{the} entanglement cost under PPT-preserving operations assisted by correlated catalysts satisfy \tcb{that}
\bb
E_{d,\, \PPTP^{cc}}(\rho_{AB}) \leq D_\PPT^\infty(\rho_{AB}) \leq D_\PPT(\rho_{AB}) 
\label{bound_distillable_NE''_main}
\ee
and
\bb
E_{c,\, \PPTP^{cc}}(\rho_{AB}) \geq D_\PPT^{\ppt,\infty}(\rho_{AB}) \geq D_\PPT^{\ppt}(\rho_{AB})\,.
\label{bound_cost_NE''_main}
\ee
\end{prop}
This gives two very general limitations on asymptotic transformation rates with correlated--catalytic assistance, notably ones that can be efficiently computed or bounded as they do not require regularization. \tcb{For example, a} 
consequence of Proposition~\ref{bounds_PPTP} coupled with the faithfulness of $D^\ppt_\PPT$~\cite{piani_2009} is that the entanglement cost of any NPT entangled state is non-zero, even under PPT-preserving operations assisted by correlated catalysis.

\tcb{A peculiarity of the bounds in~\eqref{bound_distillable_NE''_main}--\eqref{bound_cost_NE''_main} is that they do not immediately imply that 
\tcb{$E_{d,\, \FF^{cc}}(\rho_{AB}) \textleq{?} E_{c,\,\FF^{cc}}(\rho_{AB})$}
for the class of operations $\FF = \PPTP$. 

\tcb{Such an inequality} is 
central to the logical consistency of the theory, because it tells us that no net entanglement can be generated in a cycle of dilution and distillation of $\rho_{AB}$, forbidding the existence of a `perpetuum mobile' in entanglement theory~\cite{Popescu1997}. 
\tcb{This is essentially a technicality stemming from our use of $\PPTP^{cc}$ operations.} For the more operationally grounded classes of free operations $\FF=\locc, \PPT$, \tcb{the inequality can be shown (cf.~\cite{kondra_2023}), and we include a complete proof of this} 
that works for all PPT or distillable states when $\FF=\locc$, and for all states when $\FF=\PPT$. 
}

Readers interested in 
a more detailed exposition of the properties of catalytic entanglement monotones 
are encouraged to consult the Supplemental Material~\cite{NoteX}. Therein, we also consider slightly strengthened and generalized variants of the result of Theorem~\ref{PPT_bound_thm_main}.
}

\tcb{%
\para{Extension to quantum 
coherence} Quantum coherence is another important example of a quantum resource, and interestingly it shares many similarities with entanglement theory~\cite{aberg_2006,baumgratz_2014,winter_2016,streltsov_2017}. In this context, incoherent operations (IO)~\cite{baumgratz_2014,winter_2016} have emerged as the main example of a set of operations that are sufficiently powerful to allow for generic coherence distillation, yet not powerful enough to enable full reversibility. It is natural to ask whether one could improve either distillation or dilution under IO via catalysis. Extending our approach from entanglement theory, we can answer this question in the negative in the most general sense: neither the IO distillable coherence nor the IO coherence cost of any state are affected by the introduction of catalysts.
As this is beyond the scope of the entanglement--focused discussion in the paper, a thorough consideration of this setting can be found in the Supplemental Material~\cite{NoteX}.
}%

\para{Discussion} We have established general limitations on asymptotic entanglement transformation rates with correlated catalysis, precluding the possibility of using catalysis to distill entanglement from PPT states. 

Although our methods lead to robust and general constraints on the power of catalytic conversion protocols, there are still many open questions in this context. In particular, complementing the no-go results obtained here, is there \emph{any} transformation whose rate can be improved by allowing correlated catalysts?
\tcb{Furthermore, since we have shown that correlated catalysis on its own is not enough to enable the reversibility of entanglement theory, the big open question~\cite{OpenProblem,berta_2022} remains: what does it take to achieve reversibile entanglement transformations? 
}%
One context in which our approach is not able to rule out reversibility is the use of non-entangling protocols~\cite{brandao_2010,lami_2023} with catalytic assistance, making it an interesting question to investigate such a possibility.

\parait{Note} Recently, a complementary question has been independently studied in~\cite{concurrent}: given that an entangled state $\rho$ is distillable, i.e., $E_{d,\locc}(\rho) > 0$, can its distillable entanglement be increased by catalytic LOCC protocols? The question is answered in the negative using different methods. Since the results of~\cite{concurrent} apply only to distillable states, they cannot be used to derive the results presented in this manuscript. 

\parait{Acknowledgments} We are indebted to Mario Berta, Marco Tomamichel, and Andreas Winter for many enlightening discussions. We thank \tcb{Ryuji Takagi}, Thomas Theurer, and Henrik Wilming for comments on the manuscript. We are grateful to the organizers of the workshop ``Quantum resources: from mathematical foundations to operational characterisation'', held in Singapore 5--8 December 2022, during which many of the discussions leading to this work took place. A.S. was supported by the ``Quantum Coherence and Entanglement for Quantum Technology'' project, carried out within the First Team programme of the Foundation for Polish Science co-financed by the European Union under the European Regional Development Fund.

\bibliographystyle{apsc}
\bibliography{main, biblio}

\begin{thebibliography}{97}%
\makeatletter
\providecommand \@ifxundefined [1]{%
 \@ifx{#1\undefined}
}%
\providecommand \@ifnum [1]{%
 \ifnum #1\expandafter \@firstoftwo
 \else \expandafter \@secondoftwo
 \fi
}%
\providecommand \@ifx [1]{%
 \ifx #1\expandafter \@firstoftwo
 \else \expandafter \@secondoftwo
 \fi
}%
\providecommand \natexlab [1]{#1}%
\providecommand \emph  [1]{``#1''}%
\providecommand \bibnamefont  [1]{#1}%
\providecommand \bibfnamefont [1]{#1}%
\providecommand \citenamefont [1]{#1}%
\providecommand \href@noop [0]{\@secondoftwo}%
\providecommand \href [0]{\begingroup \@sanitize@url \@href}%
\providecommand \@href[1]{\@@startlink{#1}\@@href}%
\providecommand \@@href[1]{\endgroup#1\@@endlink}%
\providecommand \@sanitize@url [0]{\catcode `\\12\catcode `\$12\catcode
  `\&12\catcode `\#12\catcode `\^12\catcode `\_12\catcode `\%12\relax}%
\providecommand \@@startlink[1]{}%
\providecommand \@@endlink[0]{}%
\providecommand \url  [0]{\begingroup\@sanitize@url \@url }%
\providecommand \@url [1]{\endgroup\@href {#1}{\urlprefix }}%
\providecommand \urlprefix  [0]{URL }%
\providecommand \Eprint [0]{\href }%
\providecommand \doibase [0]{http://dx.doi.org/}%
\providecommand \selectlanguage [0]{\@gobble}%
\providecommand \bibinfo  [0]{\@secondoftwo}%
\providecommand \bibfield  [0]{\@secondoftwo}%
\providecommand \translation [1]{[#1]}%
\providecommand \BibitemOpen [0]{}%
\providecommand \bibitemStop [0]{}%
\providecommand \bibitemNoStop [0]{.\EOS\space}%
\providecommand \EOS [0]{\spacefactor3000\relax}%
\providecommand \BibitemShut  [1]{\csname bibitem#1\endcsname}%
\let\auto@bib@innerbib\@empty
\bibitem [{\citenamefont {Horodecki}\ \emph {et~al.}(2009)\citenamefont
  {Horodecki}, \citenamefont {Horodecki}, \citenamefont {Horodecki},\ and\
  \citenamefont {Horodecki}}]{horodecki_2009}%
  \BibitemOpen
  \bibfield  {author} {\bibinfo {author} {\bibfnamefont {R.}~\bibnamefont
  {Horodecki}}, \bibinfo {author} {\bibfnamefont {P.}~\bibnamefont
  {Horodecki}}, \bibinfo {author} {\bibfnamefont {M.}~\bibnamefont
  {Horodecki}}, \ and\ \bibinfo {author} {\bibfnamefont {K.}~\bibnamefont
  {Horodecki}},\ }\bibfield  {title} {\emph {\bibinfo {title} {Quantum
  entanglement},}\ }\href {http://dx.doi.org/10.1103/RevModPhys.81.865}
  {\bibfield  {journal} {\bibinfo  {journal} {Rev. Mod. Phys.}\ }\textbf
  {\bibinfo {volume} {81}},\ \bibinfo {pages} {865} (\bibinfo {year}
  {2009})}\BibitemShut {NoStop}%
\bibitem [{\citenamefont {Bennett}\ \emph
  {et~al.}(1996{\natexlab{a}})\citenamefont {Bennett}, \citenamefont
  {Bernstein}, \citenamefont {Popescu},\ and\ \citenamefont
  {Schumacher}}]{bennett_1996-1}%
  \BibitemOpen
  \bibfield  {author} {\bibinfo {author} {\bibfnamefont {C.~H.}\ \bibnamefont
  {Bennett}}, \bibinfo {author} {\bibfnamefont {H.~J.}\ \bibnamefont
  {Bernstein}}, \bibinfo {author} {\bibfnamefont {S.}~\bibnamefont {Popescu}},
  \ and\ \bibinfo {author} {\bibfnamefont {B.}~\bibnamefont {Schumacher}},\
  }\bibfield  {title} {\emph {\bibinfo {title} {Concentrating partial
  entanglement by local operations},}\ }\href
  {http://dx.doi.org/10.1103/PhysRevA.53.2046} {\bibfield  {journal} {\bibinfo
  {journal} {Phys. Rev. A}\ }\textbf {\bibinfo {volume} {53}},\ \bibinfo
  {pages} {2046} (\bibinfo {year} {1996}{\natexlab{a}})}\BibitemShut {NoStop}%
\bibitem [{\citenamefont {Bennett}\ \emph
  {et~al.}(1996{\natexlab{b}})\citenamefont {Bennett}, \citenamefont
  {DiVincenzo}, \citenamefont {Smolin},\ and\ \citenamefont
  {Wootters}}]{bennett_1996}%
  \BibitemOpen
  \bibfield  {author} {\bibinfo {author} {\bibfnamefont {C.~H.}\ \bibnamefont
  {Bennett}}, \bibinfo {author} {\bibfnamefont {D.~P.}\ \bibnamefont
  {DiVincenzo}}, \bibinfo {author} {\bibfnamefont {J.~A.}\ \bibnamefont
  {Smolin}}, \ and\ \bibinfo {author} {\bibfnamefont {W.~K.}\ \bibnamefont
  {Wootters}},\ }\bibfield  {title} {\emph {\bibinfo {title} {Mixed-state
  entanglement and quantum error correction},}\ }\href
  {http://dx.doi.org/10.1103/PhysRevA.54.3824} {\bibfield  {journal} {\bibinfo
  {journal} {Phys. Rev. A}\ }\textbf {\bibinfo {volume} {54}},\ \bibinfo
  {pages} {3824} (\bibinfo {year} {1996}{\natexlab{b}})}\BibitemShut {NoStop}%
\bibitem [{\citenamefont {Horodecki}\ \emph {et~al.}(1998)\citenamefont
  {Horodecki}, \citenamefont {Horodecki},\ and\ \citenamefont
  {Horodecki}}]{HorodeckiBound}%
  \BibitemOpen
  \bibfield  {author} {\bibinfo {author} {\bibfnamefont {M.}~\bibnamefont
  {Horodecki}}, \bibinfo {author} {\bibfnamefont {P.}~\bibnamefont
  {Horodecki}}, \ and\ \bibinfo {author} {\bibfnamefont {R.}~\bibnamefont
  {Horodecki}},\ }\bibfield  {title} {\emph {\bibinfo {title} {Mixed-state
  entanglement and distillation: Is there a ``bound'' entanglement in
  nature?}}\ }\href {http://dx.doi.org/10.1103/PhysRevLett.80.5239} {\bibfield
  {journal} {\bibinfo  {journal} {Phys. Rev. Lett.}\ }\textbf {\bibinfo
  {volume} {80}},\ \bibinfo {pages} {5239} (\bibinfo {year}
  {1998})}\BibitemShut {NoStop}%
\bibitem [{\citenamefont {Popescu}\ and\ \citenamefont
  {Rohrlich}(1997)}]{Popescu1997}%
  \BibitemOpen
  \bibfield  {author} {\bibinfo {author} {\bibfnamefont {S.}~\bibnamefont
  {Popescu}}\ and\ \bibinfo {author} {\bibfnamefont {D.}~\bibnamefont
  {Rohrlich}},\ }\bibfield  {title} {\emph {\bibinfo {title} {Thermodynamics
  and the measure of entanglement},}\ }\href
  {http://dx.doi.org/10.1103/PhysRevA.56.R3319} {\bibfield  {journal} {\bibinfo
   {journal} {Phys. Rev. A}\ }\textbf {\bibinfo {volume} {56}},\ \bibinfo
  {pages} {R3319} (\bibinfo {year} {1997})}\BibitemShut {NoStop}%
\bibitem [{\citenamefont {Brand\~ao}\ \emph {et~al.}(2013)\citenamefont
  {Brand\~ao}, \citenamefont {Horodecki}, \citenamefont {Oppenheim},
  \citenamefont {Renes},\ and\ \citenamefont {Spekkens}}]{Brandao-thermo}%
  \BibitemOpen
  \bibfield  {author} {\bibinfo {author} {\bibfnamefont {F.~G. S.~L.}\
  \bibnamefont {Brand\~ao}}, \bibinfo {author} {\bibfnamefont {M.}~\bibnamefont
  {Horodecki}}, \bibinfo {author} {\bibfnamefont {J.}~\bibnamefont
  {Oppenheim}}, \bibinfo {author} {\bibfnamefont {J.~M.}\ \bibnamefont
  {Renes}}, \ and\ \bibinfo {author} {\bibfnamefont {R.~W.}\ \bibnamefont
  {Spekkens}},\ }\bibfield  {title} {\emph {\bibinfo {title} {Resource theory
  of quantum states out of thermal equilibrium},}\ }\href
  {http://dx.doi.org/10.1103/PhysRevLett.111.250404} {\bibfield  {journal}
  {\bibinfo  {journal} {Phys. Rev. Lett.}\ }\textbf {\bibinfo {volume} {111}},\
  \bibinfo {pages} {250404} (\bibinfo {year} {2013})}\BibitemShut {NoStop}%
\bibitem [{\citenamefont {Audenaert}\ \emph {et~al.}(2003)\citenamefont
  {Audenaert}, \citenamefont {Plenio},\ and\ \citenamefont
  {Eisert}}]{audenaert_2003}%
  \BibitemOpen
  \bibfield  {author} {\bibinfo {author} {\bibfnamefont {K.}~\bibnamefont
  {Audenaert}}, \bibinfo {author} {\bibfnamefont {M.~B.}\ \bibnamefont
  {Plenio}}, \ and\ \bibinfo {author} {\bibfnamefont {J.}~\bibnamefont
  {Eisert}},\ }\bibfield  {title} {\emph {\bibinfo {title} {Entanglement cost
  under positive-partial-transpose-preserving operations},}\ }\href
  {http://dx.doi.org/10.1103/PhysRevLett.90.027901} {\bibfield  {journal}
  {\bibinfo  {journal} {Phys. Rev. Lett.}\ }\textbf {\bibinfo {volume} {90}},\
  \bibinfo {pages} {027901} (\bibinfo {year} {2003})}\BibitemShut {NoStop}%
\bibitem [{\citenamefont {Brand{\~a}o}\ and\ \citenamefont
  {Plenio}(2010{\natexlab{a}})}]{brandao_2010}%
  \BibitemOpen
  \bibfield  {author} {\bibinfo {author} {\bibfnamefont {F.~G. S.~L.}\
  \bibnamefont {Brand{\~a}o}}\ and\ \bibinfo {author} {\bibfnamefont {M.~B.}\
  \bibnamefont {Plenio}},\ }\bibfield  {title} {\emph {\bibinfo {title} {A
  reversible theory of entanglement and its relation to the second law},}\
  }\href {http://dx.doi.org/10.1007/s00220-010-1003-1} {\bibfield  {journal}
  {\bibinfo  {journal} {Commun. Math. Phys.}\ }\textbf {\bibinfo {volume}
  {295}},\ \bibinfo {pages} {829} (\bibinfo {year}
  {2010}{\natexlab{a}})}\BibitemShut {NoStop}%
\bibitem [{\citenamefont {Brand{\~a}o}\ and\ \citenamefont
  {Plenio}(2008)}]{brandao_2008-1}%
  \BibitemOpen
  \bibfield  {author} {\bibinfo {author} {\bibfnamefont {F.~G. S.~L.}\
  \bibnamefont {Brand{\~a}o}}\ and\ \bibinfo {author} {\bibfnamefont {M.~B.}\
  \bibnamefont {Plenio}},\ }\bibfield  {title} {\emph {\bibinfo {title}
  {Entanglement theory and the second law of thermodynamics},}\ }\href
  {http://dx.doi.org/10.1038/nphys1100} {\bibfield  {journal} {\bibinfo
  {journal} {Nat. Phys.}\ }\textbf {\bibinfo {volume} {4}},\ \bibinfo {pages}
  {873} (\bibinfo {year} {2008})}\BibitemShut {NoStop}%
\bibitem [{\citenamefont {Regula}\ and\ \citenamefont
  {Lami}(2024)}]{regula_2023}%
  \BibitemOpen
  \bibfield  {author} {\bibinfo {author} {\bibfnamefont {B.}~\bibnamefont
  {Regula}}\ and\ \bibinfo {author} {\bibfnamefont {L.}~\bibnamefont {Lami}},\
  }\bibfield  {title} {\emph {\bibinfo {title} {Reversibility of quantum
  resources through probabilistic protocols},}\ }\href
  {http://dx.doi.org/10.1038/s41467-024-47243-2} {\bibfield  {journal}
  {\bibinfo  {journal} {Nat. Commun.}\ }\textbf {\bibinfo {volume} {15}},\
  \bibinfo {pages} {3096} (\bibinfo {year} {2024})}\BibitemShut {NoStop}%
\bibitem [{\citenamefont {Lami}\ and\ \citenamefont
  {Regula}(2023{\natexlab{a}})}]{lami_2023}%
  \BibitemOpen
  \bibfield  {author} {\bibinfo {author} {\bibfnamefont {L.}~\bibnamefont
  {Lami}}\ and\ \bibinfo {author} {\bibfnamefont {B.}~\bibnamefont {Regula}},\
  }\bibfield  {title} {\emph {\bibinfo {title} {No second law of entanglement
  manipulation after all},}\ }\href
  {http://dx.doi.org/10.1038/s41567-022-01873-9} {\bibfield  {journal}
  {\bibinfo  {journal} {Nat. Phys.}\ }\textbf {\bibinfo {volume} {19}},\
  \bibinfo {pages} {184} (\bibinfo {year} {2023}{\natexlab{a}})}\BibitemShut
  {NoStop}%
\bibitem [{\citenamefont {Jonathan}\ and\ \citenamefont
  {Plenio}(1999)}]{jonathan_1999}%
  \BibitemOpen
  \bibfield  {author} {\bibinfo {author} {\bibfnamefont {D.}~\bibnamefont
  {Jonathan}}\ and\ \bibinfo {author} {\bibfnamefont {M.~B.}\ \bibnamefont
  {Plenio}},\ }\bibfield  {title} {\emph {\bibinfo {title}
  {Entanglement-assisted local manipulation of pure quantum states},}\ }\href
  {http://dx.doi.org/10.1103/PhysRevLett.83.3566} {\bibfield  {journal}
  {\bibinfo  {journal} {Phys. Rev. Lett.}\ }\textbf {\bibinfo {volume} {83}},\
  \bibinfo {pages} {3566} (\bibinfo {year} {1999})}\BibitemShut {NoStop}%
\bibitem [{\citenamefont {Duan}\ \emph {et~al.}(2005)\citenamefont {Duan},
  \citenamefont {Feng},\ and\ \citenamefont {Ying}}]{duan_2005}%
  \BibitemOpen
  \bibfield  {author} {\bibinfo {author} {\bibfnamefont {R.}~\bibnamefont
  {Duan}}, \bibinfo {author} {\bibfnamefont {Y.}~\bibnamefont {Feng}}, \ and\
  \bibinfo {author} {\bibfnamefont {M.}~\bibnamefont {Ying}},\ }\bibfield
  {title} {\emph {\bibinfo {title} {Entanglement-assisted transformation is
  asymptotically equivalent to multiple-copy transformation},}\ }\href
  {http://dx.doi.org/10.1103/PhysRevA.72.024306} {\bibfield  {journal}
  {\bibinfo  {journal} {Phys. Rev. A}\ }\textbf {\bibinfo {volume} {72}},\
  \bibinfo {pages} {024306} (\bibinfo {year} {2005})}\BibitemShut {NoStop}%
\bibitem [{\citenamefont {Klimesh}(2007)}]{klimesh_2007}%
  \BibitemOpen
  \bibfield  {author} {\bibinfo {author} {\bibfnamefont {M.}~\bibnamefont
  {Klimesh}},\ }\bibfield  {title} {\emph {\bibinfo {title} {Inequalities that
  collectively completely characterize the catalytic majorization relation},}\
  }\href {http://arxiv.org/abs/0709.3680} {\Eprint
  {http://arxiv.org/abs/0709.3680} {arXiv:0709.3680}  (\bibinfo {year}
  {2007})}\BibitemShut {NoStop}%
\bibitem [{\citenamefont {Turgut}(2007)}]{turgut_2007}%
  \BibitemOpen
  \bibfield  {author} {\bibinfo {author} {\bibfnamefont {S.}~\bibnamefont
  {Turgut}},\ }\bibfield  {title} {\emph {\bibinfo {title} {Catalytic
  transformations for bipartite pure states},}\ }\href
  {http://dx.doi.org/10.1088/1751-8113/40/40/012} {\bibfield  {journal}
  {\bibinfo  {journal} {J. Phys. A: Math. Theor.}\ }\textbf {\bibinfo {volume}
  {40}},\ \bibinfo {pages} {12185} (\bibinfo {year} {2007})}\BibitemShut
  {NoStop}%
\bibitem [{\citenamefont {Shiraishi}\ and\ \citenamefont
  {Sagawa}(2021{\natexlab{a}})}]{shiraishi_2021}%
  \BibitemOpen
  \bibfield  {author} {\bibinfo {author} {\bibfnamefont {N.}~\bibnamefont
  {Shiraishi}}\ and\ \bibinfo {author} {\bibfnamefont {T.}~\bibnamefont
  {Sagawa}},\ }\bibfield  {title} {\emph {\bibinfo {title} {Quantum
  thermodynamics of correlated-catalytic state conversion at small scale},}\
  }\href {http://dx.doi.org/10.1103/PhysRevLett.126.150502} {\bibfield
  {journal} {\bibinfo  {journal} {Phys. Rev. Lett.}\ }\textbf {\bibinfo
  {volume} {126}},\ \bibinfo {pages} {150502} (\bibinfo {year}
  {2021}{\natexlab{a}})}\BibitemShut {NoStop}%
\bibitem [{\citenamefont {Kondra}\ \emph
  {et~al.}(2021{\natexlab{a}})\citenamefont {Kondra}, \citenamefont {Datta},\
  and\ \citenamefont {Streltsov}}]{kondra_2021}%
  \BibitemOpen
  \bibfield  {author} {\bibinfo {author} {\bibfnamefont {T.~V.}\ \bibnamefont
  {Kondra}}, \bibinfo {author} {\bibfnamefont {C.}~\bibnamefont {Datta}}, \
  and\ \bibinfo {author} {\bibfnamefont {A.}~\bibnamefont {Streltsov}},\
  }\bibfield  {title} {\emph {\bibinfo {title} {Catalytic transformations of
  pure entangled states},}\ }\href
  {http://dx.doi.org/10.1103/PhysRevLett.127.150503} {\bibfield  {journal}
  {\bibinfo  {journal} {Phys. Rev. Lett.}\ }\textbf {\bibinfo {volume} {127}},\
  \bibinfo {pages} {150503} (\bibinfo {year} {2021}{\natexlab{a}})}\BibitemShut
  {NoStop}%
\bibitem [{\citenamefont {{Lipka-Bartosik}}\ and\ \citenamefont
  {Skrzypczyk}(2021)}]{lipka-bartosik_2021}%
  \BibitemOpen
  \bibfield  {author} {\bibinfo {author} {\bibfnamefont {P.}~\bibnamefont
  {{Lipka-Bartosik}}}\ and\ \bibinfo {author} {\bibfnamefont {P.}~\bibnamefont
  {Skrzypczyk}},\ }\bibfield  {title} {\emph {\bibinfo {title} {Catalytic
  quantum teleportation},}\ }\href
  {http://dx.doi.org/10.1103/PhysRevLett.127.080502} {\bibfield  {journal}
  {\bibinfo  {journal} {Phys. Rev. Lett.}\ }\textbf {\bibinfo {volume} {127}},\
  \bibinfo {pages} {080502} (\bibinfo {year} {2021})}\BibitemShut {NoStop}%
\bibitem [{\citenamefont {Takagi}\ and\ \citenamefont
  {Shiraishi}(2022)}]{Takagi2022}%
  \BibitemOpen
  \bibfield  {author} {\bibinfo {author} {\bibfnamefont {R.}~\bibnamefont
  {Takagi}}\ and\ \bibinfo {author} {\bibfnamefont {N.}~\bibnamefont
  {Shiraishi}},\ }\bibfield  {title} {\emph {\bibinfo {title} {Correlation in
  catalysts enables arbitrary manipulation of quantum coherence},}\ }\href
  {http://dx.doi.org/10.1103/physrevlett.128.240501} {\bibfield  {journal}
  {\bibinfo  {journal} {Phys. Rev. Lett.}\ }\textbf {\bibinfo {volume} {128}},\
  \bibinfo {pages} {240501} (\bibinfo {year} {2022})}\BibitemShut {NoStop}%
\bibitem [{\citenamefont {Vedral}\ and\ \citenamefont
  {Plenio}(1998)}]{vedral_1998}%
  \BibitemOpen
  \bibfield  {author} {\bibinfo {author} {\bibfnamefont {V.}~\bibnamefont
  {Vedral}}\ and\ \bibinfo {author} {\bibfnamefont {M.~B.}\ \bibnamefont
  {Plenio}},\ }\bibfield  {title} {\emph {\bibinfo {title} {Entanglement
  measures and purification procedures},}\ }\href
  {http://dx.doi.org/10.1103/PhysRevA.57.1619} {\bibfield  {journal} {\bibinfo
  {journal} {Phys. Rev. A}\ }\textbf {\bibinfo {volume} {57}},\ \bibinfo
  {pages} {1619} (\bibinfo {year} {1998})}\BibitemShut {NoStop}%
\bibitem [{\citenamefont {Hayashi}(2006)}]{hayashi_2006-2}%
  \BibitemOpen
  \bibfield  {author} {\bibinfo {author} {\bibfnamefont {M.}~\bibnamefont
  {Hayashi}},\ }\href@noop {} {\emph {\bibinfo {title} {Quantum
  {{Information}}: {{An Introduction}}}}}\ (\bibinfo  {publisher} {{Springer
  Berlin, Heidelberg}},\ \bibinfo {year} {2006})\BibitemShut {NoStop}%
\bibitem [{\citenamefont {Rains}(2001)}]{rains_2001}%
  \BibitemOpen
  \bibfield  {author} {\bibinfo {author} {\bibfnamefont {E.~M.}\ \bibnamefont
  {Rains}},\ }\bibfield  {title} {\emph {\bibinfo {title} {A semidefinite
  program for distillable entanglement},}\ }\href
  {http://dx.doi.org/10.1109/18.959270} {\bibfield  {journal} {\bibinfo
  {journal} {IEEE Trans. Inf. Theory}\ }\textbf {\bibinfo {volume} {47}},\
  \bibinfo {pages} {2921} (\bibinfo {year} {2001})}\BibitemShut {NoStop}%
\bibitem [{\citenamefont {Horodecki}(2001)}]{horodecki_2001}%
  \BibitemOpen
  \bibfield  {author} {\bibinfo {author} {\bibfnamefont {M.}~\bibnamefont
  {Horodecki}},\ }\bibfield  {title} {\emph {\bibinfo {title} {Entanglement
  measures},}\ }\href {http://dl.acm.org/citation.cfm?id=2011326.2011328}
  {\bibfield  {journal} {\bibinfo  {journal} {Quant. Inf. Comput.}\ }\textbf
  {\bibinfo {volume} {1}},\ \bibinfo {pages} {3} (\bibinfo {year}
  {2001})}\BibitemShut {NoStop}%
\bibitem [{\citenamefont {{{\AA}berg}}(2014)}]{aberg_2014}%
  \BibitemOpen
  \bibfield  {author} {\bibinfo {author} {\bibfnamefont {J.}~\bibnamefont
  {{{\AA}berg}}},\ }\bibfield  {title} {\emph {\bibinfo {title} {Catalytic
  coherence},}\ }\href {http://dx.doi.org/10.1103/PhysRevLett.113.150402}
  {\bibfield  {journal} {\bibinfo  {journal} {Phys. Rev. Lett.}\ }\textbf
  {\bibinfo {volume} {113}},\ \bibinfo {pages} {150402} (\bibinfo {year}
  {2014})}\BibitemShut {NoStop}%
\bibitem [{\citenamefont {Lostaglio}\ \emph {et~al.}(2015)\citenamefont
  {Lostaglio}, \citenamefont {Korzekwa}, \citenamefont {Jennings},\ and\
  \citenamefont {Rudolph}}]{lostaglio_2015}%
  \BibitemOpen
  \bibfield  {author} {\bibinfo {author} {\bibfnamefont {M.}~\bibnamefont
  {Lostaglio}}, \bibinfo {author} {\bibfnamefont {K.}~\bibnamefont {Korzekwa}},
  \bibinfo {author} {\bibfnamefont {D.}~\bibnamefont {Jennings}}, \ and\
  \bibinfo {author} {\bibfnamefont {T.}~\bibnamefont {Rudolph}},\ }\bibfield
  {title} {\emph {\bibinfo {title} {Quantum coherence, time-translation
  symmetry, and thermodynamics},}\ }\href
  {http://dx.doi.org/10.1103/PhysRevX.5.021001} {\bibfield  {journal} {\bibinfo
   {journal} {Phys. Rev. X}\ }\textbf {\bibinfo {volume} {5}},\ \bibinfo
  {pages} {021001} (\bibinfo {year} {2015})}\BibitemShut {NoStop}%
\bibitem [{\citenamefont {Wilming}\ \emph {et~al.}(2017)\citenamefont
  {Wilming}, \citenamefont {Gallego},\ and\ \citenamefont
  {Eisert}}]{wilming_2017}%
  \BibitemOpen
  \bibfield  {author} {\bibinfo {author} {\bibfnamefont {H.}~\bibnamefont
  {Wilming}}, \bibinfo {author} {\bibfnamefont {R.}~\bibnamefont {Gallego}}, \
  and\ \bibinfo {author} {\bibfnamefont {J.}~\bibnamefont {Eisert}},\
  }\bibfield  {title} {\emph {\bibinfo {title} {Axiomatic characterization of
  the quantum relative entropy and free energy},}\ }\href
  {http://dx.doi.org/10.3390/e19060241} {\bibfield  {journal} {\bibinfo
  {journal} {Entropy}\ }\textbf {\bibinfo {volume} {19}},\ \bibinfo {pages}
  {241} (\bibinfo {year} {2017})}\BibitemShut {NoStop}%
\bibitem [{\citenamefont {M{\"u}ller}(2018)}]{muller_2018}%
  \BibitemOpen
  \bibfield  {author} {\bibinfo {author} {\bibfnamefont {M.~P.}\ \bibnamefont
  {M{\"u}ller}},\ }\bibfield  {title} {\emph {\bibinfo {title} {Correlating
  thermal machines and the second law at the nanoscale},}\ }\href
  {http://dx.doi.org/10.1103/PhysRevX.8.041051} {\bibfield  {journal} {\bibinfo
   {journal} {Phys. Rev. X}\ }\textbf {\bibinfo {volume} {8}},\ \bibinfo
  {pages} {041051} (\bibinfo {year} {2018})}\BibitemShut {NoStop}%
\bibitem [{\citenamefont {Shiraishi}\ and\ \citenamefont
  {Sagawa}(2021{\natexlab{b}})}]{Shiraishi2021}%
  \BibitemOpen
  \bibfield  {author} {\bibinfo {author} {\bibfnamefont {N.}~\bibnamefont
  {Shiraishi}}\ and\ \bibinfo {author} {\bibfnamefont {T.}~\bibnamefont
  {Sagawa}},\ }\bibfield  {title} {\emph {\bibinfo {title} {Quantum
  thermodynamics of correlated-catalytic state conversion at small scale},}\
  }\href {http://dx.doi.org/10.1103/PhysRevLett.126.150502} {\bibfield
  {journal} {\bibinfo  {journal} {Phys. Rev. Lett.}\ }\textbf {\bibinfo
  {volume} {126}},\ \bibinfo {pages} {150502} (\bibinfo {year}
  {2021}{\natexlab{b}})}\BibitemShut {NoStop}%
\bibitem [{\citenamefont {Kondra}\ \emph
  {et~al.}(2021{\natexlab{b}})\citenamefont {Kondra}, \citenamefont {Datta},\
  and\ \citenamefont {Streltsov}}]{Kondra2021}%
  \BibitemOpen
  \bibfield  {author} {\bibinfo {author} {\bibfnamefont {T.~V.}\ \bibnamefont
  {Kondra}}, \bibinfo {author} {\bibfnamefont {C.}~\bibnamefont {Datta}}, \
  and\ \bibinfo {author} {\bibfnamefont {A.}~\bibnamefont {Streltsov}},\
  }\bibfield  {title} {\emph {\bibinfo {title} {Catalytic transformations of
  pure entangled states},}\ }\href
  {http://dx.doi.org/10.1103/PhysRevLett.127.150503} {\bibfield  {journal}
  {\bibinfo  {journal} {Phys. Rev. Lett.}\ }\textbf {\bibinfo {volume} {127}},\
  \bibinfo {pages} {150503} (\bibinfo {year} {2021}{\natexlab{b}})}\BibitemShut
  {NoStop}%
\bibitem [{\citenamefont {Rethinasamy}\ and\ \citenamefont
  {Wilde}(2020)}]{rethinasamy_2020}%
  \BibitemOpen
  \bibfield  {author} {\bibinfo {author} {\bibfnamefont {S.}~\bibnamefont
  {Rethinasamy}}\ and\ \bibinfo {author} {\bibfnamefont {M.~M.}\ \bibnamefont
  {Wilde}},\ }\bibfield  {title} {\emph {\bibinfo {title} {Relative entropy and
  catalytic relative majorization},}\ }\href
  {http://dx.doi.org/10.1103/PhysRevResearch.2.033455} {\bibfield  {journal}
  {\bibinfo  {journal} {Phys. Rev. Res.}\ }\textbf {\bibinfo {volume} {2}},\
  \bibinfo {pages} {033455} (\bibinfo {year} {2020})}\BibitemShut {NoStop}%
\bibitem [{\citenamefont {Wilming}(2021)}]{wilming_2021}%
  \BibitemOpen
  \bibfield  {author} {\bibinfo {author} {\bibfnamefont {H.}~\bibnamefont
  {Wilming}},\ }\bibfield  {title} {\emph {\bibinfo {title} {Entropy and
  reversible catalysis},}\ }\href
  {http://dx.doi.org/10.1103/PhysRevLett.127.260402} {\bibfield  {journal}
  {\bibinfo  {journal} {Phys. Rev. Lett.}\ }\textbf {\bibinfo {volume} {127}},\
  \bibinfo {pages} {260402} (\bibinfo {year} {2021})}\BibitemShut {NoStop}%
\bibitem [{\citenamefont {Yang}\ \emph {et~al.}(2005)\citenamefont {Yang},
  \citenamefont {Horodecki}, \citenamefont {Horodecki},\ and\ \citenamefont
  {{Synak-Radtke}}}]{yang_2005}%
  \BibitemOpen
  \bibfield  {author} {\bibinfo {author} {\bibfnamefont {D.}~\bibnamefont
  {Yang}}, \bibinfo {author} {\bibfnamefont {M.}~\bibnamefont {Horodecki}},
  \bibinfo {author} {\bibfnamefont {R.}~\bibnamefont {Horodecki}}, \ and\
  \bibinfo {author} {\bibfnamefont {B.}~\bibnamefont {{Synak-Radtke}}},\
  }\bibfield  {title} {\emph {\bibinfo {title} {Irreversibility for all bound
  entangled states},}\ }\href {http://dx.doi.org/10.1103/PhysRevLett.95.190501}
  {\bibfield  {journal} {\bibinfo  {journal} {Phys. Rev. Lett.}\ }\textbf
  {\bibinfo {volume} {95}},\ \bibinfo {pages} {190501} (\bibinfo {year}
  {2005})}\BibitemShut {NoStop}%
\bibitem [{\citenamefont {Horodecki}\ \emph {et~al.}(2022)\citenamefont
  {Horodecki}, \citenamefont {Rudnicki},\ and\ \citenamefont
  {\ifmmode~\dot{Z}\else \.{Z}\fi{}yczkowski}}]{Horodecki-open-problems}%
  \BibitemOpen
  \bibfield  {author} {\bibinfo {author} {\bibfnamefont {P.}~\bibnamefont
  {Horodecki}}, \bibinfo {author} {\bibfnamefont {L.}~\bibnamefont {Rudnicki}},
  \ and\ \bibinfo {author} {\bibfnamefont {K.}~\bibnamefont
  {\ifmmode~\dot{Z}\else \.{Z}\fi{}yczkowski}},\ }\bibfield  {title} {\emph
  {\bibinfo {title} {Five open problems in quantum information theory},}\
  }\href {http://dx.doi.org/10.1103/PRXQuantum.3.010101} {\bibfield  {journal}
  {\bibinfo  {journal} {PRX Quantum}\ }\textbf {\bibinfo {volume} {3}},\
  \bibinfo {pages} {010101} (\bibinfo {year} {2022})}\BibitemShut {NoStop}%
\bibitem [{\citenamefont {Bruss}(2005)}]{OpenProblem-NPT-bound}%
  \BibitemOpen
  \bibfield  {author} {\bibinfo {author} {\bibfnamefont {D.}~\bibnamefont
  {Bruss}},\ }\emph {\bibinfo {title} {Problem~\#2 on the list {``Open Quantum
  Problems''}},}\ \href {http://arxiv.org/abs/quant-ph/0504166} {\bibinfo
  {howpublished}
  {\url{https://oqp.iqoqi.oeaw.ac.at/undistillability-implies-ppt}} (\bibinfo
  {year} {2005})},\ \Eprint {http://arxiv.org/abs/quant-ph/0504166}
  {arXiv:quant-ph/0504166} \BibitemShut {NoStop}%
\bibitem [{\citenamefont {Vedral}\ \emph {et~al.}(1997)\citenamefont {Vedral},
  \citenamefont {Plenio}, \citenamefont {Rippin},\ and\ \citenamefont
  {Knight}}]{vedral_1997}%
  \BibitemOpen
  \bibfield  {author} {\bibinfo {author} {\bibfnamefont {V.}~\bibnamefont
  {Vedral}}, \bibinfo {author} {\bibfnamefont {M.~B.}\ \bibnamefont {Plenio}},
  \bibinfo {author} {\bibfnamefont {M.~A.}\ \bibnamefont {Rippin}}, \ and\
  \bibinfo {author} {\bibfnamefont {P.~L.}\ \bibnamefont {Knight}},\ }\bibfield
   {title} {\emph {\bibinfo {title} {Quantifying entanglement},}\ }\href
  {http://dx.doi.org/10.1103/PhysRevLett.78.2275} {\bibfield  {journal}
  {\bibinfo  {journal} {Phys. Rev. Lett.}\ }\textbf {\bibinfo {volume} {78}},\
  \bibinfo {pages} {2275} (\bibinfo {year} {1997})}\BibitemShut {NoStop}%
\bibitem [{\citenamefont {Donald}\ and\ \citenamefont
  {Horodecki}(1999)}]{donald_1999}%
  \BibitemOpen
  \bibfield  {author} {\bibinfo {author} {\bibfnamefont {M.~J.}\ \bibnamefont
  {Donald}}\ and\ \bibinfo {author} {\bibfnamefont {M.}~\bibnamefont
  {Horodecki}},\ }\bibfield  {title} {\emph {\bibinfo {title} {Continuity of
  relative entropy of entanglement},}\ }\href
  {http://dx.doi.org/10.1016/S0375-9601(99)00813-0} {\bibfield  {journal}
  {\bibinfo  {journal} {Physics Letters A}\ }\textbf {\bibinfo {volume}
  {264}},\ \bibinfo {pages} {257} (\bibinfo {year} {1999})}\BibitemShut
  {NoStop}%
\bibitem [{\citenamefont {{Synak-Radtke}}\ and\ \citenamefont
  {Horodecki}(2006)}]{synak-radtke_2006}%
  \BibitemOpen
  \bibfield  {author} {\bibinfo {author} {\bibfnamefont {B.}~\bibnamefont
  {{Synak-Radtke}}}\ and\ \bibinfo {author} {\bibfnamefont {M.}~\bibnamefont
  {Horodecki}},\ }\bibfield  {title} {\emph {\bibinfo {title} {On asymptotic
  continuity of functions of quantum states},}\ }\href
  {http://dx.doi.org/10.1088/0305-4470/39/26/L02} {\bibfield  {journal}
  {\bibinfo  {journal} {J. Phys. A: Math. Gen.}\ }\textbf {\bibinfo {volume}
  {39}},\ \bibinfo {pages} {L423} (\bibinfo {year} {2006})}\BibitemShut
  {NoStop}%
\bibitem [{\citenamefont {Donald}\ \emph {et~al.}(2002)\citenamefont {Donald},
  \citenamefont {Horodecki},\ and\ \citenamefont {Rudolph}}]{donald_2002}%
  \BibitemOpen
  \bibfield  {author} {\bibinfo {author} {\bibfnamefont {M.~J.}\ \bibnamefont
  {Donald}}, \bibinfo {author} {\bibfnamefont {M.}~\bibnamefont {Horodecki}}, \
  and\ \bibinfo {author} {\bibfnamefont {O.}~\bibnamefont {Rudolph}},\
  }\bibfield  {title} {\emph {\bibinfo {title} {The uniqueness theorem for
  entanglement measures},}\ }\href {http://dx.doi.org/10.1063/1.1495917}
  {\bibfield  {journal} {\bibinfo  {journal} {J. Math. Phys.}\ }\textbf
  {\bibinfo {volume} {43}},\ \bibinfo {pages} {4252} (\bibinfo {year}
  {2002})}\BibitemShut {NoStop}%
\bibitem [{\citenamefont {Horodecki}\ \emph {et~al.}(2000)\citenamefont
  {Horodecki}, \citenamefont {Horodecki},\ and\ \citenamefont
  {Horodecki}}]{Horodecki2000}%
  \BibitemOpen
  \bibfield  {author} {\bibinfo {author} {\bibfnamefont {M.}~\bibnamefont
  {Horodecki}}, \bibinfo {author} {\bibfnamefont {P.}~\bibnamefont
  {Horodecki}}, \ and\ \bibinfo {author} {\bibfnamefont {R.}~\bibnamefont
  {Horodecki}},\ }\bibfield  {title} {\emph {\bibinfo {title} {Limits for
  entanglement measures},}\ }\href
  {http://dx.doi.org/10.1103/PhysRevLett.84.2014} {\bibfield  {journal}
  {\bibinfo  {journal} {Phys. Rev. Lett.}\ }\textbf {\bibinfo {volume} {84}},\
  \bibinfo {pages} {2014} (\bibinfo {year} {2000})}\BibitemShut {NoStop}%
\bibitem [{\citenamefont {Umegaki}(1962)}]{umegaki_1962}%
  \BibitemOpen
  \bibfield  {author} {\bibinfo {author} {\bibfnamefont {H.}~\bibnamefont
  {Umegaki}},\ }\bibfield  {title} {\emph {\bibinfo {title} {Conditional
  expectation in an operator algebra, {{IV}} ({{Entropy}} and information)},}\
  }\href {http://dx.doi.org/10.2996/kmj/1138844604} {\bibfield  {journal}
  {\bibinfo  {journal} {Kodai Math. Sem. Rep.}\ }\textbf {\bibinfo {volume}
  {14}},\ \bibinfo {pages} {59} (\bibinfo {year} {1962})}\BibitemShut {NoStop}%
\bibitem [{\citenamefont {Hiai}\ and\ \citenamefont {Petz}(1991)}]{hiai_1991}%
  \BibitemOpen
  \bibfield  {author} {\bibinfo {author} {\bibfnamefont {F.}~\bibnamefont
  {Hiai}}\ and\ \bibinfo {author} {\bibfnamefont {D.}~\bibnamefont {Petz}},\
  }\bibfield  {title} {\emph {\bibinfo {title} {The proper formula for relative
  entropy and its asymptotics in quantum probability},}\ }\href
  {http://dx.doi.org/10.1007/BF02100287} {\bibfield  {journal} {\bibinfo
  {journal} {Commun. Math. Phys.}\ }\textbf {\bibinfo {volume} {143}},\
  \bibinfo {pages} {99} (\bibinfo {year} {1991})}\BibitemShut {NoStop}%
\bibitem [{\citenamefont {Vollbrecht}\ and\ \citenamefont
  {Werner}(2001)}]{vollbrecht_2001}%
  \BibitemOpen
  \bibfield  {author} {\bibinfo {author} {\bibfnamefont {K.~G.~H.}\
  \bibnamefont {Vollbrecht}}\ and\ \bibinfo {author} {\bibfnamefont {R.~F.}\
  \bibnamefont {Werner}},\ }\bibfield  {title} {\emph {\bibinfo {title}
  {Entanglement measures under symmetry},}\ }\href
  {http://dx.doi.org/10.1103/PhysRevA.64.062307} {\bibfield  {journal}
  {\bibinfo  {journal} {Phys. Rev. A}\ }\textbf {\bibinfo {volume} {64}},\
  \bibinfo {pages} {062307} (\bibinfo {year} {2001})}\BibitemShut {NoStop}%
\bibitem [{\citenamefont {Rubboli}\ and\ \citenamefont
  {Tomamichel}(2022)}]{rubboli_2022}%
  \BibitemOpen
  \bibfield  {author} {\bibinfo {author} {\bibfnamefont {R.}~\bibnamefont
  {Rubboli}}\ and\ \bibinfo {author} {\bibfnamefont {M.}~\bibnamefont
  {Tomamichel}},\ }\bibfield  {title} {\emph {\bibinfo {title} {Fundamental
  limits on correlated catalytic state transformations},}\ }\href
  {http://dx.doi.org/10.1103/PhysRevLett.129.120506} {\bibfield  {journal}
  {\bibinfo  {journal} {Phys. Rev. Lett.}\ }\textbf {\bibinfo {volume} {129}},\
  \bibinfo {pages} {120506} (\bibinfo {year} {2022})}\BibitemShut {NoStop}%
\bibitem [{\citenamefont {Datta}\ \emph {et~al.}(2023)\citenamefont {Datta},
  \citenamefont {Kondra}, \citenamefont {Miller},\ and\ \citenamefont
  {Streltsov}}]{catalysis-review}%
  \BibitemOpen
  \bibfield  {author} {\bibinfo {author} {\bibfnamefont {C.}~\bibnamefont
  {Datta}}, \bibinfo {author} {\bibfnamefont {T.~V.}\ \bibnamefont {Kondra}},
  \bibinfo {author} {\bibfnamefont {M.}~\bibnamefont {Miller}}, \ and\ \bibinfo
  {author} {\bibfnamefont {A.}~\bibnamefont {Streltsov}},\ }\bibfield  {title}
  {\emph {\bibinfo {title} {Catalysis of entanglement and other quantum
  resources},}\ }\href {http://dx.doi.org/10.1088/1361-6633/acfbec} {\bibfield
  {journal} {\bibinfo  {journal} {Rep. Prog. Phys.}\ }\textbf {\bibinfo
  {volume} {86}},\ \bibinfo {pages} {116002} (\bibinfo {year}
  {2023})}\BibitemShut {NoStop}%
\bibitem [{NoteX()}]{NoteX}%
  \BibitemOpen
  \bibinfo {note} {See the Supplemental
  Material~\hyperlink{supp}{below}.}\BibitemShut {Stop}%
\bibitem [{\citenamefont {Christandl}\ and\ \citenamefont
  {Winter}(2004)}]{christandl_2004}%
  \BibitemOpen
  \bibfield  {author} {\bibinfo {author} {\bibfnamefont {M.}~\bibnamefont
  {Christandl}}\ and\ \bibinfo {author} {\bibfnamefont {A.}~\bibnamefont
  {Winter}},\ }\bibfield  {title} {\emph {\bibinfo {title} {``{{Squashed}}
  entanglement'': {{An}} additive entanglement measure},}\ }\href
  {http://dx.doi.org/10.1063/1.1643788} {\bibfield  {journal} {\bibinfo
  {journal} {J. Math. Phys.}\ }\textbf {\bibinfo {volume} {45}},\ \bibinfo
  {pages} {829} (\bibinfo {year} {2004})}\BibitemShut {NoStop}%
\bibitem [{\citenamefont {Yang}\ \emph {et~al.}(2007)\citenamefont {Yang},
  \citenamefont {Horodecki},\ and\ \citenamefont {Wang}}]{Yang2007}%
  \BibitemOpen
  \bibfield  {author} {\bibinfo {author} {\bibfnamefont {D.}~\bibnamefont
  {Yang}}, \bibinfo {author} {\bibfnamefont {M.}~\bibnamefont {Horodecki}}, \
  and\ \bibinfo {author} {\bibfnamefont {Z.~D.}\ \bibnamefont {Wang}},\
  }\bibfield  {title} {\emph {\bibinfo {title} {Conditional entanglement},}\
  }\href@noop {} {\Eprint {http://arxiv.org/abs/quant-ph/0701149}
  {arXiv:quant-ph/0701149}  (\bibinfo {year} {2007})}\BibitemShut {NoStop}%
\bibitem [{\citenamefont {Yang}\ \emph {et~al.}(2008)\citenamefont {Yang},
  \citenamefont {Horodecki},\ and\ \citenamefont {Wang}}]{Yang2008}%
  \BibitemOpen
  \bibfield  {author} {\bibinfo {author} {\bibfnamefont {D.}~\bibnamefont
  {Yang}}, \bibinfo {author} {\bibfnamefont {M.}~\bibnamefont {Horodecki}}, \
  and\ \bibinfo {author} {\bibfnamefont {Z.~D.}\ \bibnamefont {Wang}},\
  }\bibfield  {title} {\emph {\bibinfo {title} {An additive and operational
  entanglement measure: Conditional entanglement of mutual information},}\
  }\href {http://dx.doi.org/10.1103/PhysRevLett.101.140501} {\bibfield
  {journal} {\bibinfo  {journal} {Phys. Rev. Lett.}\ }\textbf {\bibinfo
  {volume} {101}},\ \bibinfo {pages} {140501} (\bibinfo {year}
  {2008})}\BibitemShut {NoStop}%
\bibitem [{\citenamefont {Vidal}\ and\ \citenamefont
  {Cirac}(2001)}]{vidal_2001}%
  \BibitemOpen
  \bibfield  {author} {\bibinfo {author} {\bibfnamefont {G.}~\bibnamefont
  {Vidal}}\ and\ \bibinfo {author} {\bibfnamefont {J.~I.}\ \bibnamefont
  {Cirac}},\ }\bibfield  {title} {\emph {\bibinfo {title} {Irreversibility in
  asymptotic manipulations of entanglement},}\ }\href
  {http://dx.doi.org/10.1103/PhysRevLett.86.5803} {\bibfield  {journal}
  {\bibinfo  {journal} {Phys. Rev. Lett.}\ }\textbf {\bibinfo {volume} {86}},\
  \bibinfo {pages} {5803} (\bibinfo {year} {2001})}\BibitemShut {NoStop}%
\bibitem [{\citenamefont {Piani}(2009)}]{piani_2009-1}%
  \BibitemOpen
  \bibfield  {author} {\bibinfo {author} {\bibfnamefont {M.}~\bibnamefont
  {Piani}},\ }\bibfield  {title} {\emph {\bibinfo {title} {Relative entropy of
  entanglement and restricted measurements},}\ }\href
  {http://dx.doi.org/10.1103/PhysRevLett.103.160504} {\bibfield  {journal}
  {\bibinfo  {journal} {Phys. Rev. Lett.}\ }\textbf {\bibinfo {volume} {103}},\
  \bibinfo {pages} {160504} (\bibinfo {year} {2009})}\BibitemShut {NoStop}%
\bibitem [{\citenamefont {Li}\ and\ \citenamefont {Winter}(2014)}]{li_2014-1}%
  \BibitemOpen
  \bibfield  {author} {\bibinfo {author} {\bibfnamefont {K.}~\bibnamefont
  {Li}}\ and\ \bibinfo {author} {\bibfnamefont {A.}~\bibnamefont {Winter}},\
  }\bibfield  {title} {\emph {\bibinfo {title} {Relative entropy and squashed
  entanglement},}\ }\href {http://dx.doi.org/10.1007/s00220-013-1871-2}
  {\bibfield  {journal} {\bibinfo  {journal} {Commun. Math. Phys.}\ }\textbf
  {\bibinfo {volume} {326}},\ \bibinfo {pages} {63} (\bibinfo {year}
  {2014})}\BibitemShut {NoStop}%
\bibitem [{\citenamefont {Schindler}\ and\ \citenamefont
  {Winter}(2023)}]{Schindler2023}%
  \BibitemOpen
  \bibfield  {author} {\bibinfo {author} {\bibfnamefont {J.}~\bibnamefont
  {Schindler}}\ and\ \bibinfo {author} {\bibfnamefont {A.}~\bibnamefont
  {Winter}},\ }\bibfield  {title} {\emph {\bibinfo {title} {Continuity bounds
  on observational entropy and measured relative entropies},}\ }\href
  {http://dx.doi.org/10.1063/5.0147294} {\bibfield  {journal} {\bibinfo
  {journal} {J. Math. Phys.}\ }\textbf {\bibinfo {volume} {64}},\ \bibinfo
  {pages} {092201} (\bibinfo {year} {2023})}\BibitemShut {NoStop}%
\bibitem [{\citenamefont {Brand{\~a}o}\ \emph {et~al.}(2011)\citenamefont
  {Brand{\~a}o}, \citenamefont {Christandl},\ and\ \citenamefont
  {Yard}}]{faithful}%
  \BibitemOpen
  \bibfield  {author} {\bibinfo {author} {\bibfnamefont {F.~G. S.~L.}\
  \bibnamefont {Brand{\~a}o}}, \bibinfo {author} {\bibfnamefont
  {M.}~\bibnamefont {Christandl}}, \ and\ \bibinfo {author} {\bibfnamefont
  {J.}~\bibnamefont {Yard}},\ }\bibfield  {title} {\emph {\bibinfo {title}
  {Faithful squashed entanglement},}\ }\href
  {http://dx.doi.org/https://doi.org/10.1007/s00220-011-1302-1} {\bibfield
  {journal} {\bibinfo  {journal} {Commun. Math. Phys.}\ }\textbf {\bibinfo
  {volume} {306}},\ \bibinfo {pages} {805} (\bibinfo {year}
  {2011})}\BibitemShut {NoStop}%
\bibitem [{\citenamefont {Li}\ and\ \citenamefont {Winter}(2018)}]{VV-Markov}%
  \BibitemOpen
  \bibfield  {author} {\bibinfo {author} {\bibfnamefont {K.}~\bibnamefont
  {Li}}\ and\ \bibinfo {author} {\bibfnamefont {A.}~\bibnamefont {Winter}},\
  }\bibfield  {title} {\emph {\bibinfo {title} {Squashed entanglement,
  k-extendibility, quantum {M}arkov chains, and recovery maps},}\ }\href
  {http://dx.doi.org/10.1007/s10701-018-0143-6} {\bibfield  {journal} {\bibinfo
   {journal} {Found. Phys.}\ }\textbf {\bibinfo {volume} {48}},\ \bibinfo
  {pages} {910} (\bibinfo {year} {2018})}\BibitemShut {NoStop}%
\bibitem [{\citenamefont {Berta}\ and\ \citenamefont
  {Tomamichel}(2024)}]{berta_2023}%
  \BibitemOpen
  \bibfield  {author} {\bibinfo {author} {\bibfnamefont {M.}~\bibnamefont
  {Berta}}\ and\ \bibinfo {author} {\bibfnamefont {M.}~\bibnamefont
  {Tomamichel}},\ }\bibfield  {title} {\emph {\bibinfo {title} {Entanglement
  {{Monogamy}} via {{Multivariate Trace Inequalities}}},}\ }\href
  {http://dx.doi.org/10.1007/s00220-023-04920-5} {\bibfield  {journal}
  {\bibinfo  {journal} {Commun. Math. Phys.}\ }\textbf {\bibinfo {volume}
  {405}},\ \bibinfo {pages} {29} (\bibinfo {year} {2024})}\BibitemShut
  {NoStop}%
\bibitem [{\citenamefont {Berta}\ \emph {et~al.}(2023)\citenamefont {Berta},
  \citenamefont {Brand{\~a}o}, \citenamefont {Gour}, \citenamefont {Lami},
  \citenamefont {Plenio}, \citenamefont {Regula},\ and\ \citenamefont
  {Tomamichel}}]{berta_2022}%
  \BibitemOpen
  \bibfield  {author} {\bibinfo {author} {\bibfnamefont {M.}~\bibnamefont
  {Berta}}, \bibinfo {author} {\bibfnamefont {F.~G. S.~L.}\ \bibnamefont
  {Brand{\~a}o}}, \bibinfo {author} {\bibfnamefont {G.}~\bibnamefont {Gour}},
  \bibinfo {author} {\bibfnamefont {L.}~\bibnamefont {Lami}}, \bibinfo {author}
  {\bibfnamefont {M.~B.}\ \bibnamefont {Plenio}}, \bibinfo {author}
  {\bibfnamefont {B.}~\bibnamefont {Regula}}, \ and\ \bibinfo {author}
  {\bibfnamefont {M.}~\bibnamefont {Tomamichel}},\ }\bibfield  {title} {\emph
  {\bibinfo {title} {On a gap in the proof of the generalised quantum
  {{Stein}}'s lemma and its consequences for the reversibility of quantum
  resources},}\ }\href {http://dx.doi.org/10.22331/q-2023-09-07-1103}
  {\bibfield  {journal} {\bibinfo  {journal} {Quantum}\ }\textbf {\bibinfo
  {volume} {7}},\ \bibinfo {pages} {1103} (\bibinfo {year} {2023})}\BibitemShut
  {NoStop}%
\bibitem [{\citenamefont {Piani}\ and\ \citenamefont
  {Watrous}(2009)}]{piani_2009}%
  \BibitemOpen
  \bibfield  {author} {\bibinfo {author} {\bibfnamefont {M.}~\bibnamefont
  {Piani}}\ and\ \bibinfo {author} {\bibfnamefont {J.}~\bibnamefont
  {Watrous}},\ }\bibfield  {title} {\emph {\bibinfo {title} {All entangled
  states are useful for channel discrimination},}\ }\href
  {http://dx.doi.org/10.1103/PhysRevLett.102.250501} {\bibfield  {journal}
  {\bibinfo  {journal} {Phys. Rev. Lett.}\ }\textbf {\bibinfo {volume} {102}},\
  \bibinfo {pages} {250501} (\bibinfo {year} {2009})}\BibitemShut {NoStop}%
\bibitem [{\citenamefont {Kondra}\ \emph {et~al.}(2023)\citenamefont {Kondra},
  \citenamefont {Ganardi},\ and\ \citenamefont {Streltsov}}]{kondra_2023}%
  \BibitemOpen
  \bibfield  {author} {\bibinfo {author} {\bibfnamefont {T.~V.}\ \bibnamefont
  {Kondra}}, \bibinfo {author} {\bibfnamefont {R.}~\bibnamefont {Ganardi}}, \
  and\ \bibinfo {author} {\bibfnamefont {A.}~\bibnamefont {Streltsov}},\
  }\bibfield  {title} {\emph {\bibinfo {title} {Coherence manipulation in
  asymmetry and thermodynamics},}\ }\href@noop {} {\Eprint
  {http://arxiv.org/abs/2308.12814} {arxiv:2308.12814}  (\bibinfo {year}
  {2023})}\BibitemShut {NoStop}%
\bibitem [{\citenamefont {Aberg}(2006)}]{aberg_2006}%
  \BibitemOpen
  \bibfield  {author} {\bibinfo {author} {\bibfnamefont {J.}~\bibnamefont
  {Aberg}},\ }\bibfield  {title} {\emph {\bibinfo {title} {Quantifying
  {{Superposition}}},}\ }\href@noop {} {\Eprint
  {http://arxiv.org/abs/quant-ph/0612146} {arxiv:quant-ph/0612146}  (\bibinfo
  {year} {2006})}\BibitemShut {NoStop}%
\bibitem [{\citenamefont {Baumgratz}\ \emph {et~al.}(2014)\citenamefont
  {Baumgratz}, \citenamefont {Cramer},\ and\ \citenamefont
  {Plenio}}]{baumgratz_2014}%
  \BibitemOpen
  \bibfield  {author} {\bibinfo {author} {\bibfnamefont {T.}~\bibnamefont
  {Baumgratz}}, \bibinfo {author} {\bibfnamefont {M.}~\bibnamefont {Cramer}}, \
  and\ \bibinfo {author} {\bibfnamefont {M.~B.}\ \bibnamefont {Plenio}},\
  }\bibfield  {title} {\emph {\bibinfo {title} {Quantifying coherence},}\
  }\href {http://dx.doi.org/10.1103/PhysRevLett.113.140401} {\bibfield
  {journal} {\bibinfo  {journal} {Phys. Rev. Lett.}\ }\textbf {\bibinfo
  {volume} {113}},\ \bibinfo {pages} {140401} (\bibinfo {year}
  {2014})}\BibitemShut {NoStop}%
\bibitem [{\citenamefont {Winter}\ and\ \citenamefont
  {Yang}(2016)}]{winter_2016}%
  \BibitemOpen
  \bibfield  {author} {\bibinfo {author} {\bibfnamefont {A.}~\bibnamefont
  {Winter}}\ and\ \bibinfo {author} {\bibfnamefont {D.}~\bibnamefont {Yang}},\
  }\bibfield  {title} {\emph {\bibinfo {title} {Operational resource theory of
  coherence},}\ }\href {http://dx.doi.org/10.1103/PhysRevLett.116.120404}
  {\bibfield  {journal} {\bibinfo  {journal} {Phys. Rev. Lett.}\ }\textbf
  {\bibinfo {volume} {116}},\ \bibinfo {pages} {120404} (\bibinfo {year}
  {2016})}\BibitemShut {NoStop}%
\bibitem [{\citenamefont {Streltsov}\ \emph {et~al.}(2017)\citenamefont
  {Streltsov}, \citenamefont {Adesso},\ and\ \citenamefont
  {Plenio}}]{streltsov_2017}%
  \BibitemOpen
  \bibfield  {author} {\bibinfo {author} {\bibfnamefont {A.}~\bibnamefont
  {Streltsov}}, \bibinfo {author} {\bibfnamefont {G.}~\bibnamefont {Adesso}}, \
  and\ \bibinfo {author} {\bibfnamefont {M.~B.}\ \bibnamefont {Plenio}},\
  }\bibfield  {title} {\emph {\bibinfo {title} {Quantum coherence as a
  resource},}\ }\href {http://dx.doi.org/10.1103/RevModPhys.89.041003}
  {\bibfield  {journal} {\bibinfo  {journal} {Rev. Mod. Phys.}\ }\textbf
  {\bibinfo {volume} {89}},\ \bibinfo {pages} {041003} (\bibinfo {year}
  {2017})}\BibitemShut {NoStop}%
\bibitem [{\citenamefont {Plenio}(2005)}]{OpenProblem}%
  \BibitemOpen
  \bibfield  {author} {\bibinfo {author} {\bibfnamefont {M.~B.}\ \bibnamefont
  {Plenio}},\ }\bibfield  {title} {\emph {\bibinfo {title} {Problem~\#20 on the
  list {``Open Quantum Problems''}},}\ }\href
  {http://arxiv.org/abs/quant-ph/0504166} {\Eprint
  {http://arxiv.org/abs/quant-ph/0504166} {arXiv:quant-ph/0504166}  (\bibinfo
  {year} {2005})},\ \bibinfo {note}
  {\url{https://oqp.iqoqi.oeaw.ac.at/reversible-entanglement-manipulation}}\BibitemShut
  {NoStop}%
\bibitem [{\citenamefont {Ganardi}\ \emph {et~al.}(2023)\citenamefont
  {Ganardi}, \citenamefont {Kondra},\ and\ \citenamefont
  {Streltsov}}]{concurrent}%
  \BibitemOpen
  \bibfield  {author} {\bibinfo {author} {\bibfnamefont {R.}~\bibnamefont
  {Ganardi}}, \bibinfo {author} {\bibfnamefont {T.~V.}\ \bibnamefont {Kondra}},
  \ and\ \bibinfo {author} {\bibfnamefont {A.}~\bibnamefont {Streltsov}},\
  }\bibfield  {title} {\emph {\bibinfo {title} {Catalytic and asymptotic
  equivalence for quantum entanglement},}\ }\href@noop {} {\Eprint
  {http://arxiv.org/abs/2305.03488} {arXiv:2305.03488}  (\bibinfo {year}
  {2023})}\BibitemShut {NoStop}%
\bibitem [{\citenamefont {Kumagai}\ and\ \citenamefont
  {Hayashi}(2013)}]{Kumagai2013}%
  \BibitemOpen
  \bibfield  {author} {\bibinfo {author} {\bibfnamefont {W.}~\bibnamefont
  {Kumagai}}\ and\ \bibinfo {author} {\bibfnamefont {M.}~\bibnamefont
  {Hayashi}},\ }\bibfield  {title} {\emph {\bibinfo {title} {Entanglement
  concentration is irreversible},}\ }\href
  {https://link.aps.org/doi/10.1103/PhysRevLett.111.130407} {\bibfield
  {journal} {\bibinfo  {journal} {Phys. Rev. Lett.}\ }\textbf {\bibinfo
  {volume} {111}},\ \bibinfo {pages} {130407} (\bibinfo {year}
  {2013})}\BibitemShut {NoStop}%
\bibitem [{\citenamefont {Tucci}(1999)}]{Tucci1999}%
  \BibitemOpen
  \bibfield  {author} {\bibinfo {author} {\bibfnamefont {R.~R.}\ \bibnamefont
  {Tucci}},\ }\bibfield  {title} {\emph {\bibinfo {title} {Quantum entanglement
  and conditional information transmission},}\ }\href@noop {} {\Eprint
  {http://arxiv.org/abs/quant-ph/9909041} {arXiv:quant-ph/9909041}  (\bibinfo
  {year} {1999})}\BibitemShut {NoStop}%
\bibitem [{\citenamefont {Shirokov}(2016{\natexlab{a}})}]{Shirokov-sq}%
  \BibitemOpen
  \bibfield  {author} {\bibinfo {author} {\bibfnamefont {M.~E.}\ \bibnamefont
  {Shirokov}},\ }\bibfield  {title} {\emph {\bibinfo {title} {Squashed
  entanglement in infinite dimensions},}\ }\href
  {http://dx.doi.org/10.1063/1.4943598} {\bibfield  {journal} {\bibinfo
  {journal} {J. Math. Phys.}\ }\textbf {\bibinfo {volume} {57}},\ \bibinfo
  {pages} {032203} (\bibinfo {year} {2016}{\natexlab{a}})}\BibitemShut
  {NoStop}%
\bibitem [{\citenamefont {Shirokov}(2016{\natexlab{b}})}]{Shirokov2016}%
  \BibitemOpen
  \bibfield  {author} {\bibinfo {author} {\bibfnamefont {M.~E.}\ \bibnamefont
  {Shirokov}},\ }\bibfield  {title} {\emph {\bibinfo {title} {Measures of
  correlations in infinite-dimensional quantum systems},}\ }\href
  {http://dx.doi.org/10.1070/sm8561} {\bibfield  {journal} {\bibinfo  {journal}
  {Mat. Sb.}\ }\textbf {\bibinfo {volume} {207}},\ \bibinfo {pages} {724}
  (\bibinfo {year} {2016}{\natexlab{b}})}\BibitemShut {NoStop}%
\bibitem [{\citenamefont {Alicki}\ and\ \citenamefont
  {Fannes}(2004)}]{Alicki-Fannes}%
  \BibitemOpen
  \bibfield  {author} {\bibinfo {author} {\bibfnamefont {R.}~\bibnamefont
  {Alicki}}\ and\ \bibinfo {author} {\bibfnamefont {M.}~\bibnamefont
  {Fannes}},\ }\bibfield  {title} {\emph {\bibinfo {title} {Continuity of
  quantum conditional information},}\ }\href
  {http://dx.doi.org/10.1088/0305-4470/37/5/L01} {\bibfield  {journal}
  {\bibinfo  {journal} {J. Phys. A}\ }\textbf {\bibinfo {volume} {37}},\
  \bibinfo {pages} {L55} (\bibinfo {year} {2004})}\BibitemShut {NoStop}%
\bibitem [{\citenamefont {Winter}(2016)}]{winter_2016-1}%
  \BibitemOpen
  \bibfield  {author} {\bibinfo {author} {\bibfnamefont {A.}~\bibnamefont
  {Winter}},\ }\bibfield  {title} {\emph {\bibinfo {title} {Tight uniform
  continuity bounds for quantum entropies: Conditional entropy, relative
  entropy distance and energy constraints},}\ }\href
  {http://dx.doi.org/10.1007/s00220-016-2609-8} {\bibfield  {journal} {\bibinfo
   {journal} {Commun. Math. Phys.}\ }\textbf {\bibinfo {volume} {347}},\
  \bibinfo {pages} {291} (\bibinfo {year} {2016})}\BibitemShut {NoStop}%
\bibitem [{\citenamefont {Brand{\~a}o}\ and\ \citenamefont
  {Plenio}(2010{\natexlab{b}})}]{Brandao2010}%
  \BibitemOpen
  \bibfield  {author} {\bibinfo {author} {\bibfnamefont {F.~G. S.~L.}\
  \bibnamefont {Brand{\~a}o}}\ and\ \bibinfo {author} {\bibfnamefont {M.~B.}\
  \bibnamefont {Plenio}},\ }\bibfield  {title} {\emph {\bibinfo {title} {A
  generalization of quantum {S}tein's lemma},}\ }\href
  {http://dx.doi.org/10.1007/s00220-010-1005-z} {\bibfield  {journal} {\bibinfo
   {journal} {Commun. Math. Phys.}\ }\textbf {\bibinfo {volume} {295}},\
  \bibinfo {pages} {791} (\bibinfo {year} {2010}{\natexlab{b}})}\BibitemShut
  {NoStop}%
\bibitem [{\citenamefont {Lancien}\ and\ \citenamefont
  {Winter}(2017)}]{Lancien2017}%
  \BibitemOpen
  \bibfield  {author} {\bibinfo {author} {\bibfnamefont {C.}~\bibnamefont
  {Lancien}}\ and\ \bibinfo {author} {\bibfnamefont {A.}~\bibnamefont
  {Winter}},\ }\bibfield  {title} {\emph {\bibinfo {title} {{Flexible
  constrained de Finetti reductions and applications}},}\ }\href
  {http://dx.doi.org/10.1063/1.5003633} {\bibfield  {journal} {\bibinfo
  {journal} {J. Math. Phys.}\ }\textbf {\bibinfo {volume} {58}} (\bibinfo
  {year} {2017}),\ 10.1063/1.5003633},\ \bibinfo {note} {092203}\BibitemShut
  {NoStop}%
\bibitem [{\citenamefont {Ferrari}\ \emph {et~al.}(2022)\citenamefont
  {Ferrari}, \citenamefont {Lami}, \citenamefont {Theurer},\ and\ \citenamefont
  {Plenio}}]{ferrari_2022}%
  \BibitemOpen
  \bibfield  {author} {\bibinfo {author} {\bibfnamefont {G.}~\bibnamefont
  {Ferrari}}, \bibinfo {author} {\bibfnamefont {L.}~\bibnamefont {Lami}},
  \bibinfo {author} {\bibfnamefont {T.}~\bibnamefont {Theurer}}, \ and\
  \bibinfo {author} {\bibfnamefont {M.~B.}\ \bibnamefont {Plenio}},\ }\bibfield
   {title} {\emph {\bibinfo {title} {Asymptotic state transformations of
  continuous variable resources},}\ }\href
  {http://dx.doi.org/10.1007/s00220-022-04523-6} {\bibfield  {journal}
  {\bibinfo  {journal} {Commun. Math. Phys.}\ }\textbf {\bibinfo {volume}
  {398}},\ \bibinfo {pages} {291} (\bibinfo {year} {2022})}\BibitemShut
  {NoStop}%
\bibitem [{\citenamefont {Lami}(2020)}]{lami_2020-1}%
  \BibitemOpen
  \bibfield  {author} {\bibinfo {author} {\bibfnamefont {L.}~\bibnamefont
  {Lami}},\ }\bibfield  {title} {\emph {\bibinfo {title} {Completing the
  {{Grand Tour}} of {{Asymptotic Quantum Coherence Manipulation}}},}\ }\href
  {http://dx.doi.org/10.1109/TIT.2019.2945798} {\bibfield  {journal} {\bibinfo
  {journal} {IEEE Trans. Inf. Theory}\ }\textbf {\bibinfo {volume} {66}},\
  \bibinfo {pages} {2165} (\bibinfo {year} {2020})}\BibitemShut {NoStop}%
\bibitem [{\citenamefont {Brand{\~a}o}\ \emph {et~al.}(2020)\citenamefont
  {Brand{\~a}o}, \citenamefont {Harrow}, \citenamefont {Lee},\ and\
  \citenamefont {Peres}}]{brandao_2020}%
  \BibitemOpen
  \bibfield  {author} {\bibinfo {author} {\bibfnamefont {F.~G. S.~L.}\
  \bibnamefont {Brand{\~a}o}}, \bibinfo {author} {\bibfnamefont {A.~W.}\
  \bibnamefont {Harrow}}, \bibinfo {author} {\bibfnamefont {J.~R.}\
  \bibnamefont {Lee}}, \ and\ \bibinfo {author} {\bibfnamefont
  {Y.}~\bibnamefont {Peres}},\ }\bibfield  {title} {\emph {\bibinfo {title}
  {Adversarial hypothesis testing and a quantum {{Stein}}'s lemma for
  restricted measurements},}\ }\href
  {http://dx.doi.org/10.1109/TIT.2020.2979704} {\bibfield  {journal} {\bibinfo
  {journal} {IEEE Trans. Inf. Theory}\ }\textbf {\bibinfo {volume} {66}},\
  \bibinfo {pages} {5037} (\bibinfo {year} {2020})}\BibitemShut {NoStop}%
\bibitem [{\citenamefont {Lindblad}(1975)}]{Lindblad-monotonicity}%
  \BibitemOpen
  \bibfield  {author} {\bibinfo {author} {\bibfnamefont {G.}~\bibnamefont
  {Lindblad}},\ }\bibfield  {title} {\emph {\bibinfo {title} {Completely
  positive maps and entropy inequalities},}\ }\href
  {http://dx.doi.org/10.1007/BF01609396} {\bibfield  {journal} {\bibinfo
  {journal} {Commun. Math. Phys.}\ }\textbf {\bibinfo {volume} {40}},\ \bibinfo
  {pages} {147} (\bibinfo {year} {1975})}\BibitemShut {NoStop}%
\bibitem [{\citenamefont {Lieb}\ and\ \citenamefont
  {Ruskai}(1973{\natexlab{a}})}]{lieb73a}%
  \BibitemOpen
  \bibfield  {author} {\bibinfo {author} {\bibfnamefont {E.~H.}\ \bibnamefont
  {Lieb}}\ and\ \bibinfo {author} {\bibfnamefont {M.~B.}\ \bibnamefont
  {Ruskai}},\ }\bibfield  {title} {\emph {\bibinfo {title} {A fundamental
  property of quantum-mechanical entropy},}\ }\href
  {http://dx.doi.org/10.1103/PhysRevLett.30.434} {\bibfield  {journal}
  {\bibinfo  {journal} {Phys. Rev. Lett.}\ }\textbf {\bibinfo {volume} {30}},\
  \bibinfo {pages} {434} (\bibinfo {year} {1973}{\natexlab{a}})}\BibitemShut
  {NoStop}%
\bibitem [{\citenamefont {Lieb}\ and\ \citenamefont
  {Ruskai}(1973{\natexlab{b}})}]{lieb73b}%
  \BibitemOpen
  \bibfield  {author} {\bibinfo {author} {\bibfnamefont {E.~H.}\ \bibnamefont
  {Lieb}}\ and\ \bibinfo {author} {\bibfnamefont {M.~B.}\ \bibnamefont
  {Ruskai}},\ }\bibfield  {title} {\emph {\bibinfo {title} {Proof of the strong
  subadditivity of quantum mechanical entropy},}\ }\href
  {http://dx.doi.org/10.1063/1.1666274} {\bibfield  {journal} {\bibinfo
  {journal} {J. Math. Phys.}\ }\textbf {\bibinfo {volume} {14}},\ \bibinfo
  {pages} {1938} (\bibinfo {year} {1973}{\natexlab{b}})}\BibitemShut {NoStop}%
\bibitem [{\citenamefont {Pinsker}(1964)}]{Pinsker}%
  \BibitemOpen
  \bibfield  {author} {\bibinfo {author} {\bibfnamefont {M.~S.}\ \bibnamefont
  {Pinsker}},\ }\href@noop {} {\emph {\bibinfo {title} {Information and
  Information Stability of Random Variables and Processes}}},\ edited by\
  \bibinfo {editor} {\bibfnamefont {A.}~\bibnamefont {Feinstein}},\ Holden-Day
  series in time series analysis\ (\bibinfo  {publisher} {Holden-Day},\
  \bibinfo {year} {1964})\BibitemShut {NoStop}%
\bibitem [{\citenamefont {Matthews}\ \emph {et~al.}(2009)\citenamefont
  {Matthews}, \citenamefont {Wehner},\ and\ \citenamefont {Winter}}]{VV-dh}%
  \BibitemOpen
  \bibfield  {author} {\bibinfo {author} {\bibfnamefont {W.}~\bibnamefont
  {Matthews}}, \bibinfo {author} {\bibfnamefont {S.}~\bibnamefont {Wehner}}, \
  and\ \bibinfo {author} {\bibfnamefont {A.}~\bibnamefont {Winter}},\
  }\bibfield  {title} {\emph {\bibinfo {title} {Distinguishability of quantum
  states under restricted families of measurements with an application to
  quantum data hiding},}\ }\href {http://dx.doi.org/10.1007/s00220-009-0890-5}
  {\bibfield  {journal} {\bibinfo  {journal} {Commun. Math. Phys.}\ }\textbf
  {\bibinfo {volume} {291}},\ \bibinfo {pages} {813} (\bibinfo {year}
  {2009})}\BibitemShut {NoStop}%
\bibitem [{\citenamefont {Lami}\ \emph {et~al.}(2018)\citenamefont {Lami},
  \citenamefont {Palazuelos},\ and\ \citenamefont {Winter}}]{ultimate}%
  \BibitemOpen
  \bibfield  {author} {\bibinfo {author} {\bibfnamefont {L.}~\bibnamefont
  {Lami}}, \bibinfo {author} {\bibfnamefont {C.}~\bibnamefont {Palazuelos}}, \
  and\ \bibinfo {author} {\bibfnamefont {A.}~\bibnamefont {Winter}},\
  }\bibfield  {title} {\emph {\bibinfo {title} {Ultimate data hiding in quantum
  mechanics and beyond},}\ }\href {http://dx.doi.org/10.1007/s00220-018-3154-4}
  {\bibfield  {journal} {\bibinfo  {journal} {Commun. Math. Phys.}\ }\textbf
  {\bibinfo {volume} {361}},\ \bibinfo {pages} {661} (\bibinfo {year}
  {2018})}\BibitemShut {NoStop}%
\bibitem [{\citenamefont {Fekete}(1923)}]{Fekete1923}%
  \BibitemOpen
  \bibfield  {author} {\bibinfo {author} {\bibfnamefont {M.}~\bibnamefont
  {Fekete}},\ }\bibfield  {title} {\emph {\bibinfo {title} {{{\"U}ber die
  Verteilung der Wurzeln bei gewissen algebraischen Gleichungen mit
  ganzzahligen Koeffizienten}},}\ }\href {http://dx.doi.org/10.1007/BF01504345}
  {\bibfield  {journal} {\bibinfo  {journal} {Math. Z.}\ }\textbf {\bibinfo
  {volume} {17}},\ \bibinfo {pages} {228} (\bibinfo {year} {1923})}\BibitemShut
  {NoStop}%
\bibitem [{\citenamefont {Lami}\ and\ \citenamefont
  {Regula}(2023{\natexlab{b}})}]{DNE-distillable}%
  \BibitemOpen
  \bibfield  {author} {\bibinfo {author} {\bibfnamefont {L.}~\bibnamefont
  {Lami}}\ and\ \bibinfo {author} {\bibfnamefont {B.}~\bibnamefont {Regula}},\
  }\bibfield  {title} {\emph {\bibinfo {title} {Distillable entanglement under
  dually non-entangling operations},}\ }\href@noop {} {\Eprint
  {http://arxiv.org/abs/2307.11008} {arxiv:2307.11008}  (\bibinfo {year}
  {2023}{\natexlab{b}})}\BibitemShut {NoStop}%
\bibitem [{\citenamefont {Shiraishi}\ and\ \citenamefont
  {Takagi}(2023)}]{Shiraishi2023}%
  \BibitemOpen
  \bibfield  {author} {\bibinfo {author} {\bibfnamefont {N.}~\bibnamefont
  {Shiraishi}}\ and\ \bibinfo {author} {\bibfnamefont {R.}~\bibnamefont
  {Takagi}},\ }\bibfield  {title} {\emph {\bibinfo {title} {Alchemy of quantum
  coherence: Arbitrary amplification in asymptotic and catalytic coherence
  manipulation},}\ }\href {http://arxiv.org/abs/2308.12338} {\Eprint
  {http://arxiv.org/abs/2308.12338} {arXiv:2308.12338}  (\bibinfo {year}
  {2023})}\BibitemShut {NoStop}%
\bibitem [{\citenamefont {Devetak}\ and\ \citenamefont
  {Winter}(2005)}]{devetak2005}%
  \BibitemOpen
  \bibfield  {author} {\bibinfo {author} {\bibfnamefont {I.}~\bibnamefont
  {Devetak}}\ and\ \bibinfo {author} {\bibfnamefont {A.}~\bibnamefont
  {Winter}},\ }\bibfield  {title} {\emph {\bibinfo {title} {Distillation of
  secret key and entanglement from quantum states},}\ }\href
  {http://dx.doi.org/10.1098/rspa.2004.1372} {\bibfield  {journal} {\bibinfo
  {journal} {Proc. Royal Soc. A}\ }\textbf {\bibinfo {volume} {461}},\ \bibinfo
  {pages} {207} (\bibinfo {year} {2005})}\BibitemShut {NoStop}%
\bibitem [{\citenamefont {Eggeling}\ \emph {et~al.}(2001)\citenamefont
  {Eggeling}, \citenamefont {Vollbrecht}, \citenamefont {Werner},\ and\
  \citenamefont {Wolf}}]{Eggeling2001}%
  \BibitemOpen
  \bibfield  {author} {\bibinfo {author} {\bibfnamefont {T.}~\bibnamefont
  {Eggeling}}, \bibinfo {author} {\bibfnamefont {K.~G.~H.}\ \bibnamefont
  {Vollbrecht}}, \bibinfo {author} {\bibfnamefont {R.~F.}\ \bibnamefont
  {Werner}}, \ and\ \bibinfo {author} {\bibfnamefont {M.~M.}\ \bibnamefont
  {Wolf}},\ }\bibfield  {title} {\emph {\bibinfo {title} {Distillability via
  protocols respecting the positivity of partial transpose},}\ }\href
  {http://dx.doi.org/10.1103/PhysRevLett.87.257902} {\bibfield  {journal}
  {\bibinfo  {journal} {Phys. Rev. Lett.}\ }\textbf {\bibinfo {volume} {87}},\
  \bibinfo {pages} {257902} (\bibinfo {year} {2001})}\BibitemShut {NoStop}%
\bibitem [{\citenamefont {Yadin}\ \emph {et~al.}(2016)\citenamefont {Yadin},
  \citenamefont {Ma}, \citenamefont {Girolami}, \citenamefont {Gu},\ and\
  \citenamefont {Vedral}}]{yadin_2016}%
  \BibitemOpen
  \bibfield  {author} {\bibinfo {author} {\bibfnamefont {B.}~\bibnamefont
  {Yadin}}, \bibinfo {author} {\bibfnamefont {J.}~\bibnamefont {Ma}}, \bibinfo
  {author} {\bibfnamefont {D.}~\bibnamefont {Girolami}}, \bibinfo {author}
  {\bibfnamefont {M.}~\bibnamefont {Gu}}, \ and\ \bibinfo {author}
  {\bibfnamefont {V.}~\bibnamefont {Vedral}},\ }\bibfield  {title} {\emph
  {\bibinfo {title} {Quantum processes which do not use coherence},}\ }\href
  {http://dx.doi.org/10.1103/PhysRevX.6.041028} {\bibfield  {journal} {\bibinfo
   {journal} {Phys. Rev. X}\ }\textbf {\bibinfo {volume} {6}},\ \bibinfo
  {pages} {041028} (\bibinfo {year} {2016})}\BibitemShut {NoStop}%
\bibitem [{\citenamefont {Zhao}\ \emph {et~al.}(2018)\citenamefont {Zhao},
  \citenamefont {Liu}, \citenamefont {Yuan}, \citenamefont {Chitambar},\ and\
  \citenamefont {Ma}}]{zhao_2018}%
  \BibitemOpen
  \bibfield  {author} {\bibinfo {author} {\bibfnamefont {Q.}~\bibnamefont
  {Zhao}}, \bibinfo {author} {\bibfnamefont {Y.}~\bibnamefont {Liu}}, \bibinfo
  {author} {\bibfnamefont {X.}~\bibnamefont {Yuan}}, \bibinfo {author}
  {\bibfnamefont {E.}~\bibnamefont {Chitambar}}, \ and\ \bibinfo {author}
  {\bibfnamefont {X.}~\bibnamefont {Ma}},\ }\bibfield  {title} {\emph {\bibinfo
  {title} {One-shot coherence dilution},}\ }\href
  {http://dx.doi.org/10.1103/PhysRevLett.120.070403} {\bibfield  {journal}
  {\bibinfo  {journal} {Phys. Rev. Lett.}\ }\textbf {\bibinfo {volume} {120}},\
  \bibinfo {pages} {070403} (\bibinfo {year} {2018})}\BibitemShut {NoStop}%
\bibitem [{\citenamefont {Lami}\ \emph {et~al.}(2019)\citenamefont {Lami},
  \citenamefont {Regula},\ and\ \citenamefont {Adesso}}]{lami_2019-1}%
  \BibitemOpen
  \bibfield  {author} {\bibinfo {author} {\bibfnamefont {L.}~\bibnamefont
  {Lami}}, \bibinfo {author} {\bibfnamefont {B.}~\bibnamefont {Regula}}, \ and\
  \bibinfo {author} {\bibfnamefont {G.}~\bibnamefont {Adesso}},\ }\bibfield
  {title} {\emph {\bibinfo {title} {Generic bound coherence under strictly
  incoherent operations},}\ }\href
  {http://dx.doi.org/10.1103/PhysRevLett.122.150402} {\bibfield  {journal}
  {\bibinfo  {journal} {Phys. Rev. Lett.}\ }\textbf {\bibinfo {volume} {122}},\
  \bibinfo {pages} {150402} (\bibinfo {year} {2019})}\BibitemShut {NoStop}%
\bibitem [{\citenamefont {Chitambar}\ and\ \citenamefont
  {Gour}(2016{\natexlab{a}})}]{chitambar_2016}%
  \BibitemOpen
  \bibfield  {author} {\bibinfo {author} {\bibfnamefont {E.}~\bibnamefont
  {Chitambar}}\ and\ \bibinfo {author} {\bibfnamefont {G.}~\bibnamefont
  {Gour}},\ }\bibfield  {title} {\emph {\bibinfo {title} {Critical examination
  of incoherent operations and a physically consistent resource theory of
  quantum coherence},}\ }\href
  {http://dx.doi.org/10.1103/PhysRevLett.117.030401} {\bibfield  {journal}
  {\bibinfo  {journal} {Phys. Rev. Lett.}\ }\textbf {\bibinfo {volume} {117}},\
  \bibinfo {pages} {030401} (\bibinfo {year} {2016}{\natexlab{a}})}\BibitemShut
  {NoStop}%
\bibitem [{\citenamefont {Chitambar}\ and\ \citenamefont
  {Gour}(2016{\natexlab{b}})}]{chitambar_2016-1}%
  \BibitemOpen
  \bibfield  {author} {\bibinfo {author} {\bibfnamefont {E.}~\bibnamefont
  {Chitambar}}\ and\ \bibinfo {author} {\bibfnamefont {G.}~\bibnamefont
  {Gour}},\ }\bibfield  {title} {\emph {\bibinfo {title} {Comparison of
  incoherent operations and measures of coherence},}\ }\href
  {http://dx.doi.org/10.1103/PhysRevA.94.052336} {\bibfield  {journal}
  {\bibinfo  {journal} {Phys. Rev. A}\ }\textbf {\bibinfo {volume} {94}},\
  \bibinfo {pages} {052336} (\bibinfo {year} {2016}{\natexlab{b}})}\BibitemShut
  {NoStop}%
\bibitem [{\citenamefont {Marvian}\ and\ \citenamefont
  {Spekkens}(2016)}]{marvian_2016}%
  \BibitemOpen
  \bibfield  {author} {\bibinfo {author} {\bibfnamefont {I.}~\bibnamefont
  {Marvian}}\ and\ \bibinfo {author} {\bibfnamefont {R.~W.}\ \bibnamefont
  {Spekkens}},\ }\bibfield  {title} {\emph {\bibinfo {title} {How to quantify
  coherence: {{Distinguishing}} speakable and unspeakable notions},}\ }\href
  {http://dx.doi.org/10.1103/PhysRevA.94.052324} {\bibfield  {journal}
  {\bibinfo  {journal} {Phys. Rev. A}\ }\textbf {\bibinfo {volume} {94}},\
  \bibinfo {pages} {052324} (\bibinfo {year} {2016})}\BibitemShut {NoStop}%
\bibitem [{\citenamefont {Chitambar}(2018)}]{chitambar_2018}%
  \BibitemOpen
  \bibfield  {author} {\bibinfo {author} {\bibfnamefont {E.}~\bibnamefont
  {Chitambar}},\ }\bibfield  {title} {\emph {\bibinfo {title}
  {Dephasing-covariant operations enable asymptotic reversibility of quantum
  resources},}\ }\href {http://dx.doi.org/10.1103/PhysRevA.97.050301}
  {\bibfield  {journal} {\bibinfo  {journal} {Phys. Rev. A}\ }\textbf {\bibinfo
  {volume} {97}},\ \bibinfo {pages} {050301} (\bibinfo {year}
  {2018})}\BibitemShut {NoStop}%
\bibitem [{\citenamefont {Regula}\ \emph {et~al.}(2018)\citenamefont {Regula},
  \citenamefont {Fang}, \citenamefont {Wang},\ and\ \citenamefont
  {Adesso}}]{regula_2017}%
  \BibitemOpen
  \bibfield  {author} {\bibinfo {author} {\bibfnamefont {B.}~\bibnamefont
  {Regula}}, \bibinfo {author} {\bibfnamefont {K.}~\bibnamefont {Fang}},
  \bibinfo {author} {\bibfnamefont {X.}~\bibnamefont {Wang}}, \ and\ \bibinfo
  {author} {\bibfnamefont {G.}~\bibnamefont {Adesso}},\ }\bibfield  {title}
  {\emph {\bibinfo {title} {One-shot coherence distillation},}\ }\href
  {http://dx.doi.org/10.1103/PhysRevLett.121.010401} {\bibfield  {journal}
  {\bibinfo  {journal} {Phys. Rev. Lett.}\ }\textbf {\bibinfo {volume} {121}},\
  \bibinfo {pages} {010401} (\bibinfo {year} {2018})}\BibitemShut {NoStop}%
\bibitem [{\citenamefont {{de Vicente}}\ and\ \citenamefont
  {Streltsov}(2017)}]{vicente_2017}%
  \BibitemOpen
  \bibfield  {author} {\bibinfo {author} {\bibfnamefont {J.~I.}\ \bibnamefont
  {{de Vicente}}}\ and\ \bibinfo {author} {\bibfnamefont {A.}~\bibnamefont
  {Streltsov}},\ }\bibfield  {title} {\emph {\bibinfo {title} {Genuine quantum
  coherence},}\ }\href {http://dx.doi.org/10.1088/1751-8121/50/4/045301}
  {\bibfield  {journal} {\bibinfo  {journal} {J. Phys. A: Math. Theor.}\
  }\textbf {\bibinfo {volume} {50}},\ \bibinfo {pages} {045301} (\bibinfo
  {year} {2017})}\BibitemShut {NoStop}%
\bibitem [{\citenamefont {Liu}\ \emph {et~al.}(2018)\citenamefont {Liu},
  \citenamefont {Ding},\ and\ \citenamefont {Tong}}]{Liu2018}%
  \BibitemOpen
  \bibfield  {author} {\bibinfo {author} {\bibfnamefont {C.~L.}\ \bibnamefont
  {Liu}}, \bibinfo {author} {\bibfnamefont {Q.-M.}\ \bibnamefont {Ding}}, \
  and\ \bibinfo {author} {\bibfnamefont {D.~M.}\ \bibnamefont {Tong}},\
  }\bibfield  {title} {\emph {\bibinfo {title} {Superadditivity of convex roof
  coherence measures},}\ }\href {http://dx.doi.org/10.1088/1751-8121/aab64e}
  {\bibfield  {journal} {\bibinfo  {journal} {J. Phys. A}\ }\textbf {\bibinfo
  {volume} {51}},\ \bibinfo {pages} {414012} (\bibinfo {year}
  {2018})}\BibitemShut {NoStop}%
\bibitem [{\citenamefont {Nielsen}\ and\ \citenamefont
  {Chuang}(2011)}]{nielsen_2011}%
  \BibitemOpen
  \bibfield  {author} {\bibinfo {author} {\bibfnamefont {M.~A.}\ \bibnamefont
  {Nielsen}}\ and\ \bibinfo {author} {\bibfnamefont {I.~L.}\ \bibnamefont
  {Chuang}},\ }\href@noop {} {\emph {\bibinfo {title} {Quantum {{Computation}}
  and {{Quantum Information}}}}},\ \bibinfo {edition} {tenth}\ ed.\ (\bibinfo
  {publisher} {{Cambridge University Press}},\ \bibinfo {address} {{New York,
  NY, USA}},\ \bibinfo {year} {2011})\BibitemShut {NoStop}%
\end{thebibliography}%

\clearpage

\onecolumngrid

\newgeometry{left=1.2in,right=1.2in,top=.7in,bottom=1in}


\stepcounter{part}
\counterwithin{thm}{part}
\counterwithin{equation}{part}

\renewcommand{\theequation}{S\arabic{equation}}
\renewcommand{\thethm}{S\arabic{thm}}

\hypertarget{supp}{}

\section*{Supplemental Material}

\tableofcontents

\subsection{Catalytic transformations and monotones} \label{intro_sec}

\begin{Def} \label{catalysts_def}
Let $\FF\supseteq \locc$ be a class of free operations for entanglement theory. Given two bipartite states $\rho_{AB}$ and $\omega_{A'B'}$, we say that the transformation from $\rho_{AB}$ to $\omega_{A'B'}$ is:
\begin{enumerate}[(i)]
\item possible via operations in $\FF$, and we write $\rho_{AB} \overset{\FF}{\longrightarrow} \omega_{A'B'}$, if there exists an operation $\Lambda\in \FF\left(A:B\to A':B'\right)$ such that
\bb
\Lambda\left(\rho_{AB} \right) = \omega_{A'B'}\, ;
\ee
\item possible via operations in $\FF$ assisted by catalysts,  
and we write $\rho_{AB} \overset{\FF^{c}}{\longrightarrow} \omega_{A'B'}$, if there exists a finite-dimensional state $\tau_{CD}$ and an operation $\Lambda\in \FF\left(AC:BD\to A'C:B'D\right)$ such that
\begin{align}
\Lambda\left(\rho_{AB} \otimes \tau_{CD}\right) = \omega_{A'B'} \otimes \tau_{CD}\, ;
\end{align}
\item possible via operations in $\FF$ assisted by correlated catalysts, 
and we write $\rho_{AB} \overset{\FF^{cc}}{\longrightarrow} \omega_{A'B'}$, if there exists a finite-dimensional state $\tau_{CD}$ and an operation $\Lambda\in \FF\left(AC:BD\to A'C:B'D\right)$ such that
\bb
\Tr_{CD} \Lambda\left(\rho_{AB} \otimes \tau_{CD}\right) = \omega_{A'B'}\, ,\qquad \Tr_{A'B'} \Lambda\left(\rho_{AB} \otimes \tau_{CD}\right) = \tau_{CD}\, ;
\ee
\item possible via operations in $\FF$ assisted by marginal catalysts, 
and we write $\rho_{AB} \overset{\FF^{mc}}{\longrightarrow} \omega_{A'B'}$, if there exists a finite collection of finite-dimensional states $\tau_{C_jD_j}$, $j=1,\ldots, k$, and an operation $\Lambda\in \FF\left(AC:BD\to A'C:B'D\right)$, where $C\coloneqq C_1\ldots C_k$ and $D\coloneqq D_1\ldots D_k$, such that
\bb
\Lambda\left(\rho_{AB} \otimes \tau_{CD}\right) = \omega_{A'B'}\! \otimes \tau'_{CD}\, , \quad \Tr_{\myhat{C}_j \myhat{D}_j} \!\tau'_{CD} = \tau_{C_jD_j}\ \ \forall\ j=1,\ldots, k\, ,
\ee
where $\tau_{CD}\coloneqq \bigotimes_j \tau_{C_jD_j}$ is the initial state of the overall catalyst, and $\myhat{C}_j$ (resp.\ $\myhat{D}_j$) denotes the system obtained by tracing away all $C$ systems (resp.\ all $D$ systems) except from the $j^\text{th}$ one.
\item possible via operations in $\FF$ assisted by marginal correlated catalysts, 
and we write $\rho_{AB} \overset{\FF^{mcc}}{\longrightarrow} \omega_{A'B'}$, if there exists a finite collection of finite-dimensional states $\tau_{C_jD_j}$, $j=1,\ldots, k$, and an operation $\Lambda\in \FF\left(AC:BD\to A'C:B'D\right)$, where $C\coloneqq C_1\ldots C_k$ and $D\coloneqq D_1\ldots D_k$, such that
\bb
\Tr_{CD} \Lambda\left(\rho_{AB} \otimes \tau_{CD}\right) = \omega_{A'B'}\, ,\qquad \Tr_{A'B' \myhat{C}_j \myhat{D}_j} \Lambda\left(\rho_{AB} \otimes \tau_{CD}\right) = \tau_{C_j D_j } \quad \forall\ j=1,\ldots, k\, ,
\ee
where $\tau_{CD}\coloneqq \bigotimes_j \tau_{C_jD_j}$, and $\myhat{C}_j$ (resp.\ $\myhat{D}_j$) denotes the system obtained by tracing away all $C$ systems (resp.\ all $D$ systems) except from the $j^\text{th}$ one.
\end{enumerate}
For $\widetilde{\FF}\in \left\{\FF, \FF^c, \FF^{cc}, \FF^{mc}, \FF^{mcc}\right\}$ and $\e\in [0,1]$, we write $\rho_{AB} \overset{\widetilde{\FF}}{\longrightarrow}\, \approx_\e\! \omega_{A'B'}$ if there exists a state $\omega'_{A'B'}$ such that
\bb
\rho_{AB} \overset{\widetilde{\FF}}{\longrightarrow} \omega'_{A'B'}\, ,\qquad \frac12\left\| \omega'_{A'B'} - \omega_{A'B'} \right\|_1\leq \e\, .
\ee
For every pair of states $\rho_{AB}$ and $\omega_{A'B'}$, the corresponding asymptotic rate is given by
\bb
R_{\widetilde{\FF}}\left( \rho_{AB} \to \omega_{A'B'} \right) \coloneqq \sup\left\{ R:\ \rho_{AB}^{\otimes n} \overset{\widetilde{\FF}}{\longrightarrow}\, \approx_{\e_n}\! \omega_{A'B'}^{\otimes \ceil{Rn}},\ \lim_{n\to\infty} \e_n = 0 \right\} .
\ee
The distillable entanglement and the entanglement cost under operations in $\widetilde{\FF}$ are defined by
\bb
E_{d,\, \widetilde{\FF}}\left(\rho\right) \coloneqq R_{\widetilde{\FF}}\left( \rho \to \Phi_2 \right) ,\qquad E_{c,\, \widetilde{\FF}}\left(\rho\right) \coloneqq \frac{1}{R_{\widetilde{\FF}}\left( \Phi_2\to \rho\right)}\, .
\ee
\end{Def}

\begin{rem}
The requirement that the catalyst be finite dimensional is of a purely technical nature. Although it is quite reasonable, it may be desirable to remove it. We do not know whether or not this can be done with the techniques employed here.
\end{rem}

\begin{rem}
    In this work, we concern ourselves with transformations at the level of asymptotic rates, which is how many-copy entanglement conversion is typically studied in the literature~\cite{bennett_1996-1,horodecki_2009} and which represents a very general description of the limitations of state transformations. There are also other possible ways of describing quantum state manipulation: for example, under a more stringent definition where second-order terms in the transformation error are considered, reversibility may be even harder to achieve~\cite{Kumagai2013}. We will not explicitly study such transformations here.
\end{rem}

\begin{rem}
Clearly, for all pairs of states $\rho = \rho_{AB}$ and $\omega = \omega_{A'B'}$, it holds that
\bb
R_{\FF}\left( \rho \to \omega \right) \leq R_{\FF^c}\left( \rho \to \omega \right) \leq R_{\FF^{cc}}\left( \rho \to \omega \right) ,\ R_{\FF^{mc}}\left( \rho \to \omega \right) \leq R_{\FF^{mcc}}\left( \rho \to \omega \right) ,
\label{hierarchy_rates}
\ee
where $x\leq y,z\leq w$ means that both $x\leq y\leq w$ and $x\leq z\leq w$ hold. There does not seem to exist a universal order relation between the rates under correlated catalytic transformations ($R_{\FF^{cc}}$) and those under marginal catalytic transformations ($R_{\FF^{mc}}$). As particular cases of~\eqref{hierarchy_rates} we find that
\bb
E_{d,\, \FF}\left(\rho\right) &\leq E_{d,\, \FF^{c}}\left(\rho\right) \leq E_{d,\, \FF^{cc}}\left(\rho\right),\ E_{d,\, \FF^{mc}}\left(\rho\right) \leq E_{d,\, \FF^{mcc}}\left(\rho\right) , \\[.5ex]
E_{c,\, \FF}\left(\rho\right) &\geq E_{c,\, \FF^{c}}\left(\rho\right) \geq E_{c,\, \FF^{cc}}\left(\rho\right),\ E_{c,\, \FF^{mc}}\left(\rho\right) \geq E_{c,\, \FF^{mcc}}\left(\rho\right) .
\label{hierarchy_distillable_and_cost}
\ee
\end{rem}

Let $M$ be an $\FF$-entanglement monotone, i.e.,  a non-negative function defined on all (finite-dimensional) bipartite states that is non-increasing under operations in $\FF$, in formula
\bb
M\left(\Lambda(\rho_{AB})\right) \leq M(\rho_{AB})\qquad \forall\ \Lambda \in \FF(AB\to A'B')\, .
\ee
We say that $M$ is:
\begin{enumerate}[(a)]
\item \emph{quasi-normalized}, if $M(\Phi_d) = \log d + o(\log d)$ as $d\to\infty$, where $\Phi_d$ is the maximally entangled state of local dimension $d$, namely $\Phi_d = \ketbra{\Phi_d}$ with $\ket{\Phi_d} \coloneqq \frac{1}{\sqrt{d}} \sum_{i=1}^d \ket{ii}$;
\item[(a')] \emph{normalized}, if $M(\Phi_d) = \log d$ for all $d\in \N_+$;
\item \emph{faithful}, if $M(\rho_{AB})=0$ if and only if the state $\rho_{AB}$ is separable;
\item \emph{asymptotically continuous}, if there exist two universal functions $f,g:[0,1]\to \R$ with $f(0) = g(0) = \lim_{\e\to 0} f(\e) = \lim_{\e\to 0} g(\e) = 0$ such that 
\bb
\left| M\big(\rho_{AB}\big) - M\big(\omega_{AB}\big) \right| \leq f(\e) \log d_{AB} + g(\e)
\ee
holds for all pairs of states $\rho_{AB}, \omega_{AB}$ at trace distance $\e\coloneqq \frac12 \left\|\rho_{AB} - \omega_{AB}\right\|_1$ on systems $AB$ of total dimension $d_{AB}\coloneqq |A||B|$;
\item \emph{weakly additive}, if $M\big(\rho_{AB}^{\otimes n}\big) = n M(\rho_{AB})$ holds for all bipartite states $\rho_{AB}$ and all positive integers $n$;
\item \emph{completely additive}, or \emph{tensor additive}, if $M\big(\rho_{AB}\otimes \omega_{A'B'}\big) = M(\rho_{AB}) + M(\omega_{A'B'})$ for all pairs of states $\rho_{AB}$ and $\omega_{A'B'}$, where the bipartition is with respect to the cut $AA':BB'$; and finally
\item \emph{strongly super-additive}, if $M\big( \rho_{AA':BB'} \big) \geq M(\rho_{A:B}) + M(\rho_{A':B'})$ for all states $\rho_{AA':BB'}$ on four parties, where $\rho_{A:B}$ and $\rho_{A':B'}$ are the reductions of $\rho_{AA':BB'}$, and we used colons to highlight the Alice/Bob bipartition.
\end{enumerate}
In what follows, we will always posit that $\FF$ contains the set of LOCC operations, in formula $\FF\supseteq \locc$. Since the maximally entangled state can be converted to any other state via LOCC, under this assumption quasi-normalization implies that $M(\rho_{AB}) \leq \log d + o(\log d)$ for all finite-dimensional states $\rho_{AB}$ of local dimension $d = \min\{|A|, |B|\}$.

The proof of the lemma below is somewhat standard. The interested reader can find it in Appendix~\ref{general_bounds_proof_app}. Before we state the result, we remind the reader that a function $f:\D(\HH) \to \R$ on the set of density operators $\D(\HH)$ on the Hilbert space $\HH$ is called lower semi-continuous if for all sequences $(\rho_n)_n$ converging to some $\rho\in \D(\HH)$, i.e.,  such that $\lim_{n\to\infty} \frac12 \|\rho_n-\rho\|_1 = 0$, it holds that $f(\rho) \leq \liminf_{n\to\infty} f(\rho_n)$. Note that all continuous functions, and therefore all asymptotically continuous monotones, are in particular lower semi-continuous.

\begin{lemma} \label{general_bounds_lemma}
Let $M$ be a strongly super-additive $\FF$-monotone. If $M$ is also either quasi-normalized and asymptotically continuous, or normalized and lower semi-continuous, then
\begin{align}
E_{d,\,\FF^{cc}}(\rho_{AB}) &\leq \liminf_{n\to\infty} \frac1n \sup_{\tau_{CD}} \left\{ M\left(\rho_{AB}^{\otimes n} \otimes \tau_{CD} \right) - M(\tau_{CD}) \right\} , \label{bound_distillable_ssa} \\
E_{d,\,\FF^{mcc}}(\rho_{AB}) &\leq \liminf_{n\to\infty} \frac1n \sup_{\tau_{CD} = \bigotimes_j \tau_{C_j D_j}} \left\{ M\left(\rho_{AB}^{\otimes n} \otimes \tau_{CD} \right) - \sum_j M(\tau_{C_jD_j}) \right\} , \label{bound_distillable_mcc_ssa} 
\end{align}
where all the suprema are over finite-dimensional catalysts $\tau_{CD}$. If $M$ is strongly super-additive and asymptotically continuous, then
\begin{align}
E_{c,\,\FF^{cc}}(\rho_{AB}) &\geq \frac{M^\infty(\rho_{AB})}{\limsup_{m\to\infty} \frac1m \sup_{\tau_{CD}} \left\{ M\left(\Phi_2^{\otimes m} \otimes \tau_{CD} \right) - M(\tau_{CD}) \right\}}\, , \label{bound_cost_ssa} \\
E_{c,\,\FF^{mcc}}(\rho_{AB}) &\geq \frac{M^\infty(\rho_{AB})}{\limsup_{m\to\infty} \frac1m \sup_{\tau_{CD} = \bigotimes_j \tau_{C_j D_j}} \left\{ M\left(\Phi_2^{\otimes m} \otimes \tau_{CD} \right) - \sum_j M(\tau_{C_j D_j}) \right\}}\, . \label{bound_cost_mcc_ssa}
\end{align}
where $M^\infty(\rho) \coloneqq \lim_{n\to\infty}\frac1n M\big(\rho^{\otimes n}\big)$ (and the limit exists). If $M$ is strongly super-additive and only lower semi-continuous, then the weaker bounds
\begin{align}
E_{c,\,\FF^{cc}}(\rho_{AB}) &\geq \frac{M(\rho_{AB})}{\limsup_{m\to\infty} \frac1m \sup_{\tau_{CD}} \left\{ M\left(\Phi_2^{\otimes m} \otimes \tau_{CD} \right) - M(\tau_{CD}) \right\}}\, , \label{bound_cost_ssa_unregularized} \\
E_{c,\,\FF^{mcc}}(\rho_{AB}) &\geq \frac{M(\rho_{AB})}{\limsup_{m\to\infty} \frac1m \sup_{\tau_{CD} = \bigotimes_j \tau_{C_j D_j}} \left\{ M\left(\Phi_2^{\otimes m} \otimes \tau_{CD} \right) - \sum_j M(\tau_{C_j D_j}) \right\}} \label{bound_cost_mcc_ssa_unregularized}
\end{align}
hold.

Finally, if $M$ is strongly super-additive, completely additive, normalized, and lower semi-continuous, then
\begin{equation}
E_{d,\,\FF^{c}}(\rho_{AB}) \leq E_{d,\,\FF^{cc}}(\rho_{AB}) \leq E_{d,\,\FF^{mcc}}(\rho_{AB}) \leq M(\rho_{AB}) \leq E_{c,\,\FF^{mcc}}(\rho_{AB}) \leq E_{c,\,\FF^{cc}}(\rho_{AB}) \leq E_{c,\,\FF^{c}}(\rho_{AB})\, .
\label{chain}
\end{equation}
\end{lemma}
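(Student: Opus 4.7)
The plan is to apply a single template to all four bounds: monotonicity of $M$ under $\FF$ propagates $M$ from the input $\rho_{AB}^{\otimes n}\otimes \tau$ to the output of the protocol, strong super-additivity across the $A'B'\!:\!CD$ cut peels off the catalyst contribution, and the continuity/normalization properties of $M$ let us compare the main-system marginal with the target. For the distillation bound \eqref{bound_distillable_ssa}, I fix any rate $R<E_{d,\FF^{cc}}(\rho_{AB})$ realized by a sequence of correlated-catalytic protocols $(\Lambda_n,\tau_n)$ whose output marginal $\omega_n'$ on $A'B'$ satisfies $\tfrac12\|\omega_n'-\Phi_2^{\otimes\lceil Rn\rceil}\|_1\leq \e_n\to 0$ and whose output marginal on $CD$ equals $\tau_n$. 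Monotonicity followed by strong super-additivity gives
\begin{align*}
M(\rho_{AB}^{\otimes n}\otimes \tau_n) \geq M\big(\Lambda_n(\rho_{AB}^{\otimes n}\otimes \tau_n)\big) \geq M(\omega_n')+M(\tau_n),
\end{align*}
so $M(\omega_n')\leq \sup_{\tau_{CD}}\{M(\rho_{AB}^{\otimes n}\otimes \tau_{CD})-M(\tau_{CD})\}$. Under quasi-normalization and asymptotic continuity, since the target has total dimension $2^{2\lceil Rn\rceil}$, one finds $M(\omega_n')\geq \lceil Rn\rceil(1-2f(\e_n))+o(\lceil Rn\rceil)-g(\e_n)$; divide by $n$, take $\liminf$, and let $R\uparrow E_{d,\FF^{cc}}(\rho_{AB})$.

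Under the alternative hypothesis (normalized and lower semi-continuous), asymptotic continuity is replaced by a blocking argument. For each fixed $k$ and $\delta>0$, lower semi-continuity at the pure state $\Phi_2^{\otimes k}$ (with $M(\Phi_2^{\otimes k})=k$ by normalization) furnishes an open neighbourhood on which $M\geq k-\delta$. Partitioning the $\lceil Rn\rceil$ output singlets of $\omega_n'$ into $\lfloor \lceil Rn\rceil/k\rfloor$ disjoint $k$-blocks, each such block is $\e_n$-close to $\Phi_2^{\otimes k}$, hence for $n$ large enough every block satisfies $M\geq k-\delta$, and iterated strong super-additivity across the blocks delivers $M(\omega_n')\geq \lfloor \lceil Rn\rceil/k\rfloor(k-\delta)$; sending $n\to\infty$, then $k\to\infty$ and $\delta\to 0$ recovers the bound. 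The marginal-catalyst variant \eqref{bound_distillable_mcc_ssa} is identical, except that a further application of strong super-additivity to the output's $CD$-marginal produces $\sum_j M(\tau_{C_jD_j})$ in place of $M(\tau_n)$.

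The cost bounds \eqref{bound_cost_ssa}--\eqref{bound_cost_mcc_ssa_unregularized} follow by the dual argument: starting from an achievable dilution rate $R'$ for $\Phi_2\to \rho_{AB}$ whose protocol output marginal $\omega_m'$ is $\e_m$-close to $\rho_{AB}^{\otimes \lceil R'm\rceil}$, the same chain gives $\sup_{\tau_{CD}}\{M(\Phi_2^{\otimes m}\otimes \tau_{CD})-M(\tau_{CD})\}\geq M(\omega_m')$. Under asymptotic continuity, $M(\omega_m')\geq \lceil R'm\rceil M^\infty(\rho_{AB})+o(m)$, so dividing by $m$, taking $\limsup$, and letting $R'\uparrow 1/E_{c,\FF^{cc}}(\rho_{AB})$ yields \eqref{bound_cost_ssa}; under lower semi-continuity alone, the analogous single-copy blocking (applied to $\rho_{AB}^{\otimes \lceil R'm\rceil}$) recovers only $M(\rho_{AB})$ rather than $M^\infty(\rho_{AB})$ in the numerator, producing the weaker \eqref{bound_cost_ssa_unregularized}. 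The mcc variants are again identical modulo one extra super-additivity step on the $CD$-marginal of the output.

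The chain \eqref{chain} is then an immediate corollary: under complete additivity together with $M(\Phi_2)=1$, both suprema collapse to a single term --- $\sup_{\tau_{CD}}\{M(\rho_{AB}^{\otimes n}\otimes \tau_{CD})-M(\tau_{CD})\}=nM(\rho_{AB})$ and $\sup_{\tau_{CD}}\{M(\Phi_2^{\otimes m}\otimes \tau_{CD})-M(\tau_{CD})\}=m$, and the same holds for the product-catalyst versions --- so \eqref{bound_distillable_mcc_ssa} and \eqref{bound_cost_mcc_ssa_unregularized} reduce to $E_{d,\FF^{mcc}}(\rho_{AB})\leq M(\rho_{AB})\leq E_{c,\FF^{mcc}}(\rho_{AB})$, and the hierarchy \eqref{hierarchy_distillable_and_cost} sandwiches the $cc$- and $c$-variants on either side. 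The main obstacle I anticipate is the lower-semi-continuous distillation step: without a quantitative continuity modulus, the high-dimensional approximation must be reduced to fixed-size building blocks via strong super-additivity, and this is precisely where normalization rather than quasi-normalization becomes essential, since the fixed-$k$ block step leaves no room for the $o(k)$ slack that quasi-normalization would permit at fixed $k$.
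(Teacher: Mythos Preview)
Your proof is correct and follows essentially the same template as the paper's: monotonicity of $M$ under $\FF$, then strong super-additivity across the main-system/catalyst split to peel off $M(\tau_n)$ (or $\sum_j M(\tau_{C_jD_j})$ in the $mcc$ case), followed by a continuity/normalization estimate on the main marginal; the chain~\eqref{chain} is then obtained by collapsing the suprema via complete additivity. The only notable difference is in the lower-semi-continuous distillation step: you partition $\omega_n'$ into $k$-blocks and invoke the open-neighbourhood characterisation of lower semi-continuity at $\Phi_2^{\otimes k}$, whereas the paper simply takes $k=1$, uses strong super-additivity down to single-copy marginals, selects the index $j_n$ minimising $M$, and applies the sequential form of lower semi-continuity directly to the sequence $(\omega_n')_{j_n}\to\Phi_2$. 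Your version is correct (the two forms of lower semi-continuity coincide on the metric space of density matrices), but the $k\to\infty$ limit is unnecessary---the bound $\lfloor\lceil Rn\rceil/k\rfloor(k-\delta)/n\to R(1-\delta/k)$ already yields $R$ upon $\delta\to 0$ with $k=1$ fixed.
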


\begin{rem}
The primary role of the condition that the catalyst be finite-dimensional is to ensure that the differences in~\eqref{bound_distillable_ssa}--\eqref{bound_cost_mcc_ssa_unregularized} are well defined, by avoiding the indeterminacy $\infty-\infty$.
\end{rem}

\begin{rem}
For additive monotones, normalization and quasi-normalization are equivalent.
\end{rem}

It was unknown for a long time whether a non-trivial entanglement monotone that satisfies at the same time quasi-normalization, strong super-additivity, asymptotic continuity, and complete additivity would exist at all. This was the state of affairs until the \emph{squashed entanglement} was constructed~\cite{Tucci1999, christandl_2004, li_2014-1, Shirokov-sq}. For a bipartite state $\rho_{AB}$, this is defined by
\bb
\sq(\rho_{AB}) \coloneqq \inf_{\rho_{ABE}} \frac12 I(A:B|E)_\rho\, ,
\ee
where the infimum runs over all extensions of $\rho_{AB}$, i.e., over all states $\rho_{ABE}$ such that $\Tr_E \rho_{ABE} = \rho_{AB}$, and $I(A\!:\!B|E)_\rho \coloneqq I(A\!:\!BE)_\rho - I(A\!:\!E)_\rho$ is the conditional mutual information, with $I(X\!:\!Y)_\omega \coloneqq D\left(\omega_{XY} \| \omega_X \otimes \omega_Y\right)$ being the mutual information, constructed~\cite{Shirokov2016} via the Umegaki relative entropy~\cite{umegaki_1962} so that it is well defined also in infinite dimensions.

A closely related entanglement measure is the \emph{conditional entanglement of mutual information} (CEMI), defined by
\bb
E_I(\rho_{AB}) \coloneqq \inf_{\rho_{AA'BB'}} \frac12 \Big( I(AA':BB')_\rho - I(A':B')_\rho \Big)\, ,
\label{cemi}
\ee
where the infimum is over all extensions of $\rho_{AB}$ on $AA'BB'$, with $A'$ and $B'$ being arbitrary systems. It is known that $E_I(\rho_{AB})\leq \sq(\rho_{AB})$ for all states $\rho_{AB}$, and it is currently an open problem to find a state where the inequality is strict~\cite[Figure~1]{li_2014-1}.

The squashed entanglement 
and CEMI are useful because 
they possess all of the above properties (a')--(f). For the squashed entanglement, properties~(a') and (d)--(f) were established in~\cite{christandl_2004}, while~(c) follows from the seminal Alicki--Fannes inequality~\cite{Alicki-Fannes, winter_2016-1}. (See also~\cite{Shirokov-sq} for a thorough discussion of the infinite-dimensional case.) Faithfulness~(b) was claimed in~\cite{faithful}, but the proof presented there is strictly speaking not valid in its original form because it makes direct use of the Brand\~ao--Plenio generalized quantum Stein's lemma~\cite{Brandao2010}, in whose proof an issue was recently found~\cite{berta_2022}. However, the argument in~\cite{faithful} can be fixed by using a more direct derivation~\cite{berta_2022}. Besides, three alternative proofs that do not rely on the 
generalized quantum Stein's lemma are available~\cite{li_2014-1, VV-Markov, berta_2023}.

As for CEMI, properties~(a'),~(c),~(d), and~(e) are established in the original works~\cite{Yang2008, Yang2007}. Property~(b) follows from the faithfulness of the squashed entanglement together with the inequality $E_I(\rho_{AB})\leq \sq(\rho_{AB})$. As for strong super-additivity~(f), it is also well known (see, e.g.,~\cite[Remark~4.3]{Lancien2017}), and indeed it can be proved in a few lines by generalizing slightly the argument given in~\cite[Proposition~1]{Yang2007}. Namely, given a state $\rho_{AA':BB'}$, for all extensions on $AA'A''BB'B''$ we have that
\bb
&I(AA'A'':BB'B'')_\rho - I(A'':B'')_\rho \\
&\qquad = I(AA'A'':BB'B'')_\rho - I(A'A'':B'B'')_\rho + I(A'A'':B'B'')_\rho - I(A'':B'')_\rho \\
&\qquad \geq 2E_I(\rho_{A:B}) + 2E_I(\rho_{A':B'})\, ,
\ee
where, precisely as in~\cite[Eq.~(5)]{Yang2008}, the last inequality follows because $\rho_{AA'A'':BB'B''}$ is an extension of $\rho_{A:B}$, and $\rho_{A'A'':B'B''}$ is an extension of $\rho_{A':B'}$. Diving by $2$ and taking the infimum over $\rho_{AA'A'':BB'B''}$ yields the claimed strong super-additivity of CEMI.

The following statement, whose special case corresponding to LOCC assisted by correlated catalysts is already implicit in~\cite[Theorem~2]{Kondra2021}, is an immediate consequence of Lemma~\ref{general_bounds_lemma} and of the above discussion.

\begin{cor} \label{sq_bound_cor}
For an arbitrary finite-dimensional bipartite state $\rho_{AB}$, the chain of inequalities
\bb
E_{d,\,\locc^{c}}(\rho_{AB}) &\leq E_{d,\,\locc^{cc}}(\rho_{AB}) \leq E_{d,\,\locc^{mcc}}(\rho_{AB}) \leq \sq(\rho_{AB}) \\
&\leq E_I(\rho_{AB}) \leq E_{c,\,\locc^{mcc}}(\rho_{AB}) \leq E_{c,\,\locc^{cc}}(\rho_{AB}) \leq E_{c,\,\locc^{c}}(\rho_{AB}) .
\ee
holds.
\end{cor}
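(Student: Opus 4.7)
The plan is to obtain the corollary as an immediate consequence of the final chain~\eqref{chain} of Lemma~\ref{general_bounds_lemma}, applied to two concrete LOCC monotones --- the squashed entanglement $\sq$ and the conditional entanglement of mutual information $E_I$ --- whose resulting two subchains are then spliced together via the comparison $\sq \leq E_I$ recalled above. The conceptual work has already been done in the preceding two subsections, where it is verified that both $\sq$ and $E_I$ satisfy properties (a'), (c), (d), (e), (f): normalization on singlets, asymptotic continuity (hence lower semi-continuity), weak and complete additivity, and strong super-additivity. For $\sq$ these follow from~\cite{christandl_2004} together with the Alicki--Fannes bound; for $E_I$ they are established in~\cite{Yang2007, Yang2008}, the strong super-additivity argument being reproduced in the preceding discussion. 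LOCC monotonicity of both quantities is standard and in particular is implied by the fact that they contract under the larger classes of maps under which they are originally defined.

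With the hypotheses in hand, the execution is routine. I would first invoke~\eqref{chain} with $\FF = \locc$ and $M = \sq$ to obtain
\bb
E_{d,\,\locc^{c}}(\rho_{AB}) &\leq E_{d,\,\locc^{cc}}(\rho_{AB}) \leq E_{d,\,\locc^{mcc}}(\rho_{AB}) \leq \sq(\rho_{AB}) \\
&\leq E_{c,\,\locc^{mcc}}(\rho_{AB}) \leq E_{c,\,\locc^{cc}}(\rho_{AB}) \leq E_{c,\,\locc^{c}}(\rho_{AB})\, ,
\ee
and then invoke~\eqref{chain} a second time with $M = E_I$ to obtain the analogous chain with $\sq$ replaced by $E_I$. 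Retaining the first half of the first chain (up to and including $\sq(\rho_{AB})$) and the second half of the second (from $E_I(\rho_{AB})$ onwards), and concatenating them via the comparison $\sq(\rho_{AB}) \leq E_I(\rho_{AB})$ from~\cite{li_2014-1}, produces exactly the corollary's chain.

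Since the proof reduces to verification of known properties of the two monotones together with a bookkeeping merger of two chains, I do not foresee a genuine obstacle. The only delicate point is already internal to Lemma~\ref{general_bounds_lemma} itself, namely the derivation of the compact form~\eqref{chain} from the catalyst-dependent suprema~\eqref{bound_distillable_ssa}--\eqref{bound_cost_mcc_ssa_unregularized} via complete additivity and strong super-additivity; once that has been carried out inside the proof of the lemma, the present corollary follows essentially by inspection.
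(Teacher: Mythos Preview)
Your proposal is correct and takes essentially the same approach as the paper, which states that the corollary ``is an immediate consequence of Lemma~\ref{general_bounds_lemma} and of the above discussion'': apply~\eqref{chain} with $\FF=\locc$ and $M=\sq$, then with $M=E_I$, using the properties~(a'),~(c)--(f) verified in the preceding paragraphs, and splice via $\sq\leq E_I$. Your write-up is simply a more explicit rendering of what the paper leaves as a one-line remark.
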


In the above case, the assumption of finite dimensionality for $AB$ \emph{can} be removed by using the technique presented in~\cite{ferrari_2022} (see also~\cite[Theorem~4 and Remark~10]{lami_2020-1}).

\subsection{Freely measured relative entropy of resource}

Consider a finite-dimensional bipartite quantum system $AB$ with Hilbert space $\HH_{AB} = \HH_A \otimes \HH_B$. In what follows, will denote by $\K=\SEP, \PPT$ one of the two cones
\begin{align}
\SEP(A\!:\!B) &\coloneqq \co \left\{ R_A\otimes S_B:\, R_A\in \Tp(\HH_A),\, S_B\in \Tp(\HH_B) \right\} , \label{SEP} \\
\PPT(A\!:\!B) &\coloneqq \left\{ T_{AB}\in \Tp(\HH_{AB}):\ T_{AB}^\Gamma\geq 0\right\} .
\label{PPT}
\end{align}
Here, $\Tp(\HH)$ denotes the cone of positive semi-definite (trace class) operators on the Hilbert space $\HH$, while $\Gamma$ stands for the partial transposition over the second sub-system $B$. For $\K\in \{\SEP, \PPT\}$, we define
\bb
\K^1 \coloneqq \left\{ T\in \K:\ \Tr T =1\right\} .
\ee
We shall also consider the sets of measurements
\begin{align}
\sep(A\!:\!B) &\coloneqq \left\{ \left(E_i\right)_{i=1,\ldots, s}:\ s\in \N_+,\ E_i\in \SEP(A\!:\!B),\ \sum_{i=1}^s E_i = \id_{AB} \right\} , \\
\ppt(A\!:\!B) &\coloneqq \left\{ \left(E_i\right)_{i=1,\ldots, s}:\ s\in \N_+,\ E_i\in \PPT(A\!:\!B),\ \sum_{i=1}^s E_i = \id_{AB} \right\} ,
\end{align}
which are natural mathematical relaxations of the set $\mlocc(A\!:\!B)$ of LOCC-implementable measurements.

\tcb{
Two sets of operations will be of particular interest to us in addition to LOCC. These are the non-entangling (separability-preserving) operations $\sepp$ and the PPT-preserving operations $\pptp$. They are both defined as quantum channels that preserve the given cone; specifically, for $\K\in \{\SEP, \PPT\}$, a map $\Lambda : AB \to A'B'$ is $\K$-preserving if $\Lambda[\K(A:B)] \subseteq \K(A':B')$. Importantly, both sets are supersets of LOCC.

\begin{rem}
Rains~\cite{rains_2001} introduced the class of `PPT operations' as maps whose Choi operator is PPT. They are sometimes confused with PPT-preserving maps~\cite{horodecki_2001}, but PPT operations are actually those that \emph{completely} preserve the PPT cone, in the sense that $(\Lambda \otimes \mathrm{id}) \left[ \K\big( A \widetilde{A} : B \widetilde{B} \big)\right] \subseteq \K\big( A' \widetilde{A} : B' \widetilde{B} \big)$ where $\widetilde{A}, \widetilde{B}$ are arbitrary systems. It is not difficult to 
\tcb{see} that $\locc \subsetneq \PPT \subsetneq \pptp$.
\end{rem}
}%

We now introduce formally a family of entanglement measures first studied by Piani in a pioneering paper~\cite{piani_2009-1}, and subsequently investigated and generalized, among others, by Li and Winter~\cite{li_2014-1} and Brand\~{a}o, Harrow, Lee, and Peres~\cite{brandao_2020}.

\begin{Def}[(Freely measured relative entropy of entanglement)]
For $\K\in \{\SEP, \PPT\}$ and $\mk\in \{\mlocc, \sep, \ppt\}$, we define
\bb
D_\K^\mk(\rho) \coloneqq \inf_{\sigma\in \K^1} \sup_{\MM\in \mk} D\big( \MM(\rho)\, \big\|\, \MM(\sigma)\big)\, ,
\label{DKK}
\ee
where $D(p\|q) \coloneqq \sum_x p_x \log \frac{p_x}{q_x}$ is the Kullback--Leibler divergence.
\end{Def}

The above definition should be contrasted with that of the relative entropy of (NPT) entanglement, given by~\cite{vedral_1997, vedral_1998}
\bb
D_\K(\rho) \coloneqq \inf_{\sigma\in \K^1} D(\rho\|\sigma)\, ,
\label{DK}
\ee
where $D(\rho\|\sigma)\coloneqq \Tr \rho (\log\rho - \log \sigma)$ is the Umegaki relative entropy~\cite{umegaki_1962}. \tcb{Clearly, by the data processing inequality~\cite{Lindblad-monotonicity,lieb73a,lieb73b}, we have in general that~\eqref{DK} is never smaller than~\eqref{DKK}, meaning that
\bb
D_\K^\mk(\rho) \leq D_\K(\rho)
\label{data_processing_DK_DKK}
\ee
for all $\K\in \{\SEP, \PPT\}$ and $\mk\in \{\mlocc, \sep, \ppt\}$. While~\eqref{DK} may appear superficially to be a more natural entanglement measure than~\eqref{DKK}, we will argue now that this is not necessarily true. Indeed, the freely measured version of the relative entropy of entanglement obeys even more properties than its standard, un-measured version.
}
The remainder of this section is a summary of all the properties of the freely measured relative entropy of entanglement, as collected from the existing literature. For the sake of completeness, self-contained proofs of all of these results are presented in Appendix~\ref{proofs_DKK_app}.

One first elementary observation, also due to Piani~\cite{piani_2009-1}, is that $D_\K^\mk$ is a faithful measure of entanglement (when $\K=\SEP$) or NPT entanglement (when $\K=\PPT$), provided that $\mk$ is informationally complete. In mathematical terms, this means that $D_\K^\mk(\rho) = 0$ if and only if $\rho\in \K$. This can be easily seen by applying Pinsker's inequality~\cite{Pinsker} to the Kullback--Leibler divergence in~\eqref{DKK}, and observing that if $\mk$ is informationally complete the resulting supremum defines a norm, called the distinguishability norm associated with $\mk$~\cite{VV-dh, ultimate}. One arrives at the inequality~\cite{piani_2009-1}
\bb
D_\K^\mk(\rho) \geq \frac{1}{2\ln 2} \left(\inf_{\sigma\in \K^1} \left\|\rho-\sigma\right\|_\mk \right)^2 ,
\label{faithfulness_DKK}
\ee
where $\|X\|_\mk \coloneqq \sup_{\MM\in \mk} \left\|\MM(X)\right\|_1$. Since $\|\cdot\|_\mk$ is a norm and not a semi-norm under the above assumption on $\mk$, faithfulness of $D_\K^\mk$ follows immediately. We continue by recording a simple lemma, first established by Piani~\cite[Theorem~2]{piani_2009-1}.

\begin{lemma}[{\cite{piani_2009-1} (Monotonicity)}] \label{elementary_properties_DKK_lemma}
For $\K\in \{\SEP, \PPT\}$ and $\mk\in \{\mlocc, \sep, \ppt\}$, the function $D_\K^\mk$ is invariant under local unitaries, convex, and monotonically non-increasing under LOCC. When $\K=\PPT$ and $\mk = \ppt$, it is also monotonically non-increasing under PPT operations.
\end{lemma}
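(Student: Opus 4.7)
The idea is to exploit the min--max structure $D_\K^\mk(\rho)=\inf_{\sigma\in \K^1}\sup_{\MM\in \mk} D(\MM(\rho)\|\MM(\sigma))$ and derive each of the three properties from a matching closure property of the pair $(\K^1,\mk)$ under the operations in question, combined with standard facts about the classical Kullback--Leibler divergence. Local unitary invariance is essentially immediate: for $U=U_A\otimes U_B$, both $\K^1$ and $\mk$ are stable under conjugation by $U$, so the substitution $\sigma\mapsto U\sigma U^\dagger$, $(E_i)_i\mapsto (U^\dagger E_i U)_i$ reshuffles the double optimization without changing any of the divergences involved.

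For convexity, I would fix $\varepsilon>0$, pick $\varepsilon$-optimal witnesses $\sigma_1,\sigma_2\in\K^1$ for $\rho_1,\rho_2$, and form $\sigma\coloneqq \lambda\sigma_1+(1-\lambda)\sigma_2$, which remains in $\K^1$ by convexity of $\SEP$ and $\PPT$. Joint convexity of the classical Kullback--Leibler divergence, applied measurement-by-measurement, yields
$$D(\MM(\lambda\rho_1+(1-\lambda)\rho_2)\,\|\,\MM(\sigma))\leq \lambda D(\MM(\rho_1)\|\MM(\sigma_1))+(1-\lambda)D(\MM(\rho_2)\|\MM(\sigma_2))$$
for every $\MM\in\mk$. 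Bounding each term on the right by its supremum over $\mk$, then taking the supremum over $\MM$ on the left, and finally using the feasibility of $\sigma$ for the outer infimum, gives $D_\K^\mk(\lambda\rho_1+(1-\lambda)\rho_2)\leq \lambda D_\K^\mk(\rho_1)+(1-\lambda)D_\K^\mk(\rho_2)+\varepsilon$, and $\varepsilon\to 0$ concludes.

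\textbf{Monotonicity.} Fix $\Lambda\in\locc(A\!:\!B\to A'\!:\!B')$. Two closure properties drive the argument. First, $\Lambda(\K^1)\subseteq \K^1$, since LOCC preserves both separability and the PPT property. Second, for every $\MM'\in \mk$ the composition $\MM'\circ\Lambda$ again lies in $\mk$. For $\mk=\mlocc$ this is obvious from the definition of LOCC. For $\mk=\sep$, one writes $\Lambda$ in a local Kraus form $\Lambda(\cdot)=\sum_k (L_k\otimes M_k)(\cdot)(L_k^\dagger\otimes M_k^\dagger)$ and checks that the Heisenberg adjoint sends a separable effect $E=\sum_\alpha P_\alpha\otimes Q_\alpha$ to $\sum_{k,\alpha}(L_k^\dagger P_\alpha L_k)\otimes(M_k^\dagger Q_\alpha M_k)$, still separable. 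For $\mk=\ppt$, since $\locc\subseteq\PPT$-operations in Rains' CP--CP sense (both $\Lambda$ and $T_B\circ\Lambda\circ T_B$ are CP), $\Lambda^\dagger$ and $(T_B\circ\Lambda\circ T_B)^\dagger=T_B\circ\Lambda^\dagger\circ T_B$ are both CP, so $\Lambda^\dagger$ preserves PPT effects. With these in hand, for any $\sigma\in \K^1$ and $\MM'\in\mk$,
$$D(\MM'(\Lambda(\rho))\|\MM'(\Lambda(\sigma)))=D((\MM'\circ\Lambda)(\rho)\|(\MM'\circ\Lambda)(\sigma))\leq \sup_{\MM\in\mk} D(\MM(\rho)\|\MM(\sigma)),$$
and taking $\sup_{\MM'}$ on the left and $\inf_\sigma$ on both sides --- using that $\Lambda(\sigma)\in\K^1$ is a valid witness for the infimum defining $D_\K^\mk(\Lambda(\rho))$ --- yields $D_\K^\mk(\Lambda(\rho))\leq D_\K^\mk(\rho)$.

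\textbf{PPT operations and main obstacle.} For the last clause, with $\K=\PPT$, $\mk=\ppt$, and $\Lambda$ now a Rains PPT operation, the two closure properties hold verbatim: $\Lambda$ preserves $\PPT^1$ by definition, and its adjoint preserves PPT effects by the same CP--CP argument. The preceding chain of inequalities therefore runs unchanged. The only genuinely case-specific part of the whole proof is the verification that $\mk$ is closed under precomposition with the relevant operation class, a pattern-matching exercise across the three choices of $\mk$; this is the step I expect to require the most care, while everything else reduces mechanically to joint convexity of $D$ and the variational definition of $D_\K^\mk$.
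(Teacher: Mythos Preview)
Your proposal is correct and follows essentially the same approach as the paper: both arguments rest on (i) the joint convexity of $D^\mk(\rho\|\sigma)=\sup_{\MM\in\mk}D(\MM(\rho)\|\MM(\sigma))$ together with convexity of $\K^1$, and (ii) the two closure properties that $\Lambda(\K^1)\subseteq\K^1$ and $\MM'\circ\Lambda\in\mk$ whenever $\MM'\in\mk$. The paper's write-up is terser --- it simply asserts that the adjoint of an LOCC (resp.\ PPT) operation maps $\mk$-measurements to $\mk$-measurements --- whereas you spell out the verification case by case via the product Kraus form and the CP--CP characterization of Rains PPT channels; this extra detail is a feature, not a deviation.
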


The following is a slight improvement over~\cite[Proposition~3]{li_2014-1}, obtained by leveraging the Alicki--Fannes--Winter technique developed in~\cite{Alicki-Fannes, winter_2016-1}. The same result has been obtained independently in~\cite[Theorem~11]{Schindler2023}.

\begin{lemma}[(Asymptotic continuity)] \label{asymptotic_continuity_DKK_lemma}
Let $AB$ be a bipartite system of minimum local dimension $d\coloneqq \min\{|A|,|B|\}<\infty$. Let $\rho,\omega$ be two states on $AB$ at trace distance $\e\coloneqq \frac12 \left\|\rho-\omega\right\|_1$. For $\K\in \{\SEP, \PPT\}$ and any set of measurements $\mk$, it holds that
\bb
\left| D_\K^\mk(\rho) - D_\K^\mk(\omega)\right| \leq \epsilon \log d + g(\epsilon)\, ,
\label{asymptotic_continuity_DKK}
\ee
where
\bb
g(x) \coloneqq (1+x)\log(1+x) - x\log x\, .
\label{g}
\ee
\end{lemma}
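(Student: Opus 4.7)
The plan is to establish this via the Alicki--Fannes--Winter (AFW) technique, using the convexity of $D_\K^\mk$ --- inherited from the joint convexity of $D_\mk(\cdot\|\cdot)$ and preserved by the infimum over $\sigma \in \K^1$ --- together with the dimensional bound $D_\K^\mk(\tau) \leq \log d$ valid for every bipartite state $\tau$ on $AB$ with $d = \min\{|A|,|B|\}$.

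Assuming $\epsilon > 0$ (the case $\epsilon = 0$ is immediate), I would set up the standard AFW coupling. Define the density operators $\Delta_+ := \epsilon^{-1}(\rho - \omega)_+$ and $\Delta_- := \epsilon^{-1}(\omega - \rho)_+$, and form the auxiliary state
\bb
\zeta := \frac{\rho + \epsilon\, \Delta_-}{1+\epsilon} = \frac{\omega + \epsilon\, \Delta_+}{1+\epsilon}\, ,
\ee
which satisfies the operator inequalities $\rho,\omega \leq (1+\epsilon)\zeta$. To establish the one-sided bound $D_\K^\mk(\omega) \leq D_\K^\mk(\rho) + \epsilon\log d + g(\epsilon)$ (the reverse direction is symmetric under $\Delta_+ \leftrightarrow \Delta_-$), I would fix a near-minimizer $\sigma^\star \in \K^1$ for $\rho$ with $D_\mk(\rho\|\sigma^\star) \leq D_\K^\mk(\rho) + \delta$, and construct the admissible reference state $\sigma' := \frac{1}{1+\epsilon}\sigma^\star + \frac{\epsilon}{1+\epsilon}\nu \in \K^1$, where $\nu \in \K^1$ is chosen so that $D_\mk(\Delta_+\|\nu) \leq \log d$. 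Combining joint convexity of $D_\mk(\cdot\|\cdot)$ applied to the two pairs of decompositions of $(1+\epsilon)\zeta$ and $(1+\epsilon)\sigma'$, the scaling identity $D(t\alpha\|t\beta) = tD(\alpha\|\beta)$ for subnormalized positive operators, and the operator inequality $\omega \leq (1+\epsilon)\zeta$ (together with operator monotonicity of the logarithm to transfer bounds between $\sigma^\star$ and $\sigma'$), the estimate
\bb
D_\mk(\omega\|\sigma') \leq D_\mk(\rho\|\sigma^\star) + \epsilon\log d + g(\epsilon)
\ee
emerges; the claim then follows from $D_\K^\mk(\omega) \leq D_\mk(\omega\|\sigma')$ upon letting $\delta \to 0$.

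The main obstacle is obtaining the tight \emph{local}-dimension factor $\log d$ with $d = \min\{|A|,|B|\}$ rather than the total dimension $\log(|A||B|)$; this is precisely Winter's sharpening of the original Alicki--Fannes bound. A crude choice like $\nu = \id_{AB}/(|A||B|)$ produces only the weaker estimate with $\log(|A||B|)$, whereas the tight bound requires an $\nu$ adapted to the bipartite structure --- for example, a separable state with maximally mixed marginal on the smaller local system --- together with a careful argument showing that $D_\mk(\Delta_+\|\nu)$ is controlled by $\log d$ for all admissible $\Delta_+$. A secondary technical point is the clean extraction of the entropic correction $g(\epsilon) = (1+\epsilon)\log(1+\epsilon) - \epsilon\log\epsilon$, which arises from the binary entropy of the mixing weights $\{1/(1+\epsilon),\ \epsilon/(1+\epsilon)\}$ in the convex decompositions and must be tracked consistently alongside the logarithmic overhead coming from the operator inequality $\sigma' \geq \sigma^\star/(1+\epsilon)$.
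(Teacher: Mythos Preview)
Your overall strategy --- the Alicki--Fannes--Winter coupling via the intermediate state $\zeta = \frac{1}{1+\epsilon}\rho + \frac{\epsilon}{1+\epsilon}\Delta_- = \frac{1}{1+\epsilon}\omega + \frac{\epsilon}{1+\epsilon}\Delta_+$ --- is exactly what the paper does. The paper's execution, however, is considerably cleaner than your sketch, and in particular avoids the very difficulty you flag as the ``main obstacle''.

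Rather than fixing a near-optimal $\sigma^\star$ for $\rho$ and building an explicit $\sigma' = \frac{1}{1+\epsilon}\sigma^\star + \frac{\epsilon}{1+\epsilon}\nu$, the paper works entirely at the level of the functional $D_\K^\mk$. It bounds $D_\K^\mk(\zeta)$ from above by convexity,
\[
D_\K^\mk(\zeta)\ \leq\ \tfrac{1}{1+\epsilon}\,D_\K^\mk(\rho) + \tfrac{\epsilon}{1+\epsilon}\,D_\K^\mk(\Delta_-)\ \leq\ \tfrac{1}{1+\epsilon}\,D_\K^\mk(\rho) + \tfrac{\epsilon}{1+\epsilon}\,\log d\,,
\]
and from below by the ``almost-concavity'' of the classical relative entropy (applied after each measurement $\MM$ and then inside the $\inf_\sigma \sup_\MM$),
\[
D_\K^\mk(\zeta)\ \geq\ \tfrac{1}{1+\epsilon}\,D_\K^\mk(\omega) - h_2\!\left(\tfrac{\epsilon}{1+\epsilon}\right).
\]
Combining these two yields the claim directly; no operator-monotonicity argument or explicit reference state $\nu$ is needed.

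The point you present as the main obstacle --- obtaining $\log d$ with $d=\min\{|A|,|B|\}$ rather than $\log(|A||B|)$ --- is in fact a non-issue, and this is the one ingredient your sketch is missing. One uses only the elementary chain $D_\K^\mk(\tau)\leq D_\K(\tau)\leq \log\min\{|A|,|B|\}$, where the last inequality is the classical Vedral--Plenio bound on the relative entropy of entanglement. In your framework this amounts to taking $\nu$ to be a (near-)minimizer for $D_\K(\Delta_\pm)$; no bespoke construction is required. Once you have this, the talk of ``operator monotonicity of the logarithm to transfer bounds between $\sigma^\star$ and $\sigma'$'' becomes unnecessary (and as written it is not clear that step does what you want for the \emph{measured} relative entropy). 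Your approach can be made to work, but the paper's functional version is both shorter and sidesteps the need to manage explicit $\sigma'$ and $\nu$.
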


The following result is taken from~\cite[Proposition~4]{li_2014-1}, where it is also shown that an equality in fact holds in~\eqref{max_ent_DKK}. (We will however not make use of this latter fact.)

\begin{lemma}[{\cite[Proposition~4]{li_2014-1} (Quasi-normalization)}] \label{normalization_DKK_lemma}
Given a cone $\K\in \{\SEP, \PPT\}$ and a set of measurements $\mk\in \{\mlocc,\sep,\ppt\}$, the maximally entangled state $\Phi_d$ on a $d\times d$ bipartite quantum system satisfies that
\bb
D_\K^\mk (\Phi_d) \geq \log (d+1) - 1\, .
\label{max_ent_DKK}
\ee
\end{lemma}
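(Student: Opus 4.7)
The plan is to exhibit a single explicit LOCC measurement $\MM$ such that, for every $\sigma$ in the largest cone $\PPT^1$, the binary output distributions on $\Phi_d$ and $\sigma$ have Kullback--Leibler divergence at least $\log(d+1)-1$. Because $\mlocc \subseteq \sep \subseteq \ppt$ and $\SEP\subseteq \PPT$, such a construction will simultaneously give the required lower bound for all six combinations of $\K\in\{\SEP,\PPT\}$ and $\mk\in\{\mlocc,\sep,\ppt\}$, since the supremum in \eqref{DKK} only increases and the infimum only decreases as we enlarge $\mk$ or shrink $\K$.

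The candidate $\MM$ is the \emph{isotropic test}: Alice samples $U\in\mathrm{U}(d)$ (e.g.\ from a unitary $2$-design, so the whole protocol is finite and manifestly LOCC), applies $U$ to her half, and communicates $U$ to Bob, who applies $U^*$ to his half; both parties then measure in the computational basis, and the two classical registers are finally coarse-grained into two bins, ``diagonal'' ($i=j$) versus ``off-diagonal'' ($i\ne j$). Since $(U\otimes U^*)\ket{\Phi_d}=\ket{\Phi_d}$, the state $\Phi_d$ is invariant under the twirl, and computational-basis measurement returns $(i,i)$ with probability $1/d$ each; hence $\MM(\Phi_d)$ puts probability $1$ on the diagonal bin. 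For arbitrary $\sigma$, the twirl produces an isotropic state $p\,\Phi_d + (1-p)\frac{\id-\Phi_d}{d^2-1}$ with $p=\Tr[\Phi_d\sigma]$, and a direct calculation shows that the diagonal bin receives weight $q:=(pd+1)/(d+1)$.

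The key input is the bound $p\leq 1/d$ for every $\sigma\in\PPT^1$, which follows from the identity $\Tr[\Phi_d\sigma]=\tfrac{1}{d}\Tr[F\sigma^\Gamma]$ (where $F$ is the swap operator) and the estimate $|\Tr[F\sigma^\Gamma]|\leq\|\sigma^\Gamma\|_1=1$. This yields $q\leq 2/(d+1)$, so
\[
D\bigl(\MM(\Phi_d)\,\big\|\,\MM(\sigma)\bigr)=D\bigl((1,0)\,\big\|\,(q,1-q)\bigr)=-\log q \;\geq\; \log\tfrac{d+1}{2}=\log(d+1)-1.
\]
Because the measurement is fixed while $\sigma\in\K^1$ is arbitrary, we conclude
\[
D_\K^\mk(\Phi_d)=\inf_{\sigma\in\K^1}\sup_{\MM'\in\mk}D\bigl(\MM'(\Phi_d)\|\MM'(\sigma)\bigr)\;\geq\;\inf_{\sigma\in\K^1}D\bigl(\MM(\Phi_d)\|\MM(\sigma)\bigr)\;\geq\;\log(d+1)-1,
\]
for all admissible pairs $(\K,\mk)$. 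The only point that requires a little care is verifying that the Haar-twirl step can be rigorously realized as an LOCC measurement with finitely many outcomes; invoking an exact unitary $2$-design (whose size depends only on $d$) removes this technicality at no cost, and is essentially the only ``non-routine'' ingredient in the argument.
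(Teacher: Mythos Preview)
Your proof is correct and arrives at exactly the same binary test and the same numerical bound $q\leq 2/(d+1)$ as the paper. The only methodological difference is where the $U\otimes U^*$ symmetry is exploited: the paper uses convexity and local-unitary invariance of $D_\K^\mk$ to reduce the infimum over $\sigma\in\K^1$ to the boundary isotropic state $\tfrac{1}{d}\Phi_d+\tfrac{d-1}{d}\,\tfrac{\id-\Phi_d}{d^2-1}$, and then applies the plain LOCC measurement $E=\sum_i\ketbra{ii}$; you instead absorb the twirl into the measurement via a unitary $2$-design and bound $\Tr[\Phi_d\sigma]\leq 1/d$ for arbitrary $\sigma\in\PPT^1$ using $\Phi_d^\Gamma=F/d$. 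These are dual implementations of the same idea. Your route has the small virtue of making the bound for \emph{every} PPT $\sigma$ fully explicit (the paper leaves the ``optimal isotropic state is the boundary one'' step implicit), at the modest cost of invoking $2$-designs to keep the LOCC protocol finite --- something the paper's simpler fixed measurement does not need.
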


An absolutely crucial property that sets apart $D_\K^\mk$ from other entanglement measures based on the relative entropy, such as the relative entropy of entanglement, is strong super-additivity. This has been first established by Piani in~\cite[Theorem~2]{piani_2009-1}, and it has been later further generalized~\cite[Lemma~15]{brandao_2020}.

\begin{lemma}[{\cite[Theorem~2]{piani_2009-1} (Strong super-additivity)}] \label{strong_superadditivity_DKK_lemma}
Let $\rho_{AA':BB'}$ be an arbitrary state over a pair of bipartite systems $AB$ and $A'B'$. For
\bb
\left(\K, \mk\right) \in \left\{ \left(\SEP, \mlocc \right), \left(\SEP, \sep \right), \left(\PPT, \mlocc \right), \left(\PPT, \sep \right), \left(\PPT, \ppt \right) \right\} ,
\label{pairs}
\ee
i.e., for all possible choices of $\K\in \{\SEP, \PPT\}$ and $\mk\in \{\mlocc,\sep,\ppt\}$ except $\K=\SEP$ and $\mk=\ppt$, it holds that
\bb
D_\K^\mk\left(\rho_{AA':BB'}\right) \geq D_\K^\mk\left(\rho_{A:B}\right) + D_\K^\mk\left(\rho_{A':B'}\right) .
\label{superadditivity_DKK}
\ee
\end{lemma}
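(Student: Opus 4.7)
My plan is to establish the sharper bound $D_\mk(\rho_{AA':BB'}\|\sigma_{AA':BB'}) \geq D_\K^\mk(\rho_{A:B}) + D_\K^\mk(\rho_{A':B'})$ for every fixed $\sigma \in \K^1(AA':BB')$, from which the stated inequality follows by taking the infimum over $\sigma$. To lower-bound the left-hand side I restrict to product measurements $\MM \otimes \MM'$ with $\MM = (E_i)_i \in \mk(A\!:\!B)$ and $\MM' = (F_j)_j \in \mk(A'\!:\!B')$, which lie in $\mk(AA'\!:\!BB')$ because each of the classes $\mlocc$, $\sep$, $\ppt$ is closed under tensor products across disjoint bipartitions. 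Writing $p_\rho(i,j)\coloneqq \Tr[(E_i\otimes F_j)\rho]$ and analogously $p_\sigma$, the classical chain rule gives $D(p_\rho\|p_\sigma) = D(p_\rho(i)\|p_\sigma(i)) + \sum_i p_\rho(i)\,D(p_\rho(\cdot|i)\|p_\sigma(\cdot|i))$, whose leading term is exactly $D(\MM(\rho_{AB})\|\MM(\sigma_{AB}))$ since the $i$-marginal depends only on $\rho_{AB}$ and $\sigma_{AB}$.

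The key move is to recast the second term via the post-measurement conditional states $\rho_{A'B'|i}\coloneqq \Tr_{AB}[(E_i\otimes \id_{A'B'})\rho]/p_\rho(i)$, and analogously $\sigma_{A'B'|i}$: then $p_\rho(\cdot|i) = \MM'(\rho_{A'B'|i})$ and similarly for $\sigma$. Applying joint convexity of the Kullback--Leibler divergence with common weights $p_\rho(i)$ and using the identity $\sum_i p_\rho(i)\rho_{A'B'|i} = \rho_{A'B'}$, I obtain $\sum_i p_\rho(i)\,D(p_\rho(\cdot|i)\|p_\sigma(\cdot|i)) \geq D\big(\MM'(\rho_{A'B'})\,\big\|\,\MM'(\tilde\sigma_{A'B'})\big)$, where $\tilde\sigma_{A'B'}\coloneqq \sum_i \frac{p_\rho(i)}{p_\sigma(i)}\,\Tr_{AB}[(E_i\otimes \id_{A'B'})\sigma]$. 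Granting for a moment that $\tilde\sigma_{A'B'} \in \K^1(A'\!:\!B')$, the supremum over $\MM'$ then bounds this term below by $D_\K^\mk(\rho_{A'B'})$; a subsequent supremum over $\MM$ bounds the leading chain-rule term by $D_\K^\mk(\rho_{AB})$, since $\sigma_{AB}\in \K^1(A\!:\!B)$ as a marginal of $\sigma$ (partial trace preserves $\SEP$ and $\PPT$); a final infimum over $\sigma$ then completes the proof.

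The principal obstacle is thus verifying that $\tilde\sigma_{A'B'} \in \K^1(A'\!:\!B')$, and this is precisely where the restriction on the admissible pairs $(\K,\mk)$ enters. Normalization $\Tr\tilde\sigma_{A'B'} = 1$ is immediate. For the cone condition I argue in two regimes. When $\K = \SEP$ (and hence $\mk \in \{\mlocc, \sep\}$), each $E_i$ is separable across $A\!:\!B$, and expanding both $E_i$ and $\sigma$ as sums of product operators displays $\Tr_{AB}[(E_i\otimes\id)\sigma]$ as a positive combination of product operators on $A'\!:\!B'$, hence separable. When $\K = \PPT$ (compatible with any $\mk\in\{\mlocc,\sep,\ppt\}$), I would instead use the identity $(\Tr_{AB}[(E_i\otimes \id)\sigma])^{\Gamma_{B'}} = \Tr_{AB}[(E_i^{\Gamma_B}\otimes \id_{A'B'})\,\sigma^{\Gamma_{BB'}}]$, obtained by commuting partial transpose with partial trace on disjoint subsystems and transferring $\Gamma_B$ from $\sigma$ onto $E_i$, together with the elementary positivity fact that $\Tr_{AB}[(P\otimes \id_{A'B'})Q]\geq 0$ whenever $P, Q \geq 0$; applied with $P = E_i^{\Gamma_B}\geq 0$ and $Q = \sigma^{\Gamma_{BB'}}\geq 0$, this gives $\tilde\sigma_{A'B'}^{\Gamma_{B'}}\geq 0$. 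The excluded pair $(\SEP,\ppt)$ is exactly where both routes fail: a general PPT POVM element need not be separable, so there is no a priori reason for $\tilde\sigma_{A'B'}$ to remain in $\SEP$.
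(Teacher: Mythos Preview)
Your argument is correct and follows essentially the same route as Piani's original proof (which the paper cites and whose variant it reproduces for Lemma~\ref{Piani_inequality_lemma}): restrict to product measurements, apply the classical chain rule, use joint convexity to collapse the conditional term, and invoke the compatibility condition $\Tr_{AB}[(E_i\otimes\id)\sigma]\in\K(A'\!:\!B')$ to replace the resulting mixture by an infimum over $\K^1$. Your case-by-case verification of compatibility for the five admissible pairs, including the partial-transpose identity for the $\PPT$ case, is exactly the content of what the paper calls the ``compatibility condition''~\eqref{compatibility}.
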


Among other things, the above result implies, via Fekete's lemma~\cite{Fekete1923}, that the regularization
\bb
D_\K^{\mk,\infty}\left(\rho_{AB}\right) \coloneqq \lim_{n\to\infty} \frac1n D_\K^\mk\left(\rho_{AB}^{\otimes n}\right) = \sup_{n\in \N_+} \frac1n D_\K^\mk\left(\rho_{AB}^{\otimes n}\right)
\label{regularized_DKK}
\ee
is well defined for all states $\rho_{AB}$ and for all pairs $(\K,\mk)$ as in~\eqref{pairs}. For analogous but opposite reasons, since $D_\K$ is tensor \emph{sub-}additive, i.e., such that
\bb
D_\K \left(\rho_{A:B} \otimes \omega_{A':B'}\right) \leq D_\K(\rho_{A:B}) + D_\K(\omega_{A':B'})\, ,
\label{tensor_subadditivity_DK}
\ee
the regularization
\bb
D_\K^\infty \left(\rho_{AB}\right) \coloneqq \lim_{n\to\infty} \frac1n D_\K\left(\rho_{AB}^{\otimes n}\right) = \inf_{n\in \N_+} \frac1n D_\K\left(\rho_{AB}^{\otimes n}\right)
\label{regularized_DK}
\ee
is also well defined for $\K\in\{\SEP, \PPT\}$.

A variation of the argument used to prove the above Lemma~\ref{strong_superadditivity_DKK_lemma} has been employed by Piani to show an inequality related to~\eqref{superadditivity_DKK}. Since it is this inequality that will play a crucial role in our approach, rather than simple strong super-additivity~\eqref{superadditivity_DKK}, we review its proof in Appendix~\ref{proofs_DKK_app}.

\begin{lemma}[{\cite[Theorem~1]{piani_2009-1} (Piani's super-additivity-like inequality)}] \label{Piani_inequality_lemma}
Let $\rho_{AA':BB'}$ be an arbitrary state over a pair of bipartite systems $AB$ and $A'B'$. For all pairs $(\K,\mk)$ satisfying~\eqref{pairs}, i.e., for all possible choices of $\K\in \{\SEP, \PPT\}$ and $\mk\in \{\mlocc,\sep,\ppt\}$ except 
for the pair $(\SEP, \ppt)$, it holds that
\bb
D_\K\left(\rho_{AA':BB'}\right) \geq D_\K\left(\rho_{A:B}\right) + D_\K^\mk\left(\rho_{A':B'}\right) ,
\label{Piani_inequality}
\ee
where $D_\K$ is defined by~\eqref{DK}.
\end{lemma}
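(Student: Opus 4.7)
The plan is to pick an arbitrary $\sigma\in\K^1(AA':BB')$ and an arbitrary measurement $\MM=(E_x)_x\in\mk(A':B')$, apply the data--processing inequality to the channel $\id_{AB}\otimes\MM$ on both $\rho$ and $\sigma$, and then extract the two pieces of the right--hand side using the chain rule for classical--quantum relative entropy. Writing $\tilde\rho^x_{AB}\coloneqq\Tr_{A'B'}[\rho(\id_{AB}\otimes E_x)]$ and similarly $\tilde\sigma^x_{AB}$, with $p_x,q_x$ their traces and $\rho^x_{AB},\sigma^x_{AB}$ the normalised versions, this would give
\begin{equation*}
D(\rho\|\sigma)\geq D\bigl(\MM(\rho_{A'B'})\,\big\|\,\MM(\sigma_{A'B'})\bigr)+\sum_x p_x\, D\bigl(\rho^x_{AB}\,\big\|\,\sigma^x_{AB}\bigr).
\end{equation*}
The first summand will supply $D_\K^\mk(\rho_{A':B'})$ once I take the supremum over $\MM$, since partial traces send $\SEP$ and $\PPT$ to themselves (via $(\sigma_{A':B'})^{T_{B'}}=\Tr_{AB}[\sigma^{T_{BB'}}]\geq 0$ in the PPT case), making $\sigma_{A':B'}\in\K^1(A':B')$ a feasible competitor in the infimum defining $D_\K^\mk(\rho_{A':B'})$.

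The technical core of the proof, and the step I expect to be the main obstacle, is showing that each conditional state $\sigma^x_{AB}$ lies in $\K(A:B)$ whenever $p_x,q_x>0$. For $\K=\SEP$ with $\mk\in\{\mlocc,\sep\}$, this is immediate from a product decomposition of $\sigma$ combined with one of $E_x$: the partial trace over $A'B'$ factorises across the Alice/Bob cut. For $\K=\PPT$, positivity of $\tilde\sigma^x_{AB}$ is clear from $(\id\otimes\sqrt{E_x})\sigma(\id\otimes\sqrt{E_x})\geq 0$; positivity of its partial transpose on $B$ would follow from a short manipulation, using that $\id\otimes E_x$ is trivial on $B$ to write $(\sigma(\id\otimes E_x))^{T_B}=\sigma^{T_B}(\id\otimes E_x)$ and then the identity $\Tr_{A'B'}[XY]=\Tr_{A'B'}[X^{T_{B'}}Y^{T_{B'}}]$ (which holds because $\Tr_{B'}$ is invariant under $T_{B'}$) to rewrite $(\tilde\sigma^x_{AB})^{T_B}=\Tr_{A'B'}[\sigma^{T_{BB'}}(\id\otimes E_x^{T_{B'}})]$, a partial trace of a manifestly positive operator. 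This is precisely the step that breaks for the excluded pair $(\SEP,\ppt)$: a non--separable PPT measurement element can transport a product decomposition of $\sigma$ to a non--separable conditional state on $A:B$, which is exactly why that pair is absent from~\eqref{pairs}.

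Once $\sigma^x_{AB}\in\K(A:B)$ is in hand, joint convexity of the relative entropy, together with $\sum_x p_x\rho^x_{AB}=\rho_{A:B}$ and $\tau\coloneqq\sum_x p_x\sigma^x_{AB}\in\K^1(A:B)$ (by convexity of $\K$), would give
\begin{equation*}
\sum_x p_x\, D(\rho^x_{AB}\|\sigma^x_{AB})\geq D(\rho_{A:B}\|\tau)\geq D_\K(\rho_{A:B}).
\end{equation*}
Combining this with the displayed inequality above, taking the supremum over $\MM\in\mk(A':B')$ of the first term on the right, and finally the infimum over $\sigma\in\K^1(AA':BB')$ of the whole left--hand side would then yield the claimed bound $D_\K(\rho_{AA':BB'})\geq D_\K(\rho_{A:B})+D_\K^\mk(\rho_{A':B'})$. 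Everything beyond the PPT--preservation step is a clean, essentially formal combination of data processing, the classical--quantum chain rule, and joint convexity.
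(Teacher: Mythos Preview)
Your proposal is correct and follows essentially the same route as the paper's proof: apply data processing for the relative entropy under $\id_{AB}\otimes\MM$, use the chain rule for classical--quantum relative entropy (the paper's identity~\eqref{superadditivity_semiclassical_relent}) to split off the measured piece, invoke the ``compatibility condition'' that the conditional states $\sigma^x_{AB}$ remain in $\K(A:B)$, and then use joint convexity to lower-bound the remaining sum by $D_\K(\rho_{A:B})$. The only difference is one of presentation: the paper simply states the compatibility condition~\eqref{compatibility} without proof, whereas you sketch its verification (in particular the partial-transpose manipulation for the $\PPT$ case), which is a welcome addition.
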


\subsection{Correlated catalysis and PPT-preserving or non-entangling operations}

We begin by establishing general bounds on the asymptotic rates under PPT-preserving and non-entangling operations assisted by correlated catalysts.

We stress that obtaining any asymptotic constraints in this setting is far from trivial --- the monotones that can be used to obtain restrictions on asymptotic rates under LOCC, such as squashed entanglement, are no longer known to be monotonic under larger sets of operations; moreover, more easily computable quantities such as the relative entropy of entanglement are not known to satisfy the required properties (Lemma~\ref{general_bounds_lemma}), meaning that it is not clear how to turn them into asymptotic bounds.

We will be able to circumvent these difficulties by employing Piani's super-additivity-like inequality~\eqref{Piani_inequality}.

\begin{prop} \label{bounds_NE''_prop}
For all states $\rho_{AB}$, the distillable entanglement under PPT-preserving operations assisted by correlated catalysts satisfies that
\bb
E_{d,\, \pptp^{cc}}(\rho_{AB}) \leq D_\PPT^\infty(\rho_{AB})
\label{bound_distillable_NE''}
\ee
and the correlated catalytic entanglement cost satisfies that
\bb
E_{c,\, \pptp^{cc}}(\rho_{AB}) \geq D_\PPT^{\mk,\infty}(\rho_{AB}) \geq D_\PPT^{\mk}(\rho_{AB})\, ,
\label{bound_cost_NE''}
\ee
where $\mk\in \{\mlocc, \sep, \ppt\}$, and $D_\SEP^\infty$ and $D_\SEP^{\mk,\infty}$ are defined by~\eqref{regularized_DK} and~\eqref{regularized_DKK}. In particular, $E_{c,\,\pptp^{cc}}$ is faithful.

Similarly, under non-entangling operations assisted by correlated catalysts we have that
\bb
E_{d,\, \sepp^{cc}}(\rho_{AB}) \leq D_\SEP^\infty(\rho_{AB})
\label{bound_distillable_NE''2}
\ee
and
\bb
E_{c,\, \sepp^{cc}}(\rho_{AB}) \geq D_\SEP^{\mk,\infty}(\rho_{AB}) \geq D_\SEP^{\mk}(\rho_{AB})\, ,
\label{bound_cost_NE''2}
\ee
where $\mk\in \{\mlocc, \sep\}$.
\end{prop}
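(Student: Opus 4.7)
The plan is to establish the four bounds by combining Piani's super-additivity-like inequality (Lemma~\ref{Piani_inequality_lemma}) with tensor sub-additivity, monotonicity under the relevant class of maps, asymptotic continuity (Lemma~\ref{asymptotic_continuity_DKK_lemma}), and quasi-normalization on $\Phi_2^{\otimes k}$ (Lemma~\ref{normalization_DKK_lemma}) of the freely measured relative entropy of entanglement. The upper bound~\eqref{bound_distillable_NE''} on the distillable entanglement is essentially the computation already spelled out in the main text: starting from $D_\PPT\!\big(\rho_{AB}^{\otimes n}\big) + D_\PPT(\tau_n)$, one successively applies tensor sub-additivity of $D_\PPT$, monotonicity of $D_\PPT$ under PPT-preserving operations, Piani's inequality~\eqref{Piani_inequality} with $\mk=\ppt$ to split the main-system output from the catalyst, asymptotic continuity of $D_\PPT^\ppt$, and the quasi-normalization $D_\PPT^\ppt\!\big(\Phi_2^{\otimes k}\big)\geq k-1$. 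The catalyst contributions cancel, and dividing by $n$ and taking $n\to\infty$ yields $D_\PPT^\infty(\rho_{AB})\geq R$ for every achievable rate $R$.

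For the entanglement cost bound~\eqref{bound_cost_NE''} I would run a dual argument with the roles of $\rho_{AB}$ and $\Phi_2$ interchanged. Fix $\mk\in\{\mlocc,\sep,\ppt\}$ and suppose $R$ is achievable for dilution, so that there exist finite-dimensional catalysts $\tau_n$ and maps $\Lambda_n\in \PPTP$ with $\omega_n \coloneqq \Tr_{CD}\Lambda_n(\Phi_2^{\otimes n}\otimes\tau_n)$ satisfying $\frac12\big\|\omega_n-\rho_{AB}^{\otimes\lceil Rn\rceil}\big\|_1\leq \e_n\to 0$ and $\Tr_{A'B'}\Lambda_n(\Phi_2^{\otimes n}\otimes\tau_n)=\tau_n$. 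The chain
\begin{align*}
n + D_\PPT(\tau_n) &= D_\PPT\!\big(\Phi_2^{\otimes n}\big) + D_\PPT(\tau_n) \geq D_\PPT\!\big(\Phi_2^{\otimes n}\otimes\tau_n\big) \\
&\geq D_\PPT\!\big(\Lambda_n(\Phi_2^{\otimes n}\otimes\tau_n)\big) \geq D_\PPT(\tau_n) + D_\PPT^\mk(\omega_n) \\
&\geq D_\PPT(\tau_n) + D_\PPT^\mk\!\big(\rho_{AB}^{\otimes\lceil Rn\rceil}\big) - \e_n\lceil Rn\rceil\log d - g(\e_n)
\end{align*}
then follows from $D_\PPT(\Phi_2^{\otimes n})=n$ together with tensor sub-additivity of $D_\PPT$; its monotonicity under PPT-preserving operations; Piani's inequality~\eqref{Piani_inequality} (now applied with the main output as the $A':B'$ factor and the catalyst as the $A:B$ factor); and asymptotic continuity of $D_\PPT^\mk$, where $d=\min\{|A|,|B|\}$. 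The finite-dimensionality of $\tau_n$ is precisely what legitimizes cancelling $D_\PPT(\tau_n)$ from both sides. Dividing by $\lceil Rn\rceil$ and sending $n\to\infty$ produces $1/R\geq D_\PPT^{\mk,\infty}(\rho_{AB})$; taking the supremum over achievable $R$ gives~\eqref{bound_cost_NE''}, whose second inequality is just the fact that the regularization of a super-additive sequence dominates its single-copy value. Faithfulness of $E_{c,\pptp^{cc}}$ is then immediate from~\eqref{faithfulness_DKK}: for $\mk$ informationally complete, $D_\PPT^\mk(\rho_{AB})>0$ whenever $\rho_{AB}$ is NPT.

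The non-entangling bounds~\eqref{bound_distillable_NE''2}--\eqref{bound_cost_NE''2} are obtained by substituting $\SEP$ for $\PPT$ throughout: $D_\SEP$ is monotone under $\sepp$ (directly from the definition of non-entangling maps and the data-processing inequality for $D$) and tensor sub-additive, while Piani's inequality for $\K=\SEP$ is available only for $\mk\in\{\mlocc,\sep\}$ (Lemma~\ref{Piani_inequality_lemma}), which accounts for the restricted range of $\mk$ in that case. The main obstacle the argument has to overcome is that $D_\PPT$ alone is not known to be strongly super-additive, so Lemma~\ref{general_bounds_lemma} cannot be applied to it directly; the key insight is that Piani's hybrid inequality~\eqref{Piani_inequality} couples $D_\PPT$ on the catalyst side (where one needs an \emph{additive} quantity so that the catalyst term cancels cleanly) with $D_\PPT^\mk$ on the main-system side (where one needs asymptotic continuity and quasi-normalization on $\Phi_2$ to obtain a quantitative rate bound). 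This asymmetric pairing is what enables the correlated-catalytic bounds despite the absence of a single monotone satisfying all of the properties required by Lemma~\ref{general_bounds_lemma}.
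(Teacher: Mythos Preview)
Your proposal is correct and follows essentially the same route as the paper's proof: tensor sub-additivity of $D_\K$, monotonicity under $\K$-preserving maps, Piani's inequality~\eqref{Piani_inequality} to separate the main-system output (carrying $D_\K^\mk$) from the catalyst (carrying $D_\K$), then asymptotic continuity and quasi-normalization, with the catalyst term cancelling because it is finite. The only cosmetic difference is the parametrization of the dilution rate (you index by the number $n$ of input ebits and produce $\lceil Rn\rceil$ copies of $\rho$, whereas the paper indexes by the number $n$ of output copies of $\rho$ and consumes $\lceil Rn\rceil$ ebits), which is of course equivalent.
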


\begin{rem}
We do not know at present whether the inequalities $E_{d,\,\pptp^{cc}}(\rho_{AB}) \leq E_{c,\,\pptp^{cc}}(\rho_{AB})$ and $E_{d,\,\sepp^{cc}}(\rho_{AB}) \leq E_{c,\,\sepp^{cc}}(\rho_{AB})$  hold in general. The reason is that even if a set of free operations $\FF$ is closed under concatenation, the set of \emph{approximate} state-to-state transformations induced by including assistance by correlated catalysts is not necessarily such. In other words, $\rho \overset{\FF^{cc}}{\longrightarrow}\, \approx_\e\! \omega$ and $\omega \overset{\FF^{cc}}{\longrightarrow}\, \approx_\e\! \zeta$ does not imply that $\rho \overset{\FF^{cc}}{\longrightarrow}\, \approx_{2\e}\! \zeta$. This is because the condition of exact preservation of the reduced state of the catalyst is incompatible with any $\e$-approximation.
\end{rem}

\begin{proof}[Proof of Proposition~\ref{bounds_NE''_prop}]
We will use $\kp$ to denote $\K$-preserving operations, with $\K$ standing for either the separable or the PPT cone. $\kp^{cc}$ is defined as before.

We start by proving the bound on distillable entanglement (Eq.~\eqref{bound_distillable_NE''} and \eqref{bound_distillable_NE''2}). Let $R$ be an achievable rate for entanglement distillation under operations in $\kp^{cc}$. Construct the sequence of catalysts $\tau_n = (\tau_n)_{C_nD_n}$ on the finite-dimensional systems $C_n D_n$, and the sequence of operations $\Lambda_n \in \kp\left(A^n C_n : B^n D_n \to A_0^{\ceil{Rn}} C_n : B_0^{\ceil{Rn}} D_n \right)$, with $A_0$ and $B_0$ being single-qubit systems, such that
\bb
\e_n\coloneqq&\ \frac12 \left\|\Tr_{C_n D_n} \Lambda_n\left(\rho_{AB}^{\otimes n} \otimes \tau_n\right) - \Phi_2^{\otimes \ceil{Rn}} \right\|_1 \tendsn{} 0\, , \\[1ex]
\tau_n =&\ \Tr_{A_0^{\ceil{Rn}} B_0^{\ceil{Rn}}} \left[ \Lambda_n\left(\rho_{AB}^{\otimes n} \otimes \tau_n\right) \right] . 
\ee
Then
\bb
D_\K\big(\rho_{AB}^{\otimes n}\big) + D_\K\big(\tau_n\big) &\textgeq{(i)} D_\K\big(\rho_{AB}^{\otimes n}\otimes \tau_n\big) \\
&\textgeq{(ii)} D_\K\big(\Lambda_n\big(\rho_{AB}^{\otimes n} \otimes \tau_n\big)\big) \\
&\textgeq{(iii)} D_\K^\mk\Big(\Tr_{C_n D_n} \Lambda_n\big(\rho_{AB}^{\otimes n} \otimes \tau_n\big)\Big) + D_\K\Big(\Tr_{A_0^{\ceil{Rn}}B_0^{\ceil{Rn}}} \Lambda_n\big(\rho_{AB}^{\otimes n} \otimes \tau_n\big)\Big) \\
&= D_\K^\mk\Big(\Tr_{C_n D_n} \Lambda_n\left(\rho_{AB}^{\otimes n} \otimes \tau_n\right)\Big) + D_\K\big(\tau_n\big) \\
&\textgeq{(iv)} D_\K^\mk\Big(\Phi_2^{\otimes \ceil{Rn}}\Big) - \e_n \ceil{Rn} - g(\e_n) + D_\K\big(\tau_n\big) \\
&\textgeq{(v)} \ceil{Rn} - 1 - \e_n \ceil{Rn} - g(\e_n) + D_\K\big(\tau_n\big) \, .
\ee
Here, (i)~is an application of the tensor sub-additivity of $D_\K$ as formalized in~\eqref{tensor_subadditivity_DK}, (ii)~comes from its monotonicity under non-entangling operations, (iii)~from Piani's super-additivity-like inequality in Lemma~\eqref{Piani_inequality_lemma}, (iv)~from asymptotic continuity (Lemma~\ref{asymptotic_continuity_DKK_lemma}), and finally~(v) from quasi-normalization (Lemma~\ref{normalization_DKK_lemma}). The above chain of inequalities shows that
\bb
D_\K\big( \rho_{AB}^{\otimes n} \big) \geq (1-\e_n) \ceil{Rn} - 1 - g(\e_n)\, .
\ee
Once again, the claim follows by dividing by $n$, taking the limit as $n\to\infty$, and subsequently the supremum over all achievable rates $R$.

The proof of~\eqref{bound_cost_NE''} and \eqref{bound_cost_NE''2} is entirely analogous, and leverages once again Piani's super-additivity-like inequality (Lemma~\ref{Piani_inequality_lemma}). The chain of inequalities in this case reads
\begin{align}
\ceil{Rn} + D_\K\big(\tau_n\big) &= D_\K\Big(\Phi_2^{\otimes \ceil{Rn}}\Big) + D_\K\big(\tau_n\big) \nonumber \\
&\geq D_\K\Big(\Phi_2^{\otimes \ceil{Rn}} \otimes \tau_n\Big) \nonumber \\
&\geq D_\K\Big(\Lambda_n\Big(\Phi_2^{\otimes \ceil{Rn}} \otimes \tau_n\Big)\Big) \nonumber \\
&\geq D_\K^\mk\Big(\Tr_{C_n D_n} \Lambda_n\Big(\Phi_2^{\otimes \ceil{Rn}} \otimes \tau_n\Big)\Big) + D_\K\Big(\Tr_{A_0^{\ceil{Rn}}B_0^{\ceil{Rn}}} \Lambda_n\Big(\Phi_2^{\otimes \ceil{Rn}} \otimes \tau_n\Big)\Big) \\
&= D_\K^\mk\Big(\Tr_{C_n D_n} \Lambda_n\Big(\Phi_2^{\otimes \ceil{Rn}} \otimes \tau_n\Big)\Big) + D_\K\big(\tau_n\big) \nonumber \\
&\geq D_\K^\mk\big(\rho_{AB}^{\otimes n}\big) - \e_n \log\left( d^n \right) - g(\e_n) + D_\K\big(\tau_n\big)\, . \nonumber
\end{align}
Eliminating $D_\K(\tau_n)$ on both sides, dividing by $n$, taking first the limit $n\to\infty$ and then the supremum over all achievable rates $R$ yields~\eqref{bound_cost_NE''}. Faithfulness of $E_{c,\,\kp^{cc}}$ follows from that of $D_\K^\mk$ as established by~\eqref{faithfulness_DKK}.
\end{proof}

\subsection{Irreversibility of entanglement under LOCC assisted by correlated catalysts}

Combining the insights of the previous sections, we can derive a general statement about the asymptotic irreversibility of entanglement theory assisted by correlated catalysis as follows.

\begin{thm} \label{PPT_bound_thm}
The following holds:
\begin{enumerate}[(a)]
\item A PPT state cannot be converted to an NPT state by means of PPT-preserving operations assisted by correlated catalysts.
\item In particular, not even a single ebit can be distilled with error $\e < 1/2$ by an unbounded number of copies of any given PPT state via LOCC or PPT-preserving operations assisted by correlated catalysts. 
\item Therefore, all PPT entangled states $\rho_{AB}$ are bound entangled under LOCC or PPT-preserving operations assisted by correlated catalysts, but have non-zero cost under LOCC assisted by correlated catalysts. More formally, if $\rho_{AB} \in \PPT(A\!:\!B) \setminus \SEP(A\!:\!B)$ then
\bb
E_{d,\, \locc^{cc}}(\rho_{AB}) = E_{d,\, \pptp^{cc}}(\rho_{AB}) = 0\, ,\qquad E_{c,\, \locc^{cc}}(\rho_{AB}) > 0\, .
\ee
\item Consequently, entanglement theory is irreversible even under LOCC assisted by correlated catalysts.
\end{enumerate}
\end{thm}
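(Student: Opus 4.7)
The plan is to treat the four claims in order, with part~(a) carrying almost all of the technical load; parts~(b), (c), and~(d) will then follow from~(a) together with Proposition~\ref{bounds_NE''_prop}, a one-line overlap bound of Rains, and a trivial inclusion argument, respectively.

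For part~(a), I argue at the single-shot level by combining Piani's superadditivity-like inequality (Lemma~\ref{Piani_inequality_lemma}) with tensor subadditivity and PPT-preserving monotonicity of $D_\PPT$, avoiding any asymptotic or regularization argument. Given $\rho_{AB}\in \PPT(A\!:\!B)$, some $\Lambda\in \PPTP(AC\!:\!BD\to A'C\!:\!B'D)$, and a finite-dimensional catalyst $\tau_{CD}$ with $\Tr_{CD}\Lambda(\rho\otimes\tau)=\omega_{A'B'}$ and $\Tr_{A'B'}\Lambda(\rho\otimes\tau)=\tau_{CD}$, I set $\sigma\coloneqq \Lambda(\rho\otimes\tau)$ and apply Piani's inequality with the output $\omega$ placed in the measured slot:
\begin{equation*}
D_\PPT\!\left(\sigma_{(A'C):(B'D)}\right)\,\geq\, D_\PPT(\tau_{C:D})+D_\PPT^{\ppt}(\omega_{A':B'}).
\end{equation*}
On the other side, $D_\PPT(\sigma)\leq D_\PPT(\rho\otimes \tau)\leq D_\PPT(\rho)+D_\PPT(\tau)=D_\PPT(\tau)$ by monotonicity, tensor subadditivity, and $\rho\in\PPT$. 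Finite-dimensionality of $CD$ ensures $D_\PPT(\tau)<\infty$, so it cancels on both sides, leaving $D_\PPT^{\ppt}(\omega_{A':B'})\leq 0$; the faithfulness bound~\eqref{faithfulness_DKK} then forces $\omega_{A':B'}\in \PPT$.

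Part~(b) is then immediate from~(a) together with Rains' bound $\bra{\Phi_2}\omega'\ket{\Phi_2}=\Tr\!\left((\omega')^\Gamma \Phi_2^\Gamma\right)\leq \|\Phi_2^\Gamma\|_\infty=\tfrac12$, valid for every PPT $\omega'$ on $\C^2\!\otimes\!\C^2$: testing against the projector $\Phi_2$ gives $\tfrac12 \|\omega'-\Phi_2\|_1\geq 1-\bra{\Phi_2}\omega'\ket{\Phi_2}\geq \tfrac12$ for any PPT output of a $\PPTP^{cc}$ (in particular $\locc^{cc}$) protocol applied to $\rho^{\otimes n}$ for any $n$. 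For part~(c), $\rho^{\otimes n}\in \PPT$ for every $n$ yields $D_\PPT^\infty(\rho)=0$, hence $E_{d,\PPTP^{cc}}(\rho)=0$ via Proposition~\ref{bounds_NE''_prop}; positivity of the cost follows from the same proposition applied to the larger class $\sepp\supseteq \locc$, together with $D_\SEP^{\sep}(\rho)>0$ whenever $\rho$ is non-separable. Part~(d) is then immediate from $E_{d,\locc^{cc}}(\rho)=0<E_{c,\locc^{cc}}(\rho)$.

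The main obstacle lies entirely in part~(a), and more specifically in making the correct assignment in Piani's inequality: a naive application with the \emph{catalyst} in the measured slot leaves a residual $D_\PPT(\tau)-D_\PPT^{\ppt}(\tau)\geq 0$ that cannot be controlled. The crucial point is that Piani's inequality is asymmetric between the two halves of the bipartition, and measuring on the output rather than the catalyst makes the (possibly large) $D_\PPT(\tau)$ contribution cancel exactly---precisely the cancellation that the non-strongly-superadditive $D_\PPT$ fails to provide on its own, and which is what makes the correlated-catalytic setting subtle.
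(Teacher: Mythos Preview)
Your proof is correct and, for the core claim~(a), follows exactly the paper's argument: Piani's inequality applied with the output in the measured slot, combined with monotonicity and tensor subadditivity of $D_\PPT$, so that $D_\PPT(\tau)$ cancels and faithfulness of $D_\PPT^{\ppt}$ forces $\omega\in\PPT$. The only notable difference is in part~(c): for $E_{c,\locc^{cc}}(\rho)>0$ the paper invokes Corollary~\ref{sq_bound_cor} and the faithfulness of squashed entanglement, whereas you obtain it from Proposition~\ref{bounds_NE''_prop} via $E_{c,\locc^{cc}}\geq E_{c,\sepp^{cc}}\geq D_\SEP^{\sep}>0$. Your route is arguably cleaner, since faithfulness of $D_\SEP^{\sep}$ is immediate from Pinsker~\eqref{faithfulness_DKK}, while faithfulness of $\sq$ has a somewhat delicate proof history (as the paper itself remarks).
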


\begin{rem}
 The irreversibility of entanglement theory as defined in claim (c) can be deduced directly from Proposition~\ref{bounds_NE''_prop} --- we have that $E_{d, \pptp^{cc}}(\rho_{AB}) \leq D_{\PPT}(\rho_{AB}) = 0$ for a PPT state $\rho_{AB}$, while $\sq(\rho_{AB}) > 0$ as long as $\rho_{AB} \notin \SEP(A\!:\!B)$ due to the faithfulness of squashed entanglement. However, our claims (a) and (b) strengthen the irreversibility and clarify that it cannot be easily circumvented, which is why we include the Theorem as a separate result and provide a complete proof below.
\end{rem}

\begin{proof}
Let us start with claim~(a). By Corollary~\ref{sq_bound_cor}, if $\rho_{AB}$ is a generic entangled state then $E_{c,\, \locc^{cc}}(\rho_{AB}) \geq \sq(\rho_{AB}) > 0$, where the latter inequality comes from the faithfulness of the squashed entanglement. On the other hand, assume that $\rho_{AB} \overset{\pptp^{cc}}{\longrightarrow} \omega_{A'B'}$, so that there exists a catalyst $\tau_{CD}$ and an operation $\Lambda\in \pptp\left(AC:BD\to A'C:B'D\right)$ with the properties that
\bb
\Tr_{CD} \Lambda\left( \rho_{AB}\otimes \tau_{CD}\right) = \omega_{A'B'}\, ,\qquad \Tr_{AB} \Lambda\left( \rho_{AB}\otimes \tau_{CD}\right) = \tau_{CD}\, .
\label{PPT_bound_thm_proof_eq0}
\ee
Then
\bb
D_\PPT\left(\rho_{AB} \otimes \tau_{CD}\right) &\textgeq{(i)} D_\PPT\left(\Lambda\left(\rho_{AB} \otimes \tau_{CD}\right)\right) \\[.8ex]
&\textgeq{(ii)} D_\PPT^\ppt\left(\Tr_{CD}\Lambda\left(\rho_{AB} \otimes \tau_{CD}\right)\right) + D_\PPT\left(\Tr_{AB} \Lambda\left(\rho_{AB} \otimes \tau_{CD}\right)\right) \\[.8ex]
&\texteq{(iii)} D_\PPT^\ppt\left(\omega_{A'B'}\right) + D_\PPT\left(\tau_{CD}\right) .
\label{PPT_bound_thm_proof_eq1}
\ee
Here, (i)~is an application of the monotonicity of $D_\PPT$ under PPT-preserving operations, (ii)~descends from Piani's super-additivity--like inequality (Lemma~\ref{Piani_inequality_lemma}), and (iii)~holds by~\eqref{PPT_bound_thm_proof_eq0}.

Now, let us specialise the above inequality to a PPT state $\rho_{AB}$. In this case we have that
\bb
D_\PPT\left(\rho_{AB} \otimes \tau_{CD}\right) &= \inf_{\sigma_{AC:BD}\in \PPT^1} D\left(\rho_{AB} \otimes \tau_{CD}\, \Big\|\, \sigma_{AC:BD} \right) \\
&\textleq{(iv)} \inf_{\sigma'_{CD}\in \PPT^1} D\left(\rho_{AB} \otimes \tau_{CD}\, \Big\|\, \rho_{AB} \otimes \sigma'_{CD} \right) \\
&= D_\PPT (\tau_{CD})\, .
\label{PPT_bound_thm_proof_eq2}
\ee
In~(iv), for an arbitrary $\sigma'_{CD}\in \PPT^1$ we chose as an ansatz $\sigma_{AC:BD} = \rho_{AB} \otimes \sigma'_{CD}\in \PPT^1$, where we made use of the fact that $\rho_{AB}$ itself is PPT.

Together, \eqref{PPT_bound_thm_proof_eq1}~and~\eqref{PPT_bound_thm_proof_eq2} imply that $D_\PPT^\ppt\left(\omega_{A'B'}\right)\leq 0$, which entails that $D_\PPT^\ppt\left(\omega_{A'B'}\right) = 0$ because $D_\PPT^\ppt$ cannot be negative. Since $D_\PPT^\ppt$ is faithful (see the discussion around~\eqref{faithfulness_DKK}), we conclude that $\omega_{A'B'}$ is a PPT state, concluding the proof of claim~(a).

Claim~(b) follows immediately from the fact that an ebit is at trace norm distance $1/2$ from the set of PPT states. Combining this with Corollary~\ref{sq_bound_cor} yields~(c), which directly implies~(d).
\end{proof}

Some observations are in order:
\begin{enumerate}

\item Using Lemma~\ref{general_bounds_lemma} at full blast and combining it with Lemmata~\ref{elementary_properties_DKK_lemma}--\ref{strong_superadditivity_DKK_lemma} actually yields the more general bound
\bb
E_{d,\, \locc^{cc}}(\rho_{AB}) \leq \liminf_{n\to\infty} \frac1n \sup_{\tau_{CD}} \left\{ D_\K^\mk\left(\rho_{AB}^{\otimes n} \otimes \tau_{CD}\right) - D_\K^\mk\left(\tau_{CD}\right) \right\} ,
\ee
valid for all pairs $(\K, \mk)$ as in~\eqref{pairs}. As usual, $CD$ is assumed to be finite-dimensional. However, in the case of NPT $\rho_{AB}$ we do not know how to upper bound the above quantity in any useful way with a function of $\rho_{AB}$.

\item The above result has a much stronger analogue in the resource theory of coherence: there, one can prove that correlated catalysts cannot change neither the distillable coherence nor the coherence cost under incoherent operations (IO). We discuss this in detail in Section~\ref{sec:coherence}.

\item Until now, all results that we know of prove that (in certain cases) correlated catalysts cannot improve asymptotic rates. At this point the opposite question becomes pressing: \emph{do we know of a single case where correlated catalysts manage to improve an asymptotic rate?}
\end{enumerate}

\tcb{
\subsection{No `perpetuum mobile' of entanglement under correlated catalytic transformations}

As explained in the main text, the inequality
\bb
E_{d,\, \FF^{cc}}(\rho_{AB}) \leq E_{c,\,\FF^{cc}}(\rho_{AB})
\label{no_perpetuum_mobile_SM}
\ee
expresses the impossibility of running a process that using operations in $\FF$ assisted by correlated catalysts creates unlimited entanglement from a finite supply of states $\rho_{AB}$ by distilling entanglement out of them and re-creating the same states via entanglement dilution --- a sort of perpetual motion machine for entanglement. It is only reasonable that we expect~\eqref{no_perpetuum_mobile_SM} to hold for any logically consistent class of free operations $\FF$. And yet, the inequality in~\eqref{no_perpetuum_mobile_SM} does not seem to follow from the results in Proposition~\ref{bounds_NE''_prop}, and in particular from~\eqref{bound_distillable_NE''} and~\eqref{bound_cost_NE''}, simply because $D_\PPT^\infty(\rho_{AB})\geq D_\PPT^{\mk,\infty}(\rho_{AB})$ holds in general due to the data processing inequality~\eqref{data_processing_DK_DKK}, while we would need the opposite inequality to chain~\eqref{bound_distillable_NE''} and~\eqref{bound_cost_NE''}. Granted, we are not aware of any explicit example of a state showing that, for instance, $D_\PPT^\infty > D_\PPT^{\ppt}$ can happen in general. However, based on the results of~\cite{DNE-distillable} we believe that such states exist. If that is the case, then Proposition~\ref{bounds_NE''_prop} cannot be used to establish~\eqref{no_perpetuum_mobile_SM} for operations in $\PPTP^{cc}$.

However, here we will show that these hurdles are just technicalities stemming from the the peculiarities of the set $\PPTP^{cc}$. For more well-behaved sets of free operations like $\PPT$ or $\locc$ there is an independent argument recovering~\eqref{no_perpetuum_mobile_SM}, and thus establishing the logical consistency of entanglement manipulation assisted by correlated catalysts, for almost all states. More precisly, our proof works for every state when $\FF=\PPT$, and for all states except possibly NPT bound entangled states (provided they exist~\cite{Horodecki-open-problems, OpenProblem-NPT-bound}) when $\FF=\locc$. 
\tcb{This was mentioned without proof in~\cite{kondra_2023} for the case of LOCC, and below we record a complete derivation that also considers PPT operations.}

\begin{prop} \label{no_perpetuum_mobile_prop_SM}
For all bipartite states $\rho_{AB}$ except possibly NPT bound entangled states, it holds that
\bb
E_{d,\, \locc^{cc}}(\rho_{AB}) \leq E_{c,\,\locc^{cc}}(\rho_{AB})\, .
\label{no_perpetuum_mobile_LOCC}
\ee
For all bipartite states $\rho_{AB}$,
\bb
E_{d,\, \PPT^{cc}}(\rho_{AB}) \leq E_{c,\,\PPT^{cc}}(\rho_{AB})\, .
\label{no_perpetuum_mobile_PPT}
\ee
\end{prop}

Before we come to the proof of Proposition~\ref{no_perpetuum_mobile_prop_SM}, we need to recall the notion of \emph{maximal transformation rate}, introduced in~\cite[Definition~11]{ferrari_2022} and employed extensively in~\cite{concurrent, Shiraishi2023, kondra_2023}, where it is referred to as \emph{marginal} transformation rate. For two states $\rho$ (input) and $\sigma$ (output) and a class of free operations $\FF$, this is defined by as the supremum of all maximally achievable rates $\rho\to\sigma$. Here, a number $R>0$ is called a maximally achievable rate $\rho\to\sigma$ if for all $\e>0$ and for all sufficiently large $n$ there exists a transformation $\Lambda_n\in \FF$, going from $n$ copies of the input system to $m=\ceil{Rn}$ copies of the output system, with the property that the $j^{\text{th}}$-system reduction of $\omega^{(m)} \coloneqq \Lambda_n\big(\rho^{\otimes n}\big)$ resembles $\sigma$ up to an error $\e$, for all $j=1,\ldots, m$. In formula, this can be expressed as $\max_j \frac12 \left\|\,\omega^{(m)}_j - \sigma \,\right\|_1\leq \e$, where $\omega^{(m)}_j$ denotes the reduced state of $\omega^{(m)}$ on the $j^{\text{th}}$ system. Formalizing what we have just explained in words, the maximal transformation rate can thus be defined by
\bb
R_{\FF_m}(\rho \to \sigma) \coloneqq \sup\left\{ R :\ 
\rho^{\otimes n} \overset{\FF}{\longrightarrow} \omega^{(\ceil{Rn})},\ \lim_{n\to\infty} \max_{j=1,\ldots, \ceil{Rn}} \frac12 \left\|\,\omega^{(\ceil{Rn})}_j - \sigma\, \right\|_1 = 0 \right\} .
\label{max_rate}
\ee
There is also a version of the above rate that includes assistance by correlated catalysts, and can be defined formally by
\bb
R_{\FF_m^{cc}}(\rho \to \sigma) \coloneqq \sup\left\{ R :\ 
\rho^{\otimes n} \overset{\FF^{cc}}{\longrightarrow} \omega^{(\ceil{Rn})},\ \lim_{n\to\infty} \max_{j=1,\ldots, \ceil{Rn}} \frac12 \left\|\,\omega^{(\ceil{Rn})}_j - \sigma\, \right\|_1 = 0 \right\} .
\label{max_rate_correlated_catalysts}
\ee

\begin{note}
In~\cite{concurrent}, the above rate $R_{\FF_m^{cc}}$ is denoted by $R_{\mathrm{mc}}$.
\end{note}

Clearly,
\bb
R_{\FF_m}(\rho \to \sigma) \leq R_{\FF_m^{cc}}(\rho \to \sigma)
\label{max_rate_upper_bounded_max_cc_rate}
\ee
for all pairs of states $\rho,\sigma$ and all classes of free operations $\FF$. Remarkably, it has been shown recently~\cite[Proposition~5]{concurrent} that provided that $\FF$ is closed under both sequential and parallel composition, the inequality in~\eqref{max_rate_upper_bounded_max_cc_rate} is actually an equality for all $\FF$-distillable states, i.e.
\bb
E_{d,\, \FF}(\rho) > 0 \quad \Longrightarrow\quad R_{\FF_m}(\rho\to\sigma) = R_{\FF_m^{cc}}(\rho\to\sigma)\, .
\label{m=mc_for_distillable_states}
\ee
The proof of~\cite[Proposition~5]{concurrent} is presented for the case of $\FF = \locc$, but upon inspection one can see that it works equally well for all sets of free operations $\FF$ that are closed under both sequential and parallel composition. These requirements mean that for all $\Lambda_1,\Lambda_2\in \FF$ it should hold that $\Lambda_1\circ \Lambda_2\in \FF$ (whenever this sequential composition makes sense) as well as $\Lambda_1\otimes \Lambda_2\in \FF$.

In the case of entanglement theory, the results of~\cite[Proposition~7]{concurrent} also imply that for any class $\FF$ that is closed under both sequential and parallel composition and contains the set of one-way LOCCs, say from Alice to Bob, all $\FF$-distillable state $\rho=\rho_{AB}$ and all pure states $\phi=\phi_{AB}$ satisfy that~\cite[Proposition~7]{concurrent}
\bb
R_{\FF_m}(\rho\rightarrow\phi) = R_{\FF^{cc}}(\rho\rightarrow\phi) = R_{\FF_m^{cc}}(\rho\rightarrow\phi) = R_\FF(\rho\rightarrow\phi) = \frac{E_{d,\,\FF}(\rho)}{S(\phi_A)}\, .
\label{pure_collapse_rates}
\ee
Again, the proof of~\cite[Proposition~7]{concurrent} is stated for LOCC transformations only, but it works without significant changes for all other $\FF$ satisfying the above assumptions. The fact that $\FF$ includes one-way LOCCs is needed only to deduce the continuity of the $\FF$-distillable entanglement on pure states via the hashing bound~\cite{devetak2005} --- cf.~\cite[Eq.~(36)]{concurrent}.

Another remarkable property of maximal transformation rates is the following `transitivity property', which generalizes~\cite[Lemma~2]{kondra_2023}. 

\begin{prop} \label{prop:MarginalRates}
Let $\FF$ be any set of free operations that is closed under both sequential and parallel composition. Then the maximal transformation rates fulfill the relation 
\begin{equation}
R_{\FF_m}(\rho\rightarrow\omega)\cdot R_{\FF_m}(\omega\rightarrow\sigma)\leq R_{\FF_m}(\rho\rightarrow\sigma)
\end{equation}
for any three states $\rho$, $\omega$, and $\sigma$. 
\end{prop}

\begin{proof}
We can generalize the same type of `chessboard' argument found in~\cite[Lemma~2]{kondra_2023}. To appreciate where the chessboard structure comes from, we point the reader to~\cite[Figure~1]{kondra_2023}. Call $r_1\coloneqq R_{\FF_m}(\rho\rightarrow\omega)$ and $r_2\coloneqq R_{\FF_m}(\omega\rightarrow\sigma)$. For any given $\e>0$, let $k\in \N_+$ be large enough so that there exists $\Theta\in \FF$ with the property that 
\bb
\max_{j=1,\ldots, \ceil{r_2 k}} \frac12 \left\|\,\Theta\big(\omega^{\otimes k}\big)_j - \sigma\, \right\|_1\leq \frac{\e}{2}\, ,
\label{column_transformation_error}
\ee
where as before $\Theta\big(\omega^{\otimes k}\big)_j$ denotes the reduced state of $\Theta\big(\omega^{\otimes k}\big)$ on the $j^\text{th}$ subsystem. Such a $k$ evidently exists due to the definition of $R_{\FF_m}(\omega\to\sigma)$. For the same reason, for all large enough $m$ we can also find $\Lambda_m$ such that $\xi^{(\ceil{r_1 m})} \coloneqq \Lambda_m\big(\rho^{\otimes m}\big)$ satisfies that
\bb
\max_{\ell=1,\ldots, \ceil{r_1 m}} \frac12 \left\|\,\xi^{(\ceil{r_1 m})}_\ell - \omega\, \right\|_1\leq \frac{\e}{2k}\, .
\ee

Now, suppose that we start with $n$ copies of $\rho$, with $n$ asymptotically large. Up to discarding at most $k-1$ of them, i.e.\ a constant number as $n\to\infty$, we can arrange them into $k$ `rows' of $m = \floor{n/k}$ copies each, thus forming a rectangular, $k\times m$ array. We now apply (i)~$\Lambda_m$ to every row, and after that (ii)~$\Theta$ to every column. Note that step~(i) transforms the array of quantum systems into a $k\times \ceil{r_1 m}$ array, and this gets in turn transformed into a $\ceil{r_2 k}\times \ceil{r_1 m}$ array upon completion of step~(ii). The state obtained after~(i) is simply
\bb
\Big(\xi^{(\ceil{r_1 m})}\Big)^{\otimes k} = \Big(\Lambda_{m}\big(\rho^{\otimes m}\big)\Big)^{\otimes k}\, .
\ee
Since for every $\ell = 1,\ldots, \ceil{r_1 m}$ it holds that $\frac12 \left\|\,\xi^{(\ceil{r_1 m})}_\ell - \omega\, \right\|_1\leq \frac{\e}{2k}$, we also have that
\bb
\max_{\ell=1,\ldots, \ceil{r_1 m}} \frac12 \left\|\big(\xi^{(\ceil{r_1 m})}_\ell\big)^{\otimes k} - \omega^{\otimes k} \right\|_1\leq \frac{\e}{2}\, ,
\label{column_error_estimate}
\ee
by a simple iterated application of the triangle inequality. Consider now the final output state obtained after step~(ii), denoted by $\zeta^{(\ceil{r_2 k} \ceil{r_1 m})} = \Theta^{\otimes \ceil{r_1 m}} \Big( \big( \xi^{(\ceil{r_1 m})}\big)^{\otimes k} \Big)$. Each one of the 
\bb
\ceil{r_2 k} \cdot \ceil{r_1 m} \geq r_1 r_2\, k m = r_1 r_2\, k \floor{n/k} \geq r_1 r_2 (n-k)
\ee
output systems corresponds to a definite row and column of the rectangular array. Let $j,\ell$ denote the row and column indices, respectively. By tracing away all columns but the $\ell^{\text{th}}$ one, we have that
\bb
\zeta^{(\ceil{r_2 k} \ceil{r_1 m})}_{j,\ell} = \Theta\Big(\big( \xi^{(\ceil{r_1 m})}_\ell\big)^{\otimes k}\Big)_j\, ,
\label{eliminate_all_but_one_column}
\ee
implying that
\bb
\frac12 \left\| \,\zeta^{(\ceil{r_2 k} \ceil{r_1 m})}_{j,\ell} - \sigma\, \right\|_1 &\textleq{(i)} \frac12 \left\| \,\zeta^{(\ceil{r_2 k} \ceil{r_1 m})}_{j,\ell} - \Theta\big(\omega^{\otimes k} \big)_j \, \right\|_1 + \frac12 \left\| \,\Theta\big(\omega^{\otimes k} \big)_j - \sigma \, \right\|_1 \\
&\texteq{(ii)} \frac12 \left\| \,\Theta\Big(\big( \xi^{(\ceil{r_1 m})}_\ell\big)^{\otimes k} - \omega^{\otimes k} \Big)_j\, \right\|_1 + \frac12 \left\| \,\Theta\big(\omega^{\otimes k} \big)_j - \sigma \, \right\|_1 \\
&\textleq{(iii)} \frac12 \left\| \,\big( \xi^{(\ceil{r_1 m})}_\ell\big)^{\otimes k} - \omega^{\otimes k}\, \right\|_1 + \frac12 \left\| \,\Theta\big(\omega^{\otimes k} \big)_j - \sigma \, \right\|_1 \\
&\textleq{(iv)} \e\, .
\ee
Here, (i)~holds due to the triangle inequality, (ii)~because of~\eqref{eliminate_all_but_one_column}, (iii)~follows from the data processing inequality for the trace norm, and finally (iv)~can be verified by combining~\eqref{column_transformation_error} and~\eqref{column_error_estimate}. 

Since this works for every $j,\ell$, every subsystem of $\zeta^{(\ceil{r_2 k} \ceil{r_1 m})}$ is approximately in the state $\sigma$, up to an error $\e$. Since there are at least $r_1r_2 (n-k)$ such sub-systems and $\e>0$ was arbitrary, this shows that the rate $\lim_{n\to\infty} \frac{r_1r_2 (n-k)}{n} = r_1r_2 = R_{\FF_m}(\rho\rightarrow\omega)\, R_{\FF_m}(\omega\rightarrow\sigma)$ is maximally achievable for $\rho\to\sigma$, thus completing the proof.
\end{proof}

We are now ready to present a complete proof of Proposition~\ref{no_perpetuum_mobile_prop_SM}.

\begin{proof}[Proof of Proposition~\ref{no_perpetuum_mobile_prop_SM}]
We can write the following chain of inequalities, valid for any $\FF$-distillable state $\rho$:
\bb
\frac{E_{d,\, \FF^{cc}}(\rho)}{E_{c,\, \FF^{cc}}(\rho)} &= R_{\FF^{cc}}(\Phi_2\to\rho)\, R_{\FF^{cc}}(\rho\to \Phi_2) \\
&\textleq{(i)} R_{\FF_m^{cc}}(\Phi_2\to\rho)\, R_{\FF_m^{cc}}(\rho\to \Phi_2) \\
&\texteq{(ii)} R_{\FF_m}(\Phi_2\to\rho)\, R_{\FF_m}(\rho \to \Phi_2) \\
&\textleq{(iii)} R_{\FF_m}(\Phi_2 \to\Phi_2) \\
&\textleq{(iv)} 1\, .
\label{chain_inequalities_no_perpetuum_mobile}
\ee
Here, (i)~holds because in general $R_{\FF^{cc}}\leq R_{\FF_m^{cc}}$ as per~\eqref{max_rate_upper_bounded_max_cc_rate}, (ii)~follows from~\eqref{m=mc_for_distillable_states}, (iii)~is an application of Proposition~\ref{prop:MarginalRates}, and finally~(iv) is due to~\eqref{pure_collapse_rates}.

This proves~\eqref{no_perpetuum_mobile_LOCC} for the case of LOCC-distillable states $\rho$. When $\rho$ is PPT, the left-hand side of~\eqref{no_perpetuum_mobile_LOCC} vanishes due to our main result, Theorem~\ref{PPT_bound_thm}, and there is nothing to prove. As for~\eqref{no_perpetuum_mobile_PPT}, the situation is even simpler here: for any given $\rho$, either $\rho$ is PPT, and then our main result tells us once again that the left-hand side of~\eqref{no_perpetuum_mobile_PPT} vanishes, with nothing left to prove, or else $\rho$ is NPT, and in this case we know that it must be PPT-distillable~\cite{Eggeling2001}, entailing that we can use~\eqref{chain_inequalities_no_perpetuum_mobile} to establish~\eqref{no_perpetuum_mobile_PPT}.
\end{proof}
}

\subsection{Applications to the resource theory of coherence}\label{sec:coherence}

Throughout this section, we will look into the usefulness of catalysts for the resource theory of coherence~\cite{aberg_2006, baumgratz_2014, winter_2016, streltsov_2017}. In this resource theory, each allowed system with finite-dimensional Hilbert space $\HH\simeq \C^d$ is equipped with a privileged orthonormal basis $\{\ket{i}\}_{i=1,\ldots d}$, conventionally identified with the computational basis. Possible choices of free operations include strictly incoherent operations (SIO)~\cite{winter_2016, yadin_2016, zhao_2018, lami_2019-1, lami_2020-1}, incoherent operations (IO)~\cite{baumgratz_2014, winter_2016}, dephasing-covariant incoherent operations (DIO)~\cite{chitambar_2016, chitambar_2016-1, marvian_2016, chitambar_2018}, maximal incoherent operations (MIO)~\cite{aberg_2006, regula_2017}, and others~\cite{chitambar_2016, vicente_2017}.

Depending on the set of free operations one chooses, the asymptotic behaviour of the theory differs. For example, bound coherence under SIO is generic~\cite{zhao_2018, lami_2019-1, lami_2020-1}. While there is no bound coherence at all under IO, i.e., all states that are not incoherent are IO distillable, the corresponding resource theory still admits generic irreversibility~\cite{winter_2016}. Finally, the resource theory of coherence is known to be reversible under MIO and even under the strictly smaller class of DIO~\cite{chitambar_2018,berta_2022}. In light of this discussion, it is  natural to ask ourselves whether the use of catalysts could improve either distillation or dilution of coherence under IO or SIO. 
It is somewhat remarkable that we can answer this question in the negative for dilution under both SIO and IO, and for distillation under IO, using the techniques we have developed here together with some off-the-shelf results. The only remaining case that we leave open here is that of SIO distillation, which may or may not be affected asymptotically by the introduction of catalysts --- but there is anyway no hope of obtaining a theory which is asymptotically reversible, since SIO constitute a subset of IO.

Let us start by recalling some definitions. A quantum channel $\Lambda:\MM\big(\C^d\big) \to \MM\big(\C^{d'}\big)$ acting on the set of $d\times d$ matrices $\MM\big(\C^d\big)$ is said to be an \emph{incoherent operation} (IO) if it admits a Kraus representation $\Lambda(\cdot)=\sum_\alpha K_\alpha (\cdot) K_\alpha^\dag$ with the property that for all $\alpha$ and $i\in \{1,\ldots, d\}$ there exists $j\in \{1,\ldots, d'\}$ such that $K_\alpha\ket{i} \propto \ket{j}$. If at the same time it also holds that for all $\alpha$ and $i\in \{1,\ldots, d'\}$ there exists $j\in \{1,\ldots, d\}$ such that $K_\alpha^\dag \ket{i} \propto \ket{j}$, then we say that $\Lambda$ is a \emph{strictly incoherent operation} (SIO). With this terminology in place, we can now give the following definitions.

Given two finite-dimensional systems $A,B$, two states $\rho_A$ on $A$ and $\omega_B$ on $B$, and a class of operations $\FF\in \left\{\SIO,\IO\right\}$, we can re-adapt Definition~\ref{catalysts_def} by writing:
\begin{enumerate}[(i')]
\item $\rho_A \overset{\FF^{c}}{\longrightarrow} \omega_B$, if there exists a finite-dimensional state $\tau_C$ and an operation $\Lambda\in \FF\left(AC\to BC\right)$ such that
\bb
\Lambda\left(\rho_A \otimes \tau_C \right) = \omega_B \otimes \tau_C\, ;
\ee
\item $\rho_A \overset{\FF^{cc}}{\longrightarrow} \omega_B$, if there exists a finite-dimensional state $\tau_C$ and some $\Lambda\in \FF\left(AC\to BC\right)$ such that
\bb
\Tr_C \Lambda\left(\rho_A \otimes \tau_C \right) = \omega_B\, ,\qquad \Tr_B \Lambda\left(\rho_A \otimes \tau_C\right) = \tau_C\, ;
\ee
\item $\rho_A \overset{\FF^{mc}}{\longrightarrow} \omega_B$, if there exists a finite collection of finite-dimensional states $\tau_{C_j}$, $j=1,\ldots, k$, and an operation $\Lambda\in \FF\left(AC\to BC\right)$, where $C\coloneqq C_1\ldots C_k$, such that
\bb
\Lambda\left(\rho_A \otimes \tau_C \right) = \omega_B\! \otimes \tau'_C\, , \quad \Tr_{\myhat{C}_j} \!\tau'_{C} = \tau_{C_j}\ \ \forall\ j=1,\ldots, k\, ,
\ee
where $\tau_C\coloneqq \bigotimes_j \tau_{C_j}$ is the initial state of the overall catalyst, and $\myhat{C}_j$ denotes the system obtained by tracing away all of the $C$ systems except from the $j^\text{th}$ one.
\item $\rho_A \overset{\FF^{mcc}}{\longrightarrow} \omega_B$, if there exists a finite collection of finite-dimensional states $\tau_{C_j}$, $j=1,\ldots, k$, and an operation $\Lambda\in \FF\left(AC\to BC\right)$, where $C\coloneqq C_1\ldots C_k$, such that
\bb
\Tr_C \Lambda\left(\rho_A \otimes \tau_C \right) = \omega_B\, ,\qquad \Tr_{B \myhat{C}_j} \Lambda\left(\rho_A \otimes \tau_C\right) = \tau_{C_j} \quad \forall\ j=1,\ldots, k\, ,
\ee
where $\tau_C\coloneqq \bigotimes_j \tau_{C_j}$, and once more $\myhat{C}_j$ denotes the system obtained by tracing away all $C$ systems except from the $j^\text{th}$ one.
\end{enumerate}
For $\widetilde{\FF}\in \left\{\FF, \FF^c, \FF^{cc}, \FF^{mc}, \FF^{mcc}\right\}$ and $\e\in [0,1]$, we write $\rho_A \overset{\widetilde{\FF}}{\longrightarrow}\, \approx_\e\! \omega_B$ if there exists a state $\omega'_B$ such that
\bb
\rho_A \overset{\widetilde{\FF}}{\longrightarrow} \omega'_B\, ,\qquad \frac12\left\| \omega'_B - \omega_B \right\|_1\leq \e\, .
\ee
For every pair of states $\rho_A$ and $\omega_B$, the corresponding \emph{asymptotic rate} is given by
\bb
R_{\widetilde{\FF}}\left( \rho_A \to \omega_B \right) \coloneqq \sup\left\{ R:\ \rho_A^{\otimes n} \overset{\widetilde{\FF}}{\longrightarrow}\, \approx_{\e_n}\! \omega_B^{\otimes \ceil{Rn}},\ \lim_{n\to\infty} \e_n = 0 \right\} .
\ee
The \emph{distillable coherence} and the \emph{coherence cost under operations in $\widetilde{\FF}$} are defined by
\bb
C_{d,\, \widetilde{\FF}}\left(\rho\right) \coloneqq R_{\widetilde{\FF}}\left( \rho \to \ketbra{+} \right) ,\qquad C_{c,\, \widetilde{\FF}}\left(\rho\right) \coloneqq \frac{1}{R_{\widetilde{\FF}}\left( \ketbra{+} \to \rho\right)}\, ,
\ee
where
\bb
\ket{+} \coloneqq \frac{1}{\sqrt2} \left(\ket{0} + \ket{1}\right)
\ee
denotes the single-qubit \emph{coherence bit}.

The distillable coherence and the coherence cost under IO (unassisted by catalysis) have been computed in~\cite{winter_2016}. Therein, it is already remarked that the coherence cost remains the same under SIO. The problem of distillation under SIO was instead solved in~\cite{lami_2020-1}. 

Concerning IO distillation, it holds that~\cite[Theorem~6]{winter_2016}
\bb
C_{d,\IO}(\rho) = C_r(\rho) \coloneqq S\left(\Delta(\rho)\right) - S(\rho)\, ,
\label{relative_entropy_coherence}
\ee
where $S(\omega) \coloneqq -\Tr \omega \log_2\omega$ denotes the \emph{von Neumann entropy}, $\Delta(\cdot)\coloneqq \sum_i \ketbra{i}(\cdot)\ketbra{i}$ is the completely dephasing channel, and $C_r$ is called the \emph{relative entropy of coherence}. 

Distillation is in general much more difficult with SIO, a set of operations under which it is known that almost all states (in a measure-theoretic sense) are undistillable~\cite{lami_2019-1}. 
More precisely, we have that~\cite[Theorem~3]{lami_2020-1}
\bb
C_{d,\SIO}(\rho) = Q(\rho) \coloneqq S\left(\Delta(\rho)\right) - S(\bar{\rho})\, ,
\label{quintessential_coherence}
\ee
where the state $\bar{\rho}$ is defined by
\bb
\bar{\rho}_{ij} \coloneqq \left\{ \begin{array}{ll} \rho_{ij} & \text{ if $|\rho_{ij}| = \sqrt{\rho_{ii}\rho_{jj}}$,} \\[1ex] 0 & \text{ otherwise.} \end{array} \right.
\label{trimmed}
\ee
The quantity $Q$ is called \emph{quintessential coherence}.

At this point it is perhaps surprising that SIO and IO are entirely equivalent when it comes to coherence dilution. Indeed, it holds that~\cite[Theorem~8]{winter_2016}
\bb
C_{c,\SIO}(\rho) = C_{c,\IO}(\rho) = C_f(\rho) \coloneqq \inf_{\rho = \sum_x p_x \psi_x} \sum_x p_x\, S\left( \Delta(\psi_x)\right)\, ,
\label{coherence_formation}
\ee
where the infimum is over all convex decompositions $\rho=\sum_x p_x \psi_x$ of $\rho$ into pure states $\psi_x = \ketbra{\psi_x}$, and $C_f$ is called the \emph{coherence of formation}.

Before we state the main result of this section, it is useful to record some known facts concerning the above monotones.

\begin{lemma} \label{ssa_Cr_Cf_lemma}
The relative entropy of coherence and the coherence of formation, defined by~\eqref{relative_entropy_coherence} and~\eqref{coherence_formation}, respectively, are monotonic under IO, strongly super-additive, completely additive, asymptotically continuous, and normalized. 
\end{lemma}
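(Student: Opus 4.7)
The plan is to verify each of the five listed properties for $C_r$ and $C_f$ in turn; all are by now classical facts in the resource theory of coherence that can be assembled from standard arguments.

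For $C_r = S\circ\Delta - S$, four of the five properties are essentially immediate. Normalization is the one-line calculation $C_r(\ketbra{+}) = S\!\left(\tfrac{1}{2}(\ketbra{0}+\ketbra{1})\right) - S(\ketbra{+}) = 1$. Complete additivity follows from $\Delta_{XY}=\Delta_X\otimes\Delta_Y$ combined with tensor additivity of the von Neumann entropy. Monotonicity under IO is best seen from the variational form $C_r(\rho) = \min_{\sigma\in\mathcal{I}}D(\rho\|\sigma)$, combined with the data-processing inequality for the Umegaki relative entropy and the fact that IO preserves the set $\mathcal{I}$ of incoherent states. Asymptotic continuity follows from two applications of Fannes--Audenaert, to $S(\rho)$ and to $S(\Delta(\rho))$ separately, using trace-norm contractivity of $\Delta$. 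The less routine property, strong super-additivity, can also be reduced to a single data-processing inequality: rearranging the desired bound $C_r(\rho_{AA'})\geq C_r(\rho_A)+C_r(\rho_{A'})$ yields the equivalent statement
\begin{equation*}
I(A\!:\!A')_\rho \;\geq\; I(A\!:\!A')_{(\Delta_A\otimes\Delta_{A'})(\rho)},
\end{equation*}
which is the data-processing inequality for the quantum mutual information applied to the local channel $\Delta_A\otimes\Delta_{A'}$.

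For $C_f$, normalization is again immediate, since the trivial decomposition of $\ketbra{+}$ saturates the infimum at $S(\Delta(\ketbra{+}))=1$. Monotonicity under IO is the standard convex-roof argument of~\cite{baumgratz_2014,winter_2016}: given an incoherent Kraus decomposition $\Lambda(\cdot)=\sum_\alpha K_\alpha(\cdot)K_\alpha^\dagger$, any pure-state decomposition $\rho=\sum_x p_x\psi_x$ is carried to a pure-state decomposition $\Lambda(\rho)=\sum_{x,\alpha}p_x\|K_\alpha\ket{\psi_x}\|^2\phi_{x,\alpha}$ with $\ket{\phi_{x,\alpha}}\propto K_\alpha\ket{\psi_x}$, and the incoherence constraint on each $K_\alpha$ combined with the chain rule for the Shannon entropy of the joint distribution of diagonal outcomes and Kraus labels forces $\sum_\alpha\|K_\alpha\ket{\psi_x}\|^2 S(\Delta(\phi_{x,\alpha}))\leq S(\Delta(\psi_x))$, giving the bound after optimizing over decompositions. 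Asymptotic continuity of the convex-roof extension of the bounded, continuous pure-state function $\psi\mapsto S(\Delta(\psi))$ follows from the general convex-roof continuity framework of Nielsen as refined in~\cite{synak-radtke_2006}.

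The substantive remaining steps are complete additivity and strong super-additivity of $C_f$. Sub-additivity $C_f(\rho\otimes\omega)\leq C_f(\rho)+C_f(\omega)$ is immediate (tensor optimal decompositions), so complete additivity reduces to the super-additive direction, itself a special case of strong super-additivity. The cleanest route to the latter is via the identity $C_f(\rho)=E_F(\rho^{MC})$, where $\rho^{MC}_{AB}=\sum_{ij}\rho_{ij}\ketbraa{ii}{jj}$ is the maximally correlated state associated with $\rho$: the identity holds on pure states because $S(\Delta(\psi))=S((\psi^{MC})_A)$ and extends to mixed states by equality of the corresponding convex roofs, as recorded in~\cite{winter_2016}. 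Strong super-additivity of $C_f$ under the system split $A\to AA'$ then corresponds to strong super-additivity of $E_F$ on the induced family of maximally correlated states, which is a known result from~\cite{yang_2005}. The main obstacle, as compared to $C_r$, is precisely this last step: unlike $C_r$, $C_f$ has no direct single-letter representation reducing strong super-additivity to a one-line data-processing argument, and the reduction to $E_F$ on maximally correlated states is genuinely needed.
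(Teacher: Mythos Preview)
Your treatment of $C_r$ is correct, and your strong super-additivity argument is in fact cleaner than the paper's. The paper passes through the maximally correlated state $\widetilde{\rho}_{AA'}$ and the coherent information identity $C_r(\rho_A)=I_c(A\rangle A')_{\widetilde{\rho}}$, then invokes strong sub-additivity of the conditional entropy. Your route --- rewriting $C_r(\rho_{AA'})-C_r(\rho_A)-C_r(\rho_{A'})=I(A\!:\!A')_\rho-I(A\!:\!A')_{(\Delta_A\otimes\Delta_{A'})(\rho)}$ and invoking data processing for $D(\rho_{AA'}\|\rho_A\otimes\rho_{A'})$ under the local product channel $\Delta_A\otimes\Delta_{A'}$ --- avoids the auxiliary bipartite construction entirely and reaches the conclusion in one line. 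Both approaches ultimately rest on strong sub-additivity, but yours is more direct.

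For $C_f$ there is a gap at the crucial step. Your reduction $C_f(\rho)=E_F(\rho^{MC})$ together with the (correct) observation that $\Tr_{A'B'}(\rho_{AA'})^{MC}=(\rho_A)^{MC}$ shows only that strong super-additivity of $C_f$ is \emph{equivalent} to strong super-additivity of $E_F$ restricted to maximally correlated states --- this is a restatement, not a reduction to something previously known. The reference~\cite{yang_2005} that you invoke does not contain this result: that paper establishes strict positivity of the entanglement cost for all entangled states, and says nothing about super-additivity of $E_F$ on MC states. The paper instead appeals directly to~\cite{Liu2018} for the strong super-additivity of $C_f$; you should either cite that work or supply an actual argument, since the property is genuinely nontrivial and is the one place where the convex-roof structure does real work.
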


\begin{proof}
The fact that the relative entropy of coherence and the coherence of formation are completely additive, asymptotically continuous, and normalized IO monotones is established in~\cite{winter_2016}. The strong super-additivity of $C_f$ is shown in~\cite{Liu2018}. As for $C_r$, its strong super-additivity follows from the fact that the same property holds for the coherent information $I_c(A\rangle A')_\omega \coloneqq S(\omega_{A'}) - S(\omega_{AA'}) = - H(A|A')_\omega$. To establish the link between the two concepts, one needs to associate to every $d$-dimensional state $\rho_A$ a corresponding `maximally correlated' state $\widetilde{\rho}_{AA'}\coloneqq \sum_{i,j} \rho_{ij} \ketbraa{i}{j}_A\otimes \ketbraa{i}{j}_{A'}$. Then,
\bb
C_r(\rho_A) = I_c(A\rangle A')_{\widetilde{\rho}}\, .
\ee
Now, if $\rho = \rho_{AB}$ is bipartite, we can use~\cite[Theorem~11.16]{nielsen_2011} to write
\bb
C_r(\rho_{AB}) &= I_c(AB\rangle A'B')_{\widetilde{\rho}} \\
&= - H(AB|A'B')_{\widetilde{\rho}} \\
&\geq - H(A|A')_{\widetilde{\rho}} - H(B|B')_{\widetilde{\rho}} \\
&= I_c(A\rangle A')_{\widetilde{\rho}} + I_c(B\rangle B')_{\widetilde{\rho}} \\
&= C_r(\rho_A) + C_r(\rho_B)\, .
\ee
This concludes the proof.
\end{proof}

We are now ready to establish the following.

\begin{thm} \label{coherence_catalysts_thm}
The distillable coherence and the coherence cost under IO, as well as the coherence cost under SIO, do not change if one allows assistance by either catalysts, or correlated catalysts, or marginal catalysts, or marginal correlated catalysts. In formulae, for all states $\rho$ it holds that
\begin{align}
C_{d,\IO}(\rho) &= C_{d,\IO^c}(\rho) = C_{d,\IO^{cc}}(\rho) = C_{d,\IO^{mc}}(\rho) = C_{d,\IO^{mcc}}(\rho) = C_r(\rho)\, , \label{IO_distillation_catalysts} \\[1ex]
C_{c,\IO}(\rho) &= C_{c,\IO^c}(\rho) = C_{c,\IO^{cc}}(\rho) = C_{c,\IO^{mc}}(\rho) = C_{c,\IO^{mcc}}(\rho) \label{IO_dilution_catalysts}  \\
&= C_{c,\SIO}(\rho) = C_{c,\SIO^c}(\rho) = C_{c,\SIO^{cc}}(\rho) = C_{c,\SIO^{mc}}(\rho) = C_{c,\SIO^{mcc}}(\rho) = C_f(\rho)\, , \label{SIO_dilution_catalysts}
\end{align}
where the relative entropy of coherence $C_r$ and the coherence of formation $C_f$ are defined respectively by~\eqref{relative_entropy_coherence} and~\eqref{coherence_formation}.
\end{thm}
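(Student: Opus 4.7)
All ten identities will follow from a single two-sided squeeze. One side is provided by the non-catalytic values $C_{d,\IO}(\rho)=C_r(\rho)$ and $C_{c,\IO}(\rho)=C_{c,\SIO}(\rho)=C_f(\rho)$ recalled in~\eqref{relative_entropy_coherence} and~\eqref{coherence_formation}; the other side by the chain~\eqref{chain} in Lemma~\ref{general_bounds_lemma}, fed with the monotones $C_r$ and $C_f$. Lemma~\ref{ssa_Cr_Cf_lemma} provides exactly what we need: both $C_r$ and $C_f$ are IO-monotonic (hence also SIO-monotonic, since $\SIO\subseteq\IO$), completely additive, normalized on $\ketbra{+}$, strongly super-additive, and asymptotically continuous (so in particular lower semi-continuous). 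All hypotheses of the final clause of Lemma~\ref{general_bounds_lemma} are therefore met.

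\textbf{Transfer to coherence and the sandwich.} Lemma~\ref{general_bounds_lemma} is written in the language of bipartite entanglement, but its proof only invokes the abstract monotone properties above together with the defining trace-preservation constraints of the catalyst; the bipartite cut is never used in an essential way. Re-running the same argument in the unipartite coherence setting with $\ketbra{+}$ in place of $\Phi_2$ yields, for $\FF\in\{\SIO,\IO\}$ and any coherence monotone $M$ satisfying the hypotheses,
\begin{equation}
C_{d,\FF^{c}}(\rho) \leq C_{d,\FF^{cc}}(\rho) \leq C_{d,\FF^{mcc}}(\rho) \leq M(\rho) \leq C_{c,\FF^{mcc}}(\rho) \leq C_{c,\FF^{cc}}(\rho) \leq C_{c,\FF^{c}}(\rho).
\end{equation}
Combining this with the hierarchy~\eqref{hierarchy_distillable_and_cost} and with the elementary observation that $\FF^{c}$ is the $k=1$ case of $\FF^{mc}$ (so that $C_{d,\FF^{c}}(\rho)\leq C_{d,\FF^{mc}}(\rho)\leq C_{d,\FF^{mcc}}(\rho)$ and the reverse inequalities hold for the cost), this one chain simultaneously controls all five catalytic variants. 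Specialising to $\FF=\IO$ and $M=C_r$, every variant of distillable coherence is upper-bounded by $C_r(\rho)$, while the hierarchy gives each variant at least $C_{d,\IO}(\rho)=C_r(\rho)$; the squeeze proves~\eqref{IO_distillation_catalysts}. Specialising to $\FF\in\{\IO,\SIO\}$ and $M=C_f$, every variant of coherence cost is lower-bounded by $C_f(\rho)$, while the hierarchy gives each variant at most $C_{c,\FF}(\rho)=C_f(\rho)$; the squeeze proves~\eqref{IO_dilution_catalysts}--\eqref{SIO_dilution_catalysts}.

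\textbf{Where it is delicate.} There is no genuinely hard step once Lemma~\ref{ssa_Cr_Cf_lemma} is available: the crux is strong super-additivity of $C_r$ and $C_f$, which is precisely the property that upgrades ordinary monotonicity into catalytic monotonicity via Lemma~\ref{general_bounds_lemma}. Neither of these super-additivity results is trivial --- for $C_r$ it rests on strong sub-additivity of conditional entropy pulled through the maximally-correlated embedding, for $C_f$ on the result of Liu and Winter --- but both are already in hand. The only technical verification that must still be carried out is the verbatim transplant of Lemma~\ref{general_bounds_lemma} from the bipartite entanglement setting to the unipartite coherence setting, and this is immediate because no step of its proof actually refers to the bipartite structure.
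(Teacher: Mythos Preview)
Your proposal is correct and follows essentially the same approach as the paper: reduce via the hierarchy~\eqref{hierarchy_distillable_and_cost} to bounding the extremal variant $\FF^{mcc}$, then apply the chain~\eqref{chain} from Lemma~\ref{general_bounds_lemma} with $M=C_r$ for distillation and $M=C_f$ for cost, invoking Lemma~\ref{ssa_Cr_Cf_lemma} to verify the hypotheses and noting that Lemma~\ref{general_bounds_lemma} transplants verbatim to the coherence setting. Your treatment is in fact slightly more explicit than the paper's about why the $mc$ variant is covered and about SIO-monotonicity following from $\SIO\subseteq\IO$, but the argument is the same.
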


\begin{proof}
The a priori largest quantity in~\eqref{IO_distillation_catalysts} is $C_{d,\IO^{mcc}}(\rho)$, so it suffices to show that $C_{d,\IO^{mcc}}(\rho) \leq C_r(\rho)$. This is a consequence of the results that we established in Section~\ref{intro_sec} for entanglement theory, which can be immediately noticed to hold in the same way for the theory of coherence. In particular, the desired statement follows from~\eqref{chain} in Lemma~\ref{general_bounds_lemma}, applied to $M=C_r$ thanks to Lemma~\ref{ssa_Cr_Cf_lemma}. Analogously, the smallest quantity in~\eqref{IO_dilution_catalysts}--\eqref{SIO_dilution_catalysts} is $C_{c,\IO^{mcc}}(\rho)$, so it suffices to prove that $C_{c,\IO^{mcc}}(\rho)\geq C_f(\rho)$. This comes once again from~\eqref{chain} in Lemma~\ref{general_bounds_lemma}, applied to $M=C_f$ thanks to Lemma~\ref{ssa_Cr_Cf_lemma}.
\end{proof}

\begin{rem}
Since coherence is already reversible under MIO and even DIO, with 
\bb
C_{d,\MIO}(\rho) = C_{d,\DIO}(\rho) = C_{c,\MIO}(\rho) = C_{d,\DIO}(\rho) = C_r(\rho)\, ,
\ee
there is no reason to expect that catalysts could help here. In fact, if they did, the theory would trivialise, because in a non-trivial theory the distillable resource can never exceed the resource cost. And indeed, since $C_r$ is strongly super-additive, completely additive, asymptotically continuous, and normalized, inequality~\eqref{chain} tells us that catalysts, in any form, cannot increase the DIO/MIO distillable coherence above $C_r$, nor decrease the DIO/MIO cost below $C_r$.
\end{rem}

\begin{rem}
What prevents us from extending Theorem~\ref{coherence_catalysts_thm} to the case of coherence distillation under SIO assisted by catalysts is that the SIO distillable coherence, i.e., the quintessential coherence~\eqref{quintessential_coherence}, despite being strongly super-additive, additive, and normalized, is only \emph{upper} instead of \emph{lower} semi-continuous. We therefore leave open the question of whether either of the quantities $C_{d,\SIO^{c}}$, $C_{d,\SIO^{cc}}$, $C_{d,\SIO^{mc}}$, $C_{d,\SIO^{mcc}}$ can be strictly larger than $C_{d,\SIO}=Q$ at least in some cases.
\end{rem}

\section*{Appendix}

\subsection{Proof of Lemma~\ref{general_bounds_lemma}} \label{general_bounds_proof_app}

\begin{manuallemma}{\ref{general_bounds_lemma}}
Let $M$ be a strongly super-additive $\FF$-monotone. If $M$ is also either quasi-normalized and asymptotically continuous, or normalized and lower semi-continuous, then
\begin{align}
E_{d,\,\FF^{cc}}(\rho_{AB}) &\leq \liminf_{n\to\infty} \frac1n \sup_{\tau_{CD}} \left\{ M\left(\rho_{AB}^{\otimes n} \otimes \tau_{CD} \right) - M(\tau_{CD}) \right\} , \tag{\ref{bound_distillable_ssa}} \\
E_{d,\,\FF^{mcc}}(\rho_{AB}) &\leq \liminf_{n\to\infty} \frac1n \sup_{\tau_{CD} = \bigotimes_j \tau_{C_j D_j}} \left\{ M\left(\rho_{AB}^{\otimes n} \otimes \tau_{CD} \right) - \sum_j M(\tau_{C_jD_j}) \right\} , \tag{\ref{bound_distillable_mcc_ssa}} 
\end{align}
where all the suprema are over finite-dimensional catalysts $\tau_{CD}$. If $M$ is strongly super-additive and asymptotically continuous, then
\begin{align}
E_{c,\,\FF^{cc}}(\rho_{AB}) &\geq \frac{M^\infty(\rho_{AB})}{\limsup_{m\to\infty} \frac1m \sup_{\tau_{CD}} \left\{ M\left(\Phi_2^{\otimes m} \otimes \tau_{CD} \right) - M(\tau_{CD}) \right\}}\, , \tag{\ref{bound_cost_ssa}} \\
E_{c,\,\FF^{mcc}}(\rho_{AB}) &\geq \frac{M^\infty(\rho_{AB})}{\limsup_{m\to\infty} \frac1m \sup_{\tau_{CD} = \bigotimes_j \tau_{C_j D_j}} \left\{ M\left(\Phi_2^{\otimes m} \otimes \tau_{CD} \right) - \sum_j M(\tau_{C_j D_j}) \right\}}\, . \tag{\ref{bound_cost_mcc_ssa}}
\end{align}
where $M^\infty(\rho) \coloneqq \lim_{n\to\infty}\frac1n M\big(\rho^{\otimes n}\big)$ (and the limit exists). If $M$ is strongly super-additive and only lower semi-continuous, then the weaker bounds
\begin{align}
E_{c,\,\FF^{cc}}(\rho_{AB}) &\geq \frac{M(\rho_{AB})}{\limsup_{m\to\infty} \frac1m \sup_{\tau_{CD}} \left\{ M\left(\Phi_2^{\otimes m} \otimes \tau_{CD} \right) - M(\tau_{CD}) \right\}}\, , \tag{\ref{bound_cost_ssa_unregularized}} \\
E_{c,\,\FF^{mcc}}(\rho_{AB}) &\geq \frac{M(\rho_{AB})}{\limsup_{m\to\infty} \frac1m \sup_{\tau_{CD} = \bigotimes_j \tau_{C_j D_j}} \left\{ M\left(\Phi_2^{\otimes m} \otimes \tau_{CD} \right) - \sum_j M(\tau_{C_j D_j}) \right\}} \tag{\ref{bound_cost_mcc_ssa_unregularized}}
\end{align}
hold. Finally, if $M$ is strongly super-additive, completely additive, normalized, and lower semi-continuous, then
\begin{equation}
E_{d,\,\FF^{c}}(\rho_{AB}) \leq E_{d,\,\FF^{cc}}(\rho_{AB}) \leq E_{d,\,\FF^{mcc}}(\rho_{AB}) \leq M(\rho_{AB}) \leq E_{c,\,\FF^{mcc}}(\rho_{AB}) \leq E_{c,\,\FF^{cc}}(\rho_{AB}) \leq E_{c,\,\FF^{c}}(\rho_{AB})\, .
\tag{\ref{chain}}
\end{equation}
\end{manuallemma}

\begin{proof}
Let us start by proving~\eqref{bound_distillable_ssa}, under the assumption that $M$ is strongly super-additive, quasi-normalized, and asymptotically continuous. 
Let $R$ be an achievable rate for entanglement distillation under operations in $\FF^{cc}$. 
Since $R$ is achievable, we can find a sequence of finite-dimensional catalysts $\tau_n = (\tau_n)_{C_nD_n}$ and a sequence of operations $\Lambda_n \in \FF\left(A^n C_n : B^n D_n \to A_0^{\ceil{Rn}} C_n : B_0^{\ceil{Rn}} D_n \right)$, where $A_0$ and $B_0$ are single-qubit systems, such that
\begin{align}
\e_n\coloneqq&\ \frac12 \left\|\Tr_{C_n D_n} \Lambda_n\left(\rho_{AB}^{\otimes n} \otimes \tau_n\right) - \Phi_2^{\otimes \ceil{Rn}} \right\|_1 \tendsn{} 0\, , \label{general_bounds_eq1} \\[1ex]
\tau_{n} =&\ \Tr_{A_0^{\ceil{Rn}} B_0^{\ceil{Rn}}}\! \left[ \Lambda_n\left(\rho_{AB}^{\otimes n} \otimes \tau_n\right) \right] \qquad \forall\ n\, . \label{general_bounds_eq2}
\end{align}
Then
\bb
&\sup_{\tau_{CD}} \left\{ M\left(\rho_{AB}^{\otimes n} \otimes \tau_{CD} \right) - M(\tau_{CD}) \right\} \\
&\qquad \geq M\left(\rho_{AB}^{\otimes n} \otimes \tau_n \right) - M(\tau_n) \\
&\qquad \textgeq{(i)} M\left(\Lambda_n\left(\rho_{AB}^{\otimes n} \otimes \tau_n\right)\right) - M(\tau_n) \\
&\qquad \textgeq{(ii)} M\left(\Tr_{C_n D_n} \Lambda_n\left(\rho_{AB}^{\otimes n} \otimes \tau_n\right)\right) + M\left(\Tr_{A_0^{\ceil{Rn}}B_0^{\ceil{Rn}}} \Lambda_n\left(\rho_{AB}^{\otimes n} \otimes \tau_n\right)\right) - M(\tau_n) \\
&\qquad \texteq{(iii)} M\left(\Tr_{C_n D_n} \Lambda_n\left(\rho_{AB}^{\otimes n} \otimes \tau_n\right)\right) \\
&\qquad \textgeq{(iv)} M \left(\Phi_2^{\otimes \ceil{Rn}}\right) - 2 f(\e_n) \ceil{Rn} - g(\e_n) \\
&\qquad \textgeq{(v)} \ceil{Rn} + o(n) - 2 f(\e_n) \ceil{Rn} - g(\e_n)\, .
\label{general_bounds_eq3}
\ee
Here, (i)~comes from $\FF$-monotonicity, (ii)~from strong super-additivity, (iii)~from~\eqref{general_bounds_eq2}, (iv)~is a consequence of asymptotic continuity, and finally~(v) is just quasi-normalization. Now that we have established the above chain of inequalities, the claim follows by dividing by $n$ and taking first the limit --- more precisely, the $\liminf$ --- as $n\to\infty$, and then the supremum over all achievable rates $R$.

If $M$ is instead strongly super-additive, normalized, and lower semi-continuous, we need to modify the inequality~(iv) in~\eqref{general_bounds_eq3}. We can do so by resorting to an idea first proposed in~\cite{lami_2020-1} and subsequently employed also in~\cite{ferrari_2022}. Let us write
\bb
\sup_{\tau_{CD}} \left\{ M\left(\rho_{AB}^{\otimes n} \otimes \tau_{CD} \right) - M(\tau_{CD}) \right\} &\geq M\left(\Tr_{C_n D_n} \Lambda_n\left(\rho_{AB}^{\otimes n} \otimes \tau_n\right)\right) \\
&\textgeq{(vi)} \sum_{j=1}^{\ceil{Rn}} M\left(\left(\Tr_{C_n D_n} \Lambda_n\left(\rho_{AB}^{\otimes n} \otimes \tau_n\right)\right)_j\right) \\
&\textgeq{(vii)} \ceil{Rn} M\left(\left(\Tr_{C_n D_n} \Lambda_n\left(\rho_{AB}^{\otimes n} \otimes \tau_n\right)\right)_{j_n}\right) ,
\label{general_bounds_eq4}
\ee
where (vi)~follows once again from strong super-additivity, $\left(\Tr_{C_n D_n} \Lambda_n\left(\rho_{AB}^{\otimes n} \otimes \tau_n\right)\right)_j$ denotes the reduced state of $\Tr_{C_n D_n} \Lambda_n\left(\rho_{AB}^{\otimes n} \otimes \tau_n\right)$ to the $j^\text{th}$ pair of qubits ($j=1,\ldots, \ceil{Rn}$), and in~(vii) we selected $j_n\in \{1,\ldots, \ceil{Rn}\}$ such that
\bb
M\left(\left(\Tr_{C_n D_n} \Lambda_n\left(\rho_{AB}^{\otimes n} \otimes \tau_n\right)\right)_{j_n}\right) = \min_{j=1,\ldots, \ceil{Rn}} M\left(\left(\Tr_{C_n D_n} \Lambda_n\left(\rho_{AB}^{\otimes n} \otimes \tau_n\right)\right)_j\right) . 
\ee
Now, diving by $n$ and taking the $\liminf$ as $n\to\infty$ yields
\bb
\liminf_{n\to\infty} \frac1n \sup_{\tau_{CD}} \left\{ M\left(\rho_{AB}^{\otimes n} \otimes \tau_{CD} \right) - M(\tau_{CD}) \right\} \ &\geq\ R \liminf_{n\to\infty} M\left(\left(\Tr_{C_n D_n} \Lambda_n\left(\rho_{AB}^{\otimes n} \otimes \tau_n\right)\right)_{j_n}\right) \\
&\textgeq{(viii)}\ R M(\Phi_2) \\
&\texteq{(ix)}\ R\, .
\ee
Here, (viii)~comes from lower semi-continuity, because
\bb
\frac12 \left\| \left(\Tr_{C_n D_n} \Lambda_n\left(\rho_{AB}^{\otimes n} \otimes \tau_n\right)\right)_{j_n} - \Phi_2\right\|_1 \leq \frac12 \left\| \Tr_{C_n D_n} \Lambda_n\left(\rho_{AB}^{\otimes n} \otimes \tau_n\right) - \Phi_2^{\ceil{Rn}}\right\|_1 = \e_n \tendsn{} 0\, ,
\ee
and (ix)~from normalization. The claim~\eqref{bound_distillable_ssa} follows, once again, upon taking the supremum over all achievable rates $R$. The proof of~\eqref{bound_distillable_mcc_ssa} is entirely analogous.

We therefore move on to~\eqref{bound_cost_ssa}. Start by observing that the limit that defines $M^\infty(\rho)$ exists for all $\rho$ due to Fekete's lemma~\cite{Fekete1923}. Now, let $R$ be an achievable rate for entanglement dilution under operations in $\FF^{cc}$. By achievability, we can find a sequence of finite-dimensional catalysts $\tau_n = (\tau_n)_{C_nD_n}$ and a sequence of operations $\Lambda_n \in \FF\left(A_0^{\floor{Rn}} C_n : B_0^{\floor{Rn}} D_n \to A^n C_n : B^n D_n \right)$, where $A_0$ and $B_0$ are single-qubit systems, such that
\begin{align}
\e_n \coloneqq&\ \frac12 \left\|\Tr_{C_n D_n} \Lambda_n\left(\Phi_2^{\otimes \floor{Rn}} \otimes \tau_n\right) - \rho^{\otimes n} \right\|_1 \tendsn{} 0\, , 
\\[1ex]
\tau_{n} =&\ \Tr_{A^n B^n}\! \left[ \Lambda_n\left(\Phi_2^{\otimes \floor{Rn}} \otimes \tau_n\right) \right] \qquad \forall\ n\, . \label{general_bounds_eq5}
\end{align}
We write
\bb
\sup_{\tau_{CD}} \left\{ M\left(\Phi_2^{\otimes \floor{Rn}} \otimes \tau_{CD} \right) - M(\tau_{CD}) \right\} &\geq M\left(\Phi_2^{\otimes \floor{Rn}} \otimes \tau_{C_n D_n} \right) - M(\tau_{C_n D_n}) \\
&\textgeq{(x)} M\left(\Lambda_n \left(\Phi_2^{\otimes \floor{Rn}} \otimes \tau_{C_n D_n} \right)\right) - M(\tau_{C_n D_n}) \\
&\textgeq{(xi)} M\left(\Tr_{C_n D_n}\Lambda_n \left(\Phi_2^{\otimes \floor{Rn}} \otimes \tau_{C_n D_n} \right) \right) \\
&\textgeq{(xii)} M\left(\rho^{\otimes n} \right) - n f(\e_n) \log d - g(\e_n) \, .
\label{general_bounds_eq6}
\ee
Here, (x)~is by $\FF$-monotonicity, (xi)~by strong super-additivity, and~(xii) by asymptotic continuity. Dividing by $n$ and taking the $\limsup$ as $n\to\infty$ yields
\bb
\limsup_{m\to\infty} \frac1m \sup_{\tau_{CD}} \left\{ M\left(\Phi_2^{\otimes m} \otimes \tau_{CD} \right) - M(\tau_{CD}) \right\} &\geq \limsup_{n\to\infty} \frac{1}{\floor{Rn}} \sup_{\tau_{CD}} \left\{ M\left(\Phi_2^{\otimes \floor{Rn}} \otimes \tau_{CD} \right) - M(\tau_{CD}) \right\} \\
&= \frac1R \limsup_{n\to\infty} \frac{1}{n} \sup_{\tau_{CD}} \left\{ M\left(\Phi_2^{\otimes \floor{Rn}} \otimes \tau_{CD} \right) - M(\tau_{CD}) \right\} \\
&\geq \frac1R \limsup_{n\to\infty} \frac{1}{n} \left( M\left(\rho^{\otimes n} \right) - n f(\e_n) \log d - g(\e_n) \right) \\
&= \frac{M^\infty(\rho)}{R}\, .
\label{general_bounds_eq7}
\ee
Taking the infimum over all achievable rates $R$ yields~\eqref{bound_cost_ssa}. Inequality~\eqref{bound_cost_mcc_ssa} is derived in a similar way.

If $M$ is only strongly super-additive and lower semi-continuous, we need to modify the inequality marked as~(xii) in~\eqref{general_bounds_eq6} as
\bb
\sup_{\tau_{CD}} \left\{ M\left(\Phi_2^{\otimes \floor{Rn}} \otimes \tau_{CD} \right) - M(\tau_{CD}) \right\}\ &\geq\ M\left(\Tr_{C_nD_n} \Lambda_n \left(\Phi_2^{\otimes \floor{Rn}} \otimes \tau_{C_n D_n} \right)\right) \\
&\textgeq{(xiii)}\ \sum_{j=1}^n M\left(\left(\Tr_{C_n D_n}\Lambda_n \left(\Phi_2^{\otimes \floor{Rn}} \otimes \tau_{C_n D_n} \right)\right)_j \right) \\
&\textgeq{(xiv)}\ n M\left(\left(\Tr_{C_n D_n}\Lambda_n \left(\Phi_2^{\otimes \floor{Rn}} \otimes \tau_{C_n D_n} \right)\right)_{j_n} \right) .
\label{general_bounds_eq8}
\ee
In~(xiii), as before, we denoted by $\left(\Tr_{C_n D_n}\Lambda_n \left(\Phi_2^{\otimes \floor{Rn}} \otimes \tau_{C_n D_n} \right)\right)_j$ the marginal of the overall state $\Tr_{C_n D_n}\Lambda_n \left(\Phi_2^{\otimes \floor{Rn}} \otimes \tau_{C_n D_n} \right)$ corresponding to the $j^\text{th}$ pair of systems $AB$ ($j=1,\ldots, n$), and in~(xiv) we picked $j_n\in\{1,\ldots, n\}$ such that
\bb
M\left(\left(\Tr_{C_n D_n}\Lambda_n \left(\Phi_2^{\otimes \floor{Rn}} \otimes \tau_{C_n D_n} \right)\right)_{j_n}\right) = \min_{j=1,\ldots, n} M\left(\left(\Tr_{C_n D_n}\Lambda_n \left(\Phi_2^{\otimes \floor{Rn}} \otimes \tau_{C_n D_n} \right)\right)_j\right) .
\ee
We can now divide both sides of~\eqref{general_bounds_eq8} by $n$ and take the $\limsup$ as $n\to\infty$, which yields as in~\eqref{general_bounds_eq7}
\bb
\limsup_{m\to\infty} \frac1m \sup_{\tau_{CD}} \left\{ M\left(\Phi_2^{\otimes m} \otimes \tau_{CD} \right) - M(\tau_{CD}) \right\} &\geq \frac1R \limsup_{n\to\infty} \frac{1}{n} \sup_{\tau_{CD}} \left\{ M\left(\Phi_2^{\otimes \floor{Rn}} \otimes \tau_{CD} \right) - M(\tau_{CD}) \right\} \\
&\geq \frac1R\, \limsup_{n\to\infty} M\left(\left(\Tr_{C_n D_n}\Lambda_n \left(\Phi_2^{\otimes \floor{Rn}} \otimes \tau_{C_n D_n} \right)\right)_{j_n} \right) \\
&\textgeq{(xv)} \frac{M(\rho)}{R}\, ,
\ee
where (xv)~is due to lower semi-continuity. This establishes~\eqref{bound_cost_ssa_unregularized}. The proof of~\eqref{bound_cost_mcc_ssa_unregularized} is entirely analogous.

Finally, if $M$ is strongly super-additive, completely additive, normalized, and lower semi-continuous, then on the one hand by~\eqref{bound_distillable_mcc_ssa}
\bb
E_{d,\,\FF^{mcc}}(\rho_{AB}) \leq 
\liminf_{n\to\infty} \frac1n \sup_{\tau_{CD} = \bigotimes_j \tau_{C_j D_j}} \left\{ M\left(\rho_{AB}^{\otimes n} \otimes \tau_{CD} \right) - \sum_j M(\tau_{C_jD_j}) \right\}\ \texteq{(xvi)}\ M(\rho_{AB})\, ,
\ee
where (xvi)~is due to additivity. On the other hand, by~\eqref{bound_cost_mcc_ssa_unregularized}
\bb
E_{c,\,\FF^{mcc}}(\rho_{AB})\ \geq\ 
\frac{M(\rho_{AB})}{\limsup_{m\to\infty} \frac1m \sup_{\tau_{CD} = \bigotimes_j \tau_{C_j D_j}} \left\{ M\left(\Phi_2^{\otimes m} \otimes \tau_{CD} \right) - \sum_j M(\tau_{C_j D_j}) \right\}}  \ \ \texteq{(xvii)}\ \ M(\rho_{AB}) \, ,
\ee
where (xvii)~is once again because of additivity and normalization. Putting all together thanks to~\eqref{hierarchy_distillable_and_cost} yields~\eqref{chain}.
\end{proof}

\subsection{Properties of the 
freely measured relative entropy of entanglement} \label{proofs_DKK_app}

This appendix contains proofs of known results concerning the freely measured relative entropy of entanglement, presented here for the sake of readability and completeness and with pointers to the original literature. The following has been established by Piani~\cite[Theorem~2]{piani_2009-1}.

\begin{manuallemma}{\ref{elementary_properties_DKK_lemma}}
For $\K\in \{\SEP, \PPT\}$ and $\mk\in \{\mlocc, \sep, \ppt\}$, the function $D_\K^\mk$ is invariant under local unitaries, convex, and monotonically non-increasing under LOCC. When $\K=\PPT$ and $\mk = \ppt$, it is also monotonically non-increasing under PPT operations.
\end{manuallemma}

\begin{proof}
The invariance under local unitaries follows straightforwardly from the fact that the set $\mk$ itself is invariant. As for convexity, in general
\bb
(\rho,\sigma)\mapsto D^\mk(\rho\|\sigma) \coloneqq \sup_{\MM\in \mk} D\left( \MM(\rho)\,\|\, \MM(\sigma)\right) 
\ee
is (jointly) convex, because it is a point-wise supremum of (jointly) convex functions. Now, $D_\K^\mk$ is obtained by taking the infimum of a jointly convex function over the second variable which runs over a convex set; hence, it must be convex itself. 

Monotonicity under LOCC holds because (a)~LOCC transform separable states into separable states, and (b)~the adjoint of an LOCC operation maps measurements in $\mk$ to measurements in $\mk$. Accordingly, monotonicity under PPT operations descends from the fact that (a')~PPT operations preserve PPT states, and (b')~the adjoint of a PPT operation maps PPT measurements to PPT measurements.

\end{proof}

The result below has been obtained independently in~\cite[Theorem~11]{Schindler2023}. The argument relies on a slightly modified version of the Alicki--Fannes--Winter trick~\cite{Alicki-Fannes, winter_2016-1}.

\begin{manuallemma}{\ref{asymptotic_continuity_DKK_lemma}}
Let $AB$ be a bipartite system of minimum local dimension $d\coloneqq \min\{|A|,|B|\}<\infty$. Let $\rho,\omega$ be two states on $AB$ at trace distance $\e\coloneqq \frac12 \left\|\rho-\omega\right\|_1$. For $\K\in \{\SEP, \PPT\}$ and any set of measurements $\mk$, it holds that
\begin{equation}
\left| D_\K^\mk(\rho) - D_\K^\mk(\omega)\right| \leq \epsilon \log d + g(\epsilon)\, ,
\tag{\ref{asymptotic_continuity_DKK}}
\end{equation}
where
\begin{equation}
g(x) \coloneqq (1+x)\log(1+x) - x\log x\, .
\tag{\ref{g}}
\end{equation}
\end{manuallemma}

\begin{proof}
We start by finding two states $\gamma_\pm$ on $AB$ such that
\bb
\rho-\omega \coloneqq \e (\gamma_+ - \gamma_-)\, .
\ee
Defining the state
\bb
\eta\coloneqq \frac{1}{1+\e}\, \rho + \frac{\e}{1+\e}\, \gamma_- = \frac{1}{1+\e}\, \omega + \frac{\e}{1+\e}\, \gamma_+\, ,
\ee
on the one hand using the convexity of $D_\K^\mk$ (Lemma~\ref{elementary_properties_DKK_lemma}) we see that
\bb
D_\K^\mk (\eta) \leq \frac{1}{1+\e}\, D_\K^\mk(\rho) + \frac{\e}{1+\e}\, D_\K^\mk(\gamma_-) \textleq{(i)} \frac{1}{1+\e}\, D_\K^\mk(\rho) + \frac{\e}{1+\e} \log d\, ,
\ee
where~(i) holds because $D_\K^\mk$ is always upper bounded by the relative entropy of entanglement, whose maximum value on states of a bipartite system $AB$ is of $\log \min\{|A|,|B|\}$~\cite{vedral_1997, vedral_1998}. On the other,
\bb
D_\K^\mk (\eta) &= \inf_{\sigma \in \K^1} \sup_{\MM\in \mk} D \left(\MM\left(\frac{1}{1+\e}\, \omega + \frac{\e}{1+\e}\, \gamma_+\right)\,\bigg\|\,\MM(\sigma)\right) \\
&\textgeq{(ii)} \inf_{\sigma \in \K^1} \sup_{\MM\in \mk} \left\{ \frac{1}{1+\e}\, D \left(\MM(\omega)\,\big\|\,\MM(\sigma)\right) + \frac{\e}{1+\e}\, D \left(\MM(\gamma_+)\,\big\|\,\MM(\sigma)\right) - h_2\left(\frac{\e}{1+\e}\right) \right\} \\
&\geq \inf_{\sigma \in \K^1} \sup_{\MM\in \mk} \left\{ \frac{1}{1+\e}\, D \left(\MM(\omega)\,\big\|\,\MM(\sigma)\right) - h_2\left(\frac{\e}{1+\e}\right) \right\} \\
&= \frac{1}{1+\e}\, D_\K^\mk(\omega) - h_2\left(\frac{\e}{1+\e}\right) .
\ee
Here, (ii)~is just the well-known fact that the von Neumann entropy is `not too concave'. Putting everything together we see that
\bb
\frac{1}{1+\e}\, D_\K^\mk(\omega) - h_2\left(\frac{\e}{1+\e}\right) \leq \frac{1}{1+\e}\, D_\K^\mk(\rho) + \frac{\e}{1+\e} \log d\, ,
\ee
which becomes
\bb
D_\K^\mk(\omega) - D_\K^\mk(\rho) \leq \e\log d+ g(\e)
\ee
after some elementary algebraic manipulations. By exchanging $\rho$ and $\omega$ we obtain also the opposite inequality, which together with the above yields~\eqref{asymptotic_continuity_DKK}, thus completing the proof. 
\end{proof}

The following result has been first obtained in~\cite[Proposition~4]{li_2014-1}, where it is even shown that~\eqref{max_ent_DKK} holds with equality.

\begin{manuallemma}{\ref{normalization_DKK_lemma}}
For $\K\in \{\SEP, \PPT\}$ and $\mk\in \{\mlocc,\sep,\ppt\}$, the maximally entangled state $\Phi_d$ on a $d\times d$ bipartite quantum system satisfies that
\begin{equation}
D_\K^\mk (\Phi_d) \geq \log (d+1) - 1\, .
\tag{\ref{max_ent_DKK}}
\end{equation}
\end{manuallemma}

\begin{proof}
Since $D_\K^\mk$ is convex and invariant under local unitaries, we can assume that the state in $\K$ achieving the minimum distance from $\Phi_d$ as measured by $D_\K^\mk$ is $\frac1d \Phi_d + \frac{d-1}{d}\, \frac{\id-\Phi_d}{d^2-1}$. Now, consider the measurement $(E, \id-E)$, where
\bb
E \coloneqq \sum_{i=1}^d \ketbra{ii}\, .
\ee
It can be implemented by LOCC, by simply measuring locally in the computational basis and accepting if and only if the two outcomes are the same. 

Now, introducing the function $D_2:[0,1]\times [0,1]\to \R_+\cup \{+\infty\}$ defined by
\bb
D_2(p\| q) \coloneqq p \log \frac{p}{q} + (1-p) \log \frac{1-p}{1-q}\, ,
\ee
we see that
\bb
D_\K^\mk (\Phi_d) &= D^\mk\left(\Phi_d\, \bigg\|\, \frac1d \Phi_d + \frac{d-1}{d}\, \frac{\id-\Phi_d}{d^2-1} \right) \\
&\geq D_2\left( \Tr\left[ E\Phi_d\right] \bigg\| \Tr\left[ E\left( \frac1d \Phi_d + \frac{d-1}{d}\, \frac{\id-\Phi_d}{d^2-1}\right) \right] \right) \\
&= D_2\left( 1\, \bigg\|\, \frac{2}{d+1} \right) \\
&= \log (d+1) - 1\, .
\ee
This concludes the proof.
\end{proof}

The result below is perhaps one of the deepest insights from the seminal work by Piani~\cite[Theorem~1]{piani_2009-1}. 

\begin{manuallemma}{\ref{Piani_inequality_lemma} {\normalfont \cite[Theorem~1]{piani_2009-1} (Piani's super-additivity-like inequality)}}
Let $\rho_{AA':BB'}$ be an arbitrary state over a pair of bipartite systems $AB$ and $A'B'$. For all pairs $(\K,\mk)$ satisfying~\eqref{pairs}, i.e., for all possible choices of $\K\in \{\SEP, \PPT\}$ and $\mk\in \{\mlocc,\sep,\ppt\}$ except 
for the pair $(\SEP, \ppt)$, it holds that
\begin{equation}
D_\K\left(\rho_{AA':BB'}\right) \geq D_\K\left(\rho_{A:B}\right) + D_\K^\mk\left(\rho_{A':B'}\right) ,
\tag{\ref{Piani_inequality}}
\end{equation}
where $D_\K$ is defined by~\eqref{DK}.
\end{manuallemma}

\begin{proof}
We start by observing that for all pairs of classical-quantum states $\rho_{XS} = \sum_x p_x \ketbra{x}\otimes \rho_S^x$ and $\sigma_{XS} = \sum_x q_x \ketbra{x}\otimes \sigma_S^x$, it holds that
\begin{align}
D\left(\rho_{XS}\| \sigma_{XS}\right) &= \sum_x p_x \Tr \left[\rho_S^x \left( \log(p_x \rho_S^x) - \log (q_x\sigma_S^x) \right) \right] \nonumber \\
&= \sum_x p_x \log\frac{p_x}{q_x} + \sum_x p_x \Tr \left[\rho_S^x \left( \log\rho_S^x - \log \sigma_S^x \right) \right] \nonumber \\
&= D(\rho_X \| \sigma_X) + \sum_x p_x\, D\big(\rho_S^x\, \big\|\, \sigma_S^x\big) \label{superadditivity_semiclassical_relent} \\
&\geq D(\rho_X \| \sigma_X) + D\left( \sumno_x p_x \rho_S^x\, \Big\| \sumno_x p_x \sigma_S^x \right) \nonumber \\
&= D(\rho_X \| \sigma_X) + D\left( \rho_S\, \Big\| \sumno_x p_x \sigma_S^x\right) . \nonumber
\end{align}
Before we proceed, let us observe that for all $\sigma_{AA':BB'} \in \K$ and $\MM'_{A':B'}(\cdot) = \sum_x \Tr \left[ F_x^{A':B'} (\cdot) \right] \ketbra{x} \in \mk$, with $(\K,\mk)$ satisfying~\eqref{pairs}, it holds that
\bb
\Tr_{A':B'}\left[ F_x^{A':B'} \sigma_{AA':BB'} \right] \in \K_{A:B}\qquad \forall\ x\, .
\label{compatibility}
\ee
This key feature is called the \emph{compatibility condition} in~\cite{brandao_2020}. Putting all together,
\bb
D_\K (\rho_{AA':BB'}) &= \inf_{\sigma_{AA':BB'}\in \K^1} D\left( \rho_{AA':BB'} \,\|\, \sigma_{AA':BB'}\right) \\
&\textgeq{(i)} \inf_{\sigma_{AA':BB'}\in \K^1} \sup_{\MM'_{A':B'}\in \mk} D\big( \MM'_{A':B'}(\rho_{AA':BB'})\, \big\|\, \MM'_{A':B'}(\sigma_{AA':BB'}) \big) \\
&\textgeq{(ii)} \inf_{\sigma_{AA':BB'}\in \K^1} \sup_{\MM'_{A':B'}\in \mk} \Big\{ D\left( \rho_{A:B}\, \Big\|\, \sumno_x \Tr\left[ F_x^{A':B'} \rho_{A':B'} \right] \widetilde{\sigma}^x_{A:B} \right) \\
&\hspace{22.5ex} + D\big( \MM'_{A':B'}(\rho_{A':B'})\, \big\|\, \MM'_{A':B'}(\sigma_{A':B'}) \big) \Big\} \\
&\textgeq{(iii)} \inf_{\sigma_{A':B'}\in \K^1} \sup_{\MM'_{A':B'}\in \mk} \Big\{ D_\K(\rho_{A:B}) + D\big( \MM'_{A':B'}(\rho_{A':B'})\, \big\|\, \MM'_{A':B'}(\sigma_{A':B'}) \big) \Big\} \\
&= D_\K (\rho_{A:B}) + D_\K^\mk (\rho_{A':B'})\, .
\ee
Here, in~(i) we introduced a measurement channel $\MM'_{A':B'}(\cdot) = \sum_x \Tr \left[ F_x^{A':B'} (\cdot) \right] \ketbra{x} \in \mk$ and applied the data processing inequality for the relative entropy, in~(ii) we made use of~\eqref{superadditivity_semiclassical_relent}, introducing also the post-measurement state
\bb
\widetilde{\sigma}_{A:B}^x \coloneqq \frac{1}{\Tr \left[ F_x^{A':B'} \sigma_{A':B'} \right]}\, \Tr_{A'B'}\left[ F_x^{A':B'} \sigma_{AA':BB'} \right] ,
\ee
and finally (iii)~comes from the compatibility condition~\eqref{compatibility} due to the convexity of $\K$.
\end{proof}

\end{document}